\documentclass[12pt]{amsart}

\usepackage{amsmath,amsfonts,amssymb,amsthm,epsfig,color, graphicx, natbib, appendix,comment,subcaption, amssymb,xcolor,geometry,bm,enumerate}

\usepackage{marginnote}

\geometry{
    top=1in,      %
    bottom=1in,   %
    left=1in,   %
    right=1in,  %
    includehead,  %
    includefoot   %
}

\usepackage{setspace} %
\onehalfspacing

\usepackage[colorlinks=true,
    linkcolor=blue,
    filecolor=magenta,      
    urlcolor=blue]{hyperref}
\usepackage{tikz}
\usetikzlibrary{fit,positioning}
\usetikzlibrary{shapes,arrows}
\usetikzlibrary{intersections}
\usetikzlibrary{external}
\usepackage[nameinlink]{cleveref}

\DeclareMathOperator*{\diag}{diag}
\newtheorem{theorem}{Theorem}
\newtheorem{lemma}{Lemma}
\newtheorem*{definition}{Definition}
\newtheorem{assumption}{Assumption}

\newtheorem{proposition}{Proposition}
\newtheorem{corollary}{Corollary}
\theoremstyle{definition}
\newtheorem{example}{Example}
\theoremstyle{remark}

\newtheorem*{example2}{Example}

\newtheorem*{example*}{Example}
\theoremstyle{remark}

\theoremstyle{remark} \newtheorem{fact}{Fact}

\DeclareMathOperator*{\argmax}{arg\,max}
\DeclareMathOperator*{\argmin}{arg\,min}

\usepackage{tikz}
\usetikzlibrary{positioning,quotes}

\captionsetup[sub]{labelformat=simple}

\definecolor{darkblue}{rgb}{0.0, 0.0, 0.55}
\hypersetup{colorlinks,linkcolor={darkblue},citecolor={darkblue},urlcolor={darkblue}} 
\allowdisplaybreaks
\title{Incentive Design with Spillovers}

\author{Krishna Dasaratha \and Benjamin Golub \and Anant Shah}

\thanks{Dasaratha: krishnadasaratha@gmail.com, Boston University. Golub: ben.golub@gmail.com, Northwestern University. Shah: anantshah2026@u.northwestern.edu, Northwestern University. We thank Yann Calv\'{o} L\'{o}pez, Yu-Chi Hsieh, Vitalii Tubdenov, and Luis Henrique Ribeiro Linhares for excellent research assistance. We are grateful to Drew Fudenberg and his reading group for a very valuable discussion, and also for helpful comments  (in random order) to Erik Madsen, Ilya Segal, George Mailath, Aravindan Vijayaraghavan, Stephen Morris, Omer Tamuz, Marina Halac, Thomas Steinke, Marzena Rostek, Doron Ravid, Peter DeMarzo, Jose Betancourt Valencia, Roberto Corrao, Jacopo Perego, Juan Ortner, Alex Wolitzky, Jason Hartline, Evan Sadler, Adam Szeidl, Jeff Zwiebel, Sam Wycherley, Michael Ostrovsky, Elliot Lipnowski, Ravi Jagadeesan, Daniel Kr\"{a}hmer, Yifan Dai, Melika Liporace, Piero Gottardi, and Michael Powell, as well as many seminar and conference participants.}

\newcommand{\production}{Y}

\newcommand{\eqlbactions}{\bm{a}^{*}}

\newcommand{\actionagt}[1]{a_{#1}}
\newcommand{\network}{\bm{G}}

\newcommand{\subnetwork}{\widetilde{\bm{G}}}
\newcommand{\sharesvector}{\bm{\tau}}
\newcommand{\optsharesvector}{\bm{\tau}^{*}}
\newcommand{\sharesmatrix}{\bm{T}}

\newcommand{\sharesagt}[1]{\tau_{#1}}

\newcommand{\centrality}{c}
\newcommand{\productivity}{\alpha}
\newcommand{\balanceconstant}{\lambda}

\begin{document}

\begin{abstract}
A principal uses payments conditioned on stochastic outcomes of a team project to elicit costly effort from the team members. We develop a multi-agent generalization of a classic first-order approach to contract optimization by leveraging methods from network games. The main results characterize the optimal allocation of incentive pay across agents and outcomes. Incentive optimality requires equalizing, across agents, a product of (i) individual productivity (ii) organizational centrality and (iii) responsiveness to monetary incentives. We specialize the model to explore several applied questions, including whether compensation should reward individual ability or collaborativeness and how the strength of complementarities shapes pay dispersion.

\end{abstract}

\begin{titlepage}
\date{\today}
\maketitle
\thispagestyle{empty} 

\end{titlepage}

\section{Introduction}

A popular method of motivating members of a team is giving them performance incentives that depend on jointly achieved outcomes. Examples of such incentives include startup executives receiving firm stock and a marketing team receiving bonuses for achieving a sales target. How should such incentive schemes be designed and how should they take into account the team's production function?

We examine these questions in a simple model of a team working on a joint project. Each member chooses how much costly effort to exert. These actions jointly determine a real-valued \emph{team performance}---for example, the quality of a product---according to a sufficiently smooth, increasing function of the efforts, which may entail interactions such as complementarities among agents' efforts. Any performance level determines a probability distribution over observable project \emph{outcomes}. For example, the outcome may be the revenue from a project, which is increasing in non-contractible project quality, but stochastic due to factors outside the team's control. Although it is not possible to write contracts contingent on individual actions or the realized team performance, the principal can commit to a contract specifying nonnegative payments to each agent contingent on each possible project outcome. The principal's goal is to design this contract in a way that maximizes profit: revenue minus compensation.

The setting builds on the classic \citet{holmstrom1979moralhazard} model, in which a single agent produces work of a non-contractible quality resulting in an observable outcome.\footnote{In particular, the  obstacles to perfect contracting are the same: moral hazard and limited liability.}  In our setting, the analogue of quality is a jointly achieved performance. How incentive spillovers across agents matter for contract design---a central issue for modern firms---is not well understood, despite the immense amount learned about contract design since Holmstr\"{o}m's work.
In this paper, we make progress on this problem by leveraging some ideas from network theory.

To illustrate the basic importance of incentive spillovers, imagine that the principal slightly adjusts the contract of a particular agent, Bob, in a way that motivates him to work harder. In team production, changing one team member's action can change other agents' private returns to effort, holding fixed their own contracts. Those whose efforts are complements to Bob's are now motivated to work harder, while those whose efforts are substitutes to his have incentives to free-ride on his higher effort. Optimal contracts therefore depend on both the direct effect of each agent's action on team performance and the responses that action induces from others.

The paper makes two contributions: first, it provides a general characterization of optimal contracts in the presence of incentive spillovers; second, it applies this framework to understand how exogenous features of production in an organization shape optimal compensation outcomes.

The condition that constitutes the first contribution is stated in terms of three kinds of quantities that can be associated to each agent at any contract. The first quantity, an agent's \emph{marginal productivity}, is the partial derivative of team performance in an individual's action, holding others' actions fixed. The second quantity is called an agent's \emph{centrality}: a measure of connectedness in a network---which we are about to discuss in detail---reflecting incentive spillovers, with connectedness to more productive agents weighted more.  The relevance of these quantities for contract optimality should be intuitive in view of what we have said. The third quantity is an agent's \emph{responsiveness} to additional incentives, which involves the marginal utility of money and the curvature of the agent's cost. It accounts for the fact that an agent who has a low valuation of an additional dollar and finds it considerably more costly to take a larger action is, all else equal, a less appealing recipient of incentive pay. Our main result is that, when the binding incentive constraints are local, optimal contracts satisfy a balance condition: the product of the three quantities described above is equal across all agents receiving any incentive pay.

The spillover matrix used to define centrality plays a key role in our analysis. Economically, it summarizes how changing one agent's effort shifts other agents' marginal returns to effort, inducing strategic responses that propagate through the team: complements work harder, while substitutes have incentives to free-ride. Mathematically, the network is based on the Hessian of the team performance function evaluated at the equilibrium induced by the optimal contract---that is, the cross-partial derivatives of output as we differentiate in two agents' efforts. Because this network is pinned down by the technology's local curvature, it can in principle be estimated from data on how output changes as agents' actions change. It is worth emphasizing that our general model makes no parametric assumptions and thus allows quite flexible production functions for the team.\footnote{To take just one example, the team's production function could be an arbitrary polynomial, with monomials of arbitrary degree reflecting outputs generated through the joint efforts of arbitrary subsets of the team---e.g., complementing one another in threes, fours, and so on.} Nevertheless,  we show that the team's production function matters only through its first derivatives and its Hessian at the optimal contract. This allows us to leverage methods from well-understood network games whose payoffs are quadratic polynomials to analyze the spillover effects of locally perturbing incentive pay under arbitrary contracts. We use this key reduction to characterize the first-order conditions that determine the principal's optimal allocation of incentives both across agents and across outcomes.

There are two immediate corollaries of this characterization, new to the contract theory literature, that demonstrate its economic content. First, it implies a ranking of agents by compensation. If all agents have identical utility functions of money and these functions are concave, then those with a higher ``productivity times centrality" index must be paid more in every outcome. This ranking favors those agents who have high marginal productivity themselves and whose actions are also valuable to others' incentives directly and indirectly Second, the balance condition implies a ranking of observable \emph{outcomes} (e.g., revenue realizations) based on their incentive power, and this ranking holds across agents. Payments concentrate on outcomes with higher ``incentive power''---those where the ratio of probability to marginal probability is favorable for motivating effort. This generalizes the single-agent result of \cite{holmstrom1979moralhazard} to our multi-agent setting.

The balance condition is a critical step in our analysis of the problem, but all its terms are endogenous---evaluated at an equilibrium played by the agents at a contract chosen by the principal. To obtain more direct guidance, one needs to understand how exogenous features of the organization---its technological complementarities and measurable agent characteristics---map into compensation outcomes.  The remaining sections take up this challenge by introducing several different parametric structures that admit explicit solutions, including standard network games settings and CES production. 

We begin in \Cref{ss:rankone} with a setting where agents differ in the contribution their effort makes to performance in isolation, which we call \emph{standalone productivity}, and in how much their effort complements others', with each agent's complementarity described by a single parameter. In this tractable case, the balance condition yields explicit pay formulas: optimal compensation depends on a combination of an agent's standalone productivity and his complementarity parameter. As spillovers strengthen, compensation shifts sharply from being standalone-productivity-driven to being complementarity-driven. A numerical example suggests accounting for this effect can be very important: a principal neglecting spillovers when they are large fails to achieve most of the potential profits.

The rank-one case used to make these points has the property that an exogenous complementarity parameter for each agent maps monotonically into that agent's centrality. The relation is more complex with general complementarity networks. \Cref{s:applyparametric} works in a setting  with equal standalone productivities and gives a closed-form characterization of centralities and optimal compensation. Centrality need not be monotone in complementarity parameters: strengthening complementarities can weaken an agent's network position and pay because the principal rebalances the optimal contract and equilibrium actions.

Our applications so far have been set in the linear-quadratic framework, which is standard in network games but in which all complementarities are structurally bilateral. A different case of practical interest is one in which complementarities arise at the level of the whole team rather than between pairs. Motivated by this, \Cref{s:CES} specializes the framework to Cobb--Douglas and CES production functions capturing this feature. There, we show that the balance condition again yields clean closed-form solutions. This allows us to address an important outcome: pay dispersion. The key insight is that pay dispersion is governed by the elasticity of substitution: highly substitutable efforts justify concentrating pay on high-productivity agents, while strong team-level complementarity induces an optimizing principal to allocate pay more equally.

\subsection*{Related literature}

In the contract theory literature,  the \citet{holmstrom1979moralhazard} model---studying incentives for a single agent under moral hazard and imperfect observability---is a special case of our multi-agent setup. We use the first-order approach (see \citet{rogerson1985first} and \citet{jewitt1988justifying}). To our knowledge, there is not much work on how first-order conditions for contract optimality depend on spillovers.\footnote{\cite{itoh1991incentives} allows for a form of spillovers in a two-agent model.} Indeed, the extensive literature on moral hazard beginning with \citet{holmstrom1982moral}  focuses mainly on different questions. In that strand, a key question is how a principal can use observed outcomes to separately detect agents' deviations from a desired action profile, often a nearly first-best one \citep[see, e.g.,][]{mookherjee1984optimal,legrosmatsushima,legros1993efficient}.  Several features of our model prevent such schemes.\footnote{In particular, the contractible outcome (which has only finitely many possible values) is stochastically determined by a one-dimensional team performance, and there is limited liability.} 
In this type of situation, when observability and fine-grained auditing  of effort are fundamentally constrained, we examine how optimal contracts depend on spillovers in the production function.

Some of the closest work on optimal incentives in the presence of spillovers is found in the literature on networks. This includes, in addition to work already mentioned, papers such as \citet*{candogan2012optimal}, \citet{bloch2016targeting}, \citet{belhaj2018targeting}, \citet*{galeotti2020targeting}, \citet*{gaitonde2020adversarial},  \citet{shi2022optimal}, and \citet{demarzo2023contracting}.   Our main contribution to this literature is a study of a natural non-parametric formulation, both in terms of the production function and the form of incentives. We show that network game techniques permit some general characterizations of optimal contracts without the parametric assumptions common in the network games literature.    When we specialize to a canonical parametric environment in \Cref{s:applyparametric}, we also contrast the more specific implications of our analysis with existing parametric networks models---of which the closest is the contemporaneous work by \citet*{CMO} on optimal linear incentive contracts in quadratic network games. Both our analysis and results end up being quite different.

The problem of designing multi-agent contracts has also recently attracted attention in a new algorithmic contract theory literature---e.g., \citet*{dutting2023multiagent}, \citet*{ezra2023inapproximability}, and \citet*{ezramultiagentcombinatorial}.  A starting point of this work is that many standard approaches to finding optimal team contracts may make heavy demands on the analyst's knowledge of the entire production function and ability to perform computations on it. This literature studies environments with finitely many actions where combinatorial problems create obstacles to tractable optimization and examines whether contracts can be devised that achieve some fraction of optimal performance. Our approach is very different methodologically in that actions and team performance are continuous, but the results give a perspective---complementary to the algorithmic contract theory work---on parsimonious ways to assess and improve on contract performance. Recent work by \citet{zuo2024optimizing}, which discusses the structure of optimization problems in a model closely related to some of our special cases, shows that there are interesting computational questions in our more continuous type of setting as well.

\section{Model} \label{sec:generalmodel}

We first present the model and then, in \Cref{sec:remarks_model}, we comment on some issues of interpretation.

There are $n$ agents, $N=\{1,2,\ldots,n\}$. The agents take real-valued actions $\actionagt{i} \geq 0$, which can be interpreted as effort levels. These jointly determine a team \emph{performance}, given by a function $Y : \mathbb{R}^{n}_{\geq 0} \to \mathbb{R}_{\geq 0}$, which we assume is twice differentiable and strictly increasing in each of its arguments. This team performance determines the project \emph{outcome}, an element of the finite set $\mathcal{S}$. The probability of the outcome $s$ is $P_{s}(Y)$, where for any $s \in \mathcal{S}$, the function $P_{s}(\cdot)$ is strictly positive and twice differentiable.\footnote{The assumption that a probability of outcome function is strictly positive is not crucial to the analysis, but does simplify some statements.} 

There is also one principal. (When we use pronouns, we use ``she'' for the principal and ``he'' for an agent.) The principal receives \emph{revenue} $v_{s}$ from the outcome $s$.\footnote{This should be interpreted as the principal's valuation of that outcome realizing, gross of any payments she will make to the agents.}  The principal can make payments contingent on the project outcome but not on agents' efforts or the team performance $Y$.  The principal commits to a non-negative payment contingent on the outcome. If outcome $s$ is realized, the principal pays $\tau_{i}(s)$  to agent $i$. The function $\sharesvector : \mathcal{S} \to \mathbb{R}_{\geq 0}^n$  is called a \emph{contract}. 

We consider risk-averse agents\footnote{We will specialize to the case where agents are risk neutral at times. Even in that case, the principal cannot simply ``sell the firm'' due to our limited liability assumption and the presence of multiple agents, so the problem remains interesting.} and a risk-neutral principal.\footnote{The characterization of an optimal contract and its consequences can be extended to the case of a risk-averse principal.} The utility to agent $i$ from a monetary transfer is given by the function $u_{i} : \mathbb{R}_{\geq 0} \to \mathbb{R}_{\geq 0}$, which is strictly increasing, concave, and differentiable. Each agent also has a  cost-of-effort function $C_{i} : \mathbb{R}_{\geq 0} \to \mathbb{R}_{\geq 0}$, which is strictly increasing, strictly convex, and twice differentiable in that agent's action. The marginal cost at zero action is zero: $C_{i}'(0)=0$. Given the contract $\bm{\tau}$, agent $i$ chooses $a_i$ to maximize his expected payoff from transfers minus his cost,
\begin{equation*}
    \mathcal{U}_{i} = \sum_{s \in \mathcal{S}}P_{s}\left(Y(a_i,a_{-i})\right)u_{i}\left(\tau_{i}(s)\right) - C_{i}(a_{i}).
\end{equation*} 

The payoff for the principal given a contract $\sharesvector$ and team performance $Y$ is the expectation of expected revenue minus promised transfers:
\begin{equation*}
    \sum_{s \in \mathcal{S}}\left(v_{s} - \sum_{i}\tau_{i}(s)\right)P_{s}(Y).
\end{equation*}

The timing is as follows: The principal commits to a contract $\sharesvector$, and then agents simultaneously choose actions. Our solution concept for the game among the agents is pure strategy Nash equilibrium; in the remainder of the paper, we simply use the word \emph{equilibrium} to refer to this solution.

There may be multiple equilibria under some contracts. Given a contract $\sharesvector$, we assume that agents play an equilibrium $\eqlbactions(\sharesvector)$ maximizing the principal's expected payoff. Under this selection, a principal's payoff under a contract is well-defined if at least one equilibrium exists. Among such contracts, a contract $\bm{\tau}$ is \emph{optimal} if no other contract $\widetilde{\bm{\tau}}$ gives the principal a higher payoff. Implicit in this definition is the assumption that contracts without equilibria can never be optimal.

Our analysis will not rely on the existence of an optimal contract, but the following argument shows an optimal contract exists if we impose a bit of additional structure.
\begin{fact} Suppose that the space of contracts giving the principal a non-negative payoff is compact.\footnote{We exclude any contracts where no equilibrium exists from this space.} Then an optimal contract exists. \label{fact:opt_exist} \end{fact}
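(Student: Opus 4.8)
The plan is a Weierstrass-type argument. Let $K$ denote the (compact, and tacitly nonempty) set of contracts that admit an equilibrium and give the principal a non-negative payoff, write $\pi(\sharesvector,\actions)=\sum_{s\in\mathcal S}\bigl(v_s-\sum_i\sharesagt{i}(s)\bigr)P_s(Y(\actions))$ for the principal's payoff, and write $\Pi(\sharesvector)=\max_{\actions\in\mathrm{NE}(\sharesvector)}\pi(\sharesvector,\actions)$ for her value under the stated optimistic equilibrium selection $\eqlbactions(\sharesvector)$, where $\mathrm{NE}(\sharesvector)$ is the set of pure-strategy equilibria. I will show that $\Pi$ is upper semicontinuous on $K$ and therefore attains its maximum on $K$; any maximizer is an optimal contract. (If $K=\emptyset$ there is nothing to prove; this degenerate case is ruled out as soon as, e.g., the zero contract yields a non-negative payoff.)

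The first ingredient is that equilibrium actions are uniformly bounded across $\sharesvector\in K$. Since $K$ is compact there is $R$ with $\sharesagt{i}(s)\le R$ for all $\sharesvector\in K$, $i$, $s$. At any equilibrium, agent $i$ weakly prefers $\actionagt{i}$ to deviating to $0$, which rearranges to $C_i(\actionagt{i})-C_i(0)\le\sum_s\bigl(P_s(Y(\actions))-P_s(Y(0,\actionsnegagt{i}))\bigr)u_i(\sharesagt{i}(s))\le\max_s u_i(\sharesagt{i}(s))\le u_i(R)$, using $P_s>0$, $u_i\ge 0$ increasing, and $\sum_s P_s=1$. Because $C_i$ is strictly convex with $C_i'(0)=0$ it is coercive, so $\actionagt{i}$ lies in a compact interval $[0,\bar a_i]$ depending only on $R$. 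Hence every equilibrium profile of every $\sharesvector\in K$ lies in the fixed compact box $\mathcal B=\prod_i[0,\bar a_i]$. The same coercivity shows each agent always has a best response (so equilibria exist wherever $K$ posits them), that $\mathrm{NE}(\sharesvector)$ is closed in $\mathcal B$ --- a limit of equilibria is an equilibrium, by joint continuity of $\mathcal U_i$ --- hence compact, and therefore that $\Pi(\sharesvector)$ is genuinely attained as a maximum of the continuous function $\pi(\sharesvector,\cdot)$.

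To prove upper semicontinuity, fix $\sharesvector^\infty\in K$ and an arbitrary $\sharesvector^k\to\sharesvector^\infty$ in $K$ (so $\sharesvector^\infty\in K$, as $K$ is closed); passing to a subsequence we may assume $\Pi(\sharesvector^k)\to\limsup_k\Pi(\sharesvector^k)$, and by the box bound, passing to a further subsequence, $\actions^k=\eqlbactions(\sharesvector^k)\to\actions^\infty\in\mathcal B$. For each agent $i$ and each $a_i'\ge 0$, $\mathcal U_i(\actions^k,\sharesvector^k)\ge\mathcal U_i(a_i',\actionsnegagt{i}^{k},\sharesvector^k)$; since $\mathcal U_i$ is jointly continuous (a finite sum of products of continuous functions, $\mathcal S$ being finite), letting $k\to\infty$ gives $\mathcal U_i(\actions^\infty,\sharesvector^\infty)\ge\mathcal U_i(a_i',\actionsnegagt{i}^{\infty},\sharesvector^\infty)$, so $\actions^\infty\in\mathrm{NE}(\sharesvector^\infty)$. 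Joint continuity of $\pi$ then yields $\limsup_k\Pi(\sharesvector^k)=\lim_k\pi(\sharesvector^k,\actions^k)=\pi(\sharesvector^\infty,\actions^\infty)\le\Pi(\sharesvector^\infty)$, the last inequality because $\Pi(\sharesvector^\infty)$ maximizes $\pi(\sharesvector^\infty,\cdot)$ over $\mathrm{NE}(\sharesvector^\infty)$. Thus $\Pi$ is upper semicontinuous on $K$, and an upper semicontinuous function on a nonempty compact set attains its maximum.

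The main obstacle is exactly the step the second paragraph addresses: the agents' action spaces are a priori unbounded, so the Weierstrass argument has no compact domain to run on until one establishes the uniform bound $\bar a_i$ on equilibrium (and best-response) actions over $\sharesvector\in K$. Once that bound is in hand the rest is routine continuity bookkeeping, and the principal's optimistic equilibrium selection causes no difficulty, since only upper --- not lower --- semicontinuity is needed, and the ``$\le$'' direction survives taking the best of a closed, compactly-varying family of equilibria.
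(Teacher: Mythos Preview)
Your argument is correct and follows essentially the same compactness route as the paper: bound equilibrium actions uniformly over the compact contract set, extract convergent subsequences of contracts and equilibria, use closedness (upper-hemicontinuity) of the equilibrium correspondence and continuity of $\pi$ to conclude. The only difference is cosmetic---you package the argument as upper semicontinuity of $\Pi$ and invoke Weierstrass, whereas the paper works directly with a maximizing sequence---and you supply the explicit deviation-to-zero bound $C_i(a_i)-C_i(0)\le u_i(R)$ that the paper merely asserts in one line.
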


As examples, this holds if the outcome is binary or if the outcome probabilities $P_s(Y)$ are uniformly bounded away from zero. The proof uses a compactness argument, taking a sequence of contracts whose payoffs converge to the supremum of attainable principal payoffs, along with their corresponding equilibria. By compactness of the contract space and action space, we can extract convergent subsequences of both the contracts and equilibrium actions. The limit contract achieves the supremum payoff because equilibria are upper-hemicontinuous in the contract and the principal's payoff function is continuous in both contracts and actions.

\subsection{Simple success-or-failure environments} \label{sec:simple_environments}

We introduce a class of \emph{simple success-or-failure environments} that will be helpful for illustrating our results.

There are two possible outcomes $s \in \{0,1\}$. The revenues from these outcomes are normalized so that $v_{1}=1$ and $v_{0}=0$. These can be interpreted as success or failure of the project. The probability of success is $P(\production)$, where the function $P(\cdot)$ is strictly increasing and twice differentiable.

Each agent has a quadratic cost of effort.\footnote{This is not substantively restrictive in that, under smoothness assumptions, a monotone transformation can be applied to $a_i$ to achieve this form.} Agent $i$ maximizes the expected payoff given by the expression:  $$ \mathcal{U}_{i} = P(Y)u_{i}\left(\tau_{i}(1)\right) + \left(1-P(Y)\right)u_{i}\left(\tau_{i}(0)\right) - \frac{a_i^2}{2}.$$ 

\begin{fact} \label{fact:no_pay_failure} In simple success-or-failure environments, it is optimal for the principal to pay nothing at the failure outcome---that is, $\tau_{i}(0) = 0$ for all agents $i$. A contract can then be represented by a $n$-dimensional vector $\sharesvector \in \mathbb{R}_{\geq 0}^{n}$ consisting of payments in the success outcome. \end{fact}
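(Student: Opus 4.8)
The plan is to show that for every contract $\sharesvector$ under which an equilibrium exists there is a feasible contract $\widetilde{\sharesvector}$ with $\widetilde{\tau}_i(0)=0$ for every $i$ that is at least as good for the principal, and strictly better whenever $\tau_i(0)>0$ for some $i$. This makes the restriction to contracts paying nothing on failure without loss of optimality --- and in fact shows that any optimal contract satisfies it --- so that a contract can be recorded as a vector $\sharesvector=\big(\tau_i(1)\big)_{i\in N}\in\mathbb{R}^n_{\ge 0}$. (Contracts admitting no equilibrium may be ignored, since they are never optimal and their payoffs are undefined.)

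Two features of a simple success-or-failure environment drive this. Writing agent $i$'s payoff as $\mathcal{U}_i=\tau_i(0)+P(Y)\big(\tau_i(1)-\tau_i(0)\big)-\tfrac12 a_i^2$ shows that $i$'s effort incentives depend on the contract only through the \emph{wedge} $w_i:=\tau_i(1)-\tau_i(0)$, with $\tau_i(0)$ entering merely as an additive constant; and the agent's utility of money is linear, so there is no insurance value in paying on failure. Accordingly I would set $\widetilde{\tau}_i(0)=0$ and $\widetilde{\tau}_i(1)=\max\{w_i,0\}$, which leaves each agent's wedge unchanged when it is nonnegative and raises it to zero when it is negative. (One cannot simply subtract $\tau_i(0)$ from both $\tau_i(0)$ and $\tau_i(1)$, as limited liability forbids a negative $\widetilde{\tau}_i(1)$; hence the truncation.) The contract $\widetilde{\sharesvector}$ is manifestly nonnegative.

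Next I would check that $\eqlbactions:=\eqlbactions(\sharesvector)$ remains an equilibrium of $\widetilde{\sharesvector}$ --- the one step that needs care. For an agent $i$ with $w_i\ge 0$ the wedge is unchanged, so $\widetilde{\mathcal{U}}_i$ differs from $\mathcal{U}_i$ by the constant $-\tau_i(0)$ and has the same best responses; hence $\eqlbaction{i}$ still best-replies to $\eqlbactions_{-i}$. For an agent $i$ with $w_i<0$, differentiating $\mathcal{U}_i$ in $a_i$ gives $w_i\,P'(Y)\,\partial Y/\partial a_i-a_i<0$ for every $a_i\ge 0$ (using $P'>0$ and $Y$ strictly increasing), so under $\sharesvector$ this agent's unique best response is $\eqlbaction{i}=0$; under $\widetilde{\sharesvector}$ he faces $\widetilde{\mathcal{U}}_i=-\tfrac12 a_i^2$, again maximized at $a_i=0$. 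Since the profile $\eqlbactions$ --- and hence the induced performance $Y^{*}:=Y(\eqlbactions)$ --- is unchanged, $\eqlbactions$ is an equilibrium of $\widetilde{\sharesvector}$.

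Finally I would compare payoffs. The principal's payoff here is $P(Y)\big(1-\sum_i\tau_i(1)\big)-(1-P(Y))\sum_i\tau_i(0)$; evaluating both contracts at the common profile $\eqlbactions$ and differencing term by term leaves $\sum_i\big[P(Y^{*})\tau_i(1)+(1-P(Y^{*}))\tau_i(0)-P(Y^{*})\max\{w_i,0\}\big]$, whose $i$-th summand equals $\tau_i(0)$ if $w_i\ge 0$ and equals $P(Y^{*})\tau_i(1)+(1-P(Y^{*}))\tau_i(0)\ge 0$ if $w_i<0$; since $0<P(Y^{*})<1$ (the outcome probabilities being strictly positive), every summand is strictly positive as soon as $\tau_i(0)>0$. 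As the principal is assumed to obtain her best equilibrium under $\widetilde{\sharesvector}$, her payoff there is at least what $\eqlbactions$ yields, which by this computation is at least her payoff under $\sharesvector$ --- strictly more if $\sharesvector$ paid anything on failure. The main obstacle, such as it is, is precisely this equilibrium-preservation step: the subtlety is that truncating a \emph{negative} wedge to zero changes an agent's incentives but not his (corner) behavior, and that, because the principal is handed her preferred equilibrium, it suffices that the old profile survives as \emph{an} equilibrium of $\widetilde{\sharesvector}$.
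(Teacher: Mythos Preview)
Your proof is correct and follows the same idea as the paper's argument: agents' incentives depend only on the wedge $\tau_i(1)-\tau_i(0)$, so shifting payments down to set $\tau_i(0)=0$ preserves the equilibrium and weakly improves the principal's payoff. The paper states this in two sentences and does not spell out the corner case $w_i<0$; your treatment is more careful on exactly that point, introducing the truncation $\widetilde{\tau}_i(1)=\max\{w_i,0\}$ to respect limited liability and verifying that an agent with a negative wedge is already at $a_i^*=0$ and stays there, so equilibrium is preserved and the payoff comparison goes through term by term.
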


The reason is simple: agents' incentives depend only on the difference $u_{i}\left(\tau_i(1)\right)-u_{i}\left(\tau_i(0)\right)$ between utility conditional on success and failure, so we can shift payments and assume $\tau_i(0)=0$ and $u_i(0)=0$. This shift can only improve the principal's payoff, so it is without loss of optimality in the principal's problem. We can  interpret $\tau_i(1)$ as an equity share in the project's output.

\subsubsection{Linear-quadratic production functions} \label{ex:main}
Within this class of environments, we can define an important leading example, which we will analyze in more detail after giving our main result.

Consider \emph{linear-quadratic production functions} of the form
\begin{equation}
  Y(\bm{a}) = \sum_{i=1}^n k_i a_i + \frac{1}{2}\sum_{i,j=1}^n G_{ij} a_i a_j,
  \label{eq:lq_general}
\end{equation}
where $k_i$ captures agent $i$'s standalone productivity and the entrywise nonnegative symmetric matrix $\bm{G}$ is the technological complementarity matrix. Its off-diagonal entries capture bilateral complementarities between agents' efforts, while $G_{ii}$ captures curvature in agent $i$'s own action.\footnote{This functional form is a second-order approximation to any smooth production function $Y$, with $\bm{G}$ being the Hessian of $Y$ with respect to actions at equilibrium.}

Fix a concave success probability function $P(\cdot)$. In the appendix, \Cref{l:uniqueeq} shows that a fixed contract $\bm{\tau}$ induces a unique Nash equilibrium. Under those conditions, the equilibrium actions $\bm{a}^*$ and team performance $Y^* = Y(\bm{a}^*)$ solve \begin{equation} [\bm{I}-P'(Y^*) \bm{T} \bm{G}]\eqlbactions = P'(Y^*) \bm{T} \bm{k},\label{eq:Nash_conditions_example} \end{equation} where $\bm{T}$ is a diagonal matrix with entries ${T}_{ii} = u_{i}(\sharesagt{i})$. The equilibrium structure is reminiscent of standard models of network games  with linear best responses (\cite{BCZ-06}), but two nuances complicate the situation: non-linear success probability $P(Y)$, as well as the appearance of the endogenous object $\bm{T}$ as part of the matrix $P'(Y^*)\bm{TG}$ that captures incentive spillovers.

An instructive special case arises when the symmetric matrix $\bm{G}$ has rank one: 
$\bm{G} = \rho \bm{\beta}\bm{\beta}^\top$ for some nonnegative vector $\bm{\beta}$ and scalar $\rho > 0$. In this case, the production function can be written as
\begin{equation}
  Y(\bm{a}) = \sum_{i=1}^n k_i a_i + \frac{\rho}{2}\left(\sum_{i=1}^n \beta_i a_i\right)^2,
  \label{eq:production_tech}
\end{equation}
where $\beta_i$ can be interpreted as the degree to which agent $i$'s effort complements others', and $\rho$ governs the overall strength of complementarities. 

Both the case with general $\bm{G}$ and the rank-one case will be important for applications.

\subsection{Remarks on the model} \label{sec:remarks_model}

The team performance $Y$ is real-valued, but the outcome $s$ is discrete. This assumption need not substantively restrict the scope of the model since the outcome can be, for example, a revenue rounded to the nearest cent. On the other hand, the fact that outcomes are mediated by a one-dimensional performance level is important. We discuss in the paper's concluding remarks the issue of extending the analysis to outcomes determined by a higher-dimensional function of efforts.

We assume that the firm's output is the only contractible consequence of any agent's effort. In other words, agents cannot be paid directly for their efforts $a_i$. In this we follow the literature stemming from \citet{holmstrom1982moral}, which is motivated by the practical features of contracts and the fact that $a_i$ and $Y$ are abstractions that may not have any measurable real-world counterpart, among other considerations.

We do not impose restrictions on the space of contracts beyond non-negativity of payments. (In particular, we allow the principal to lose money at some outcomes with the hope of inducing a higher action.) In practice, compensation may be restricted to particular simpler classes of contracts. For example, equity contracts award an agent the same share of profit in every state. In many such settings, analogues of our balance condition can be derived, as we demonstrate for the case of equity contracts in \Cref{a:optimalequitypay}.

We do not assume that equilibria exist under all contracts in the formulation of the model or our analysis. For the contract that pays zero in all states, there is always an equilibrium in which agents take zero actions; this ensures that the principal's value is well-defined. For other contracts, existence needs to be analyzed in the environment of interest; for example, \Cref{l:uniqueeq} establishes existence and uniqueness in the parametric model of \Cref{ex:main}.

\section{Optimal Contracts}
\label{s:generalintensivemargin}

This section presents and interprets the characterization of optimal contracts, beginning by defining its ingredients.

\subsection{Key objects} \label{ss:notation} We start with some notation. Fix a contract $\sharesvector$ and a corresponding principal-optimal equilibrium $\bm{a}^*(\sharesvector)$. Let $Y^*$ be the team performance under this equilibrium. 

\begin{definition}[Marginal productivity] Let $\nabla Y(\eqlbactions)$ be the gradient of $Y(\cdot)$ at $\eqlbactions$, restricted to the agents that take a strictly positive action. We define the \textit{marginal productivity} vector $\bm{\productivity}$ as \begin{align*}
    \bm{\productivity} := \nabla Y(\eqlbactions).
\end{align*} 
\end{definition}

The entry $\productivity_{i}$ captures the marginal effect of $i$'s action on team performance. We use \emph{productivity} as shorthand for marginal productivity. Here, as in other cases, the quantities in the expression for $\bm{\alpha}$ are evaluated at the equilibrium $\bm{a}^*$, under a certain contract $\bm{\tau}$---but we often omit this dependence on the equilibrium and the contract notationally for brevity.

\bigskip

We turn next to centrality, which concerns how incentives propagate through the team. Let $\bm{G}$ denote the \emph{technological complementarity matrix}: the Hessian matrix of $Y$ with respect to agent actions, restricted to agents that take a strictly positive action in $\eqlbactions$. Formally, for agents $j$ and $k$ such that $a_{j}^*>0$ and $a_{k}^*>0$, define \begin{align*}
    G_{jk} := \frac{\partial^2 Y}{\partial a_{k} \partial a_{j}}.
\end{align*} Let the \textit{marginal payment utility} matrix $\bm{U}$ be a diagonal matrix where \begin{align*}
    U_{jj} := \sum_{s \in \mathcal{S}}P_{s}'(Y^*)u_{j}(\tau_{j}(s))
\end{align*} is the marginal change in agent $j$'s utility from payments when team performance increases. The increase in $Y$ changes all the probabilities of outcomes, and the agent's utility from these outcomes is given by $u_j(\tau_j(s))$, where the contract is held fixed.

\begin{definition}[Spillover matrix and centrality]
Let $\bm{H}$ be the diagonal matrix with $H_{jj} := C_{j}''(a_{j}^*)$. Whenever $\bm H$ is nonsingular, define the \emph{spillover matrix}
\begin{equation*}
    \bm S:=\bm H^{-1}\bm U\bm G.
\end{equation*}
Whenever $\bm I-\bm S$ is nonsingular, the \emph{centrality} vector is
\begin{equation} \label{eq:centrality_definition}
\bm{\centrality}^{T} := \bm{\productivity}^{T}\left[\bm{I}-\bm S \right]^{-1}.
\end{equation}
\end{definition}
The entry $\centrality_{i}$  can be thought of as the {total effect on team performance} induced by a marginal change in agent $i$'s  incentive to increase $a_i$. This effect is inclusive of  spillovers on others' incentives through strategic interactions. To explain the role of $\bm{H}$: when we analyze how equilibrium actions vary with contract perturbations, an agent's best response is less sensitive to incentives when $C''_j(a_j^*)$ is larger, and $\bm{H}$ allows us to capture this effect. Centrality is well-defined when the inverse in the formula exists, which will be ensured under the conditions of our results.

Productivity and centrality take simpler forms in simple success-or-failure environments. In that case, the matrix $\bm{H}$ is equal to the identity matrix $\bm{I}$ because the cost of each agent's action is $\frac{1}{2}a_i^2$. So

\begin{equation}\label{eq:prodcentsf}\alpha_i = \frac{\partial Y}{\partial a_i} \quad \text{ and } \quad
\bm{\centrality}^\top = \bm{\productivity}^\top\left[\bm{I}- P'(Y) \bm{T} \bm{G}\right]^{-1},
\end{equation}
where $\bm{T} = \operatorname{diag}\!\left(u_1(\tau_1), \ldots, u_n(\tau_n)\right)$.  If we further specialize to the rank-one setting of \Cref{ex:main}, then the technological complementarity matrix is $\bm{G} = \rho \bm{\beta}\bm{\beta}^\top$ and productivities are
$$\alpha_i = \frac{\partial Y}{\partial a_i} = k_i + \rho \beta_i x^*,$$
where $$x^* := \bm{\beta}^\top \bm{a}^* = \sum_{j=1}^n \beta_j a_j^*$$ is the complementarity-weighted sum of equilibrium actions.

\subsection{Balance condition across agents} 
\label{ss:balancegeneral}

In this section, we present our main result: a balance condition across agents at each outcome realization.

We will state a necessary condition for an optimal contract $\bm{\tau}^*$ under the following assumption, which will be maintained from now on.
\begin{assumption} \label{as:balancederive} A differentiable selection $\eqlbactions(\sharesvector)$ from the equilibrium correspondence can be defined in a neighborhood of $\optsharesvector$.
\end{assumption}
This assumption stipulates that equilibrium varies differentiably as we slightly perturb the contract in a neighborhood of the optimum.  \Cref{a:appendix_indifferences} gives sufficient conditions for the applicability of the first-order approach, including explicit ones on the primitives that extend those familiar from single agent problems  \citep{rogerson1985first}. (See also 
\Cref{s:applications} for specific production functions $Y(\bm{a})$ under which the first-order approach applies.) 
The one-dimensional aggregator $Y(\bm{a})$ plays a crucial role here. Whereas in general providing sufficient conditions for the first-order approach in arbitrary multi-agent settings is quite challenging, in our framework it is much more tractable.

The following result characterizes optimal contracts.
\begin{theorem}
\label{t:generalmodel}
    Suppose $\optsharesvector$ is an optimal contract and $Y^*$ is the induced team performance. There exist constants $\balanceconstant_{s}$ such that for any agent $i$ receiving a positive payment under an outcome $s$, we have \[ \label{eq:balancecondition} \productivity_{i}\centrality_{i} \cdot \frac{u_{i}'(\tau_{i}^*(s))}{C''_i(a_i^*)} = \balanceconstant_{s}\tag{BC}.\]Moreover, the outcome-dependent constants $\balanceconstant_s$ satisfy $\balanceconstant_{s} \propto \frac{P_{s}(Y^*)}{P_{s}'(Y^*)}.$
\end{theorem}

This result says that optimal incentives require balance to hold, with the product on the left being equal across agents. More informally, the balance condition (\ref{eq:balancecondition}) states that under an outcome $s$,
$$\text{productivity}_i \cdot \text{centrality}_i \cdot \text{responsiveness}_i \text{ } = \text{ constant}$$
for all agents paid when that outcome occurs. Below, we will give more intuition for why this is a necessary condition.

In simple success-or-failure environments with all agents having utility function $u_i(\tau)=\tau$, we obtain a simpler statement: there exists a constant $\lambda$ such that
$$\alpha_i c_i  = \lambda$$
for all agents $i$ receiving positive payments conditional on success. Recall that in this case $\alpha_i$ and $c_i$ take the simpler forms given in (\ref{eq:prodcentsf}).

We note that the proof does not rely on the induced team performance $Y^*$ being optimal. The balance condition at the optimal contract would hold if the principal instead wanted to implement any desired level of performance with minimal (expected) transfers to agents. Solving for the minimal payments implementing a given outcome distribution  and then optimizing over outcome distributions is a standard approach in single-agent settings (e.g., \cite{grossman1983implicit}). With multiple agents, the first step also involves allocating incentives across agents.

The key to the proof, formalized in the following lemma, is calculating the effect on team performance of increasing an agent's payment under a given outcome. \Cref{as:balancederive} ensures that these perturbations are well-defined.

\begin{lemma}
    \label{l:changeteamperformance}
    Suppose $\optsharesvector$ is an optimal contract with corresponding equilibrium actions $\eqlbactions$ and team performance $Y^*$. %
    Consider any agent $i$ receiving a positive payment at some outcome. For any outcome $s$, the derivative of team performance in $\tau_{i}(s)$, evaluated at $\bm{\tau}^*$, is \begin{align*}
      \frac{d Y}{d \tau_{i}(s)} = l P'_{s}(Y^*) \productivity_{i} \centrality_{i}\cdot \frac{ u_{i}'(\tau^*_{i}(s))}{C''_i(a_i^*)} ,
    \end{align*}  where $l$ is independent of $i$ and $s$.
\end{lemma}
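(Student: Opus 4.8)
\textbf{Proof proposal for \Cref{l:changeteamperformance}.}

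The plan is to compute $dY/d\tau_i(s)$ by the chain rule, $dY/d\tau_i(s) = \nabla Y(\eqlbactions) \cdot \frac{d\eqlbactions}{d\tau_i(s)}$, where the total derivative of the equilibrium action profile is obtained by implicitly differentiating the agents' first-order conditions. Under \Cref{as:balancederive} the equilibrium selection is differentiable near $\optsharesvector$, so this is legitimate, and since $i$ receives a positive payment at some outcome the relevant agents are taking interior actions and their first-order conditions hold with equality. Each active agent $j$ satisfies $\sum_{s'} P_{s'}(Y(\eqlbactions)) u_j'(\tau_j(s')) \cdot \frac{\partial \text{(something)}}{\partial a_j}$ — more precisely, differentiating $\mathcal{U}_j$ in $a_j$ gives $\left(\sum_{s'} P_{s'}'(Y) u_j(\tau_j(s'))\right)\frac{\partial Y}{\partial a_j} = C_j'(a_j)$, i.e. $U_{jj}\,\partial_j Y = C_j'(a_j)$ at equilibrium (with $U_{jj}$ as defined in \Cref{ss:notation}).

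First I would differentiate this system of equilibrium conditions totally with respect to $\tau_i(s)$. On the left-hand side the differentiation hits (a) $U_{jj}$ through the direct dependence of the transfer $\tau_i(s)$ only when $j=i$ — this produces the term $P_s'(Y^*) u_i'(\tau_i^*(s))$, which is the source of the $P_s'(Y^*) u_i'(\tau_i^*(s))$ factor in the statement; (b) $U_{jj}$ through its dependence on $Y$; and (c) $\partial_j Y$ through its dependence on the action profile, contributing the Hessian $\bm{G}$. On the right-hand side we get $C_j''(a_j^*) = H_{jj}$ times the action response. Collecting terms, the vector of action responses $\frac{d\eqlbactions}{d\tau_i(s)}$ solves a linear system of the form $\left(\bm{H} - \bm{U}\bm{G} - (\text{rank-one }Y\text{-dependence of }\bm{U})\right)\frac{d\eqlbactions}{d\tau_i(s)} = P_s'(Y^*) u_i'(\tau_i^*(s))\, e_i \cdot \partial_i Y$. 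The rank-one correction term (from $U_{jj}$ depending on $Y$) needs to be handled carefully — I expect it either gets absorbed because we are evaluating at an optimum where a transversality/no-cross-shifting condition holds, or it contributes a term proportional to $\nabla Y$ that, after left-multiplying by $\nabla Y^\top$, folds into the overall scalar $l$. Checking that this is what happens, and that $l$ genuinely does not depend on $i$ or $s$, is the main obstacle.

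Granting that the operative linear system is (a suitable version of) $\left(\bm{H} - \bm{U}\bm{G}\right)\frac{d\eqlbactions}{d\tau_i(s)} = P_s'(Y^*) u_i'(\tau_i^*(s))\,\partial_i Y\, e_i$, I would then conjugate by $\bm{H}^{1/2}$: writing $\frac{d\eqlbactions}{d\tau_i(s)} = \bm{H}^{-1/2}\bm{x}$, the system becomes $\left(\bm{I} - \bm{H}^{-1/2}\bm{U}\bm{G}\bm{H}^{-1/2}\right)\bm{x} = P_s'(Y^*) u_i'(\tau_i^*(s))\, \bm{H}^{-1/2}\partial_i Y\, e_i$. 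Now $dY/d\tau_i(s) = \nabla Y^\top \bm{H}^{-1/2}\bm{x} = \bm{\productivity}^\top \bm{x}$, and the right-hand side, after noting $\bm{H}^{-1/2}\partial_i Y\, e_i = \alpha_i\, e_i$ (up to the $H_{ii}^{1/2}$ bookkeeping — actually $(\bm{H}^{-1/2}\nabla Y)_i = \alpha_i$ but $\partial_i Y\,(\bm H^{-1/2} e_i)$ equals $H_{ii}^{-1/2}\partial_i Y = \alpha_i$, consistently), gives
\begin{equation*}
\frac{dY}{d\tau_i(s)} = P_s'(Y^*) u_i'(\tau_i^*(s))\,\alpha_i\; \bm{\productivity}^\top\left(\bm{I} - \bm{H}^{-1/2}\bm{U}\bm{G}\bm{H}^{-1/2}\right)^{-1} e_i = P_s'(Y^*) u_i'(\tau_i^*(s))\,\alpha_i\, c_i,
\end{equation*}
using the definition \eqref{eq:kappa_definition} of $\bm{\centrality}$. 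This is exactly the claimed formula with $l = 1$ (or whatever absolute constant the rank-one correction forces); invertibility of $\bm{I} - \bm{H}^{-1/2}\bm{U}\bm{G}\bm{H}^{-1/2}$ is guaranteed by \Cref{as:balancederive}, since the implicit function theorem applied to get the differentiable selection requires exactly that the relevant Jacobian be nonsingular. Finally, \Cref{t:generalmodel} follows immediately: at an optimum the principal cannot profit by shifting a marginal dollar of outcome-$s$ pay from agent $i$ to agent $j$, which forces $P_s'(Y^*)\alpha_i c_i u_i'(\tau_i^*(s))$ to equal $P_s'(Y^*)\alpha_j c_j u_j'(\tau_j^*(s))$ among agents paid at $s$ (the cost side, $P_s(Y^*)$, being common), yielding $\alpha_i c_i u_i'(\tau_i^*(s)) = \lambda_s$ with $\lambda_s \propto P_s(Y^*)/P_s'(Y^*)$.
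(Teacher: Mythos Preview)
Your approach is the same as the paper's, and your outline is correct. The one point you flag as uncertain---the ``rank-one'' correction from the $Y$-dependence of $U_{jj}$---resolves exactly via your second guess, and the mechanism is simple enough to be worth spelling out. Differentiating $U_{jj}=\sum_{s'}P_{s'}'(Y)u_j(\tau_j(s'))$ in $\tau_i(s)$ contributes $d_j\cdot \frac{dY}{d\tau_i(s)}$ with $d_j=\frac{\partial Y}{\partial a_j}\sum_{s'}P_{s'}''(Y)u_j(\tau_j(s'))$; crucially $\bm d$ does not depend on $i$ or $s$. Hence the action-response system reads
\[
(\bm H-\bm U\bm G)\,\frac{d\eqlbactions}{d\tau_i(s)}=P_s'(Y^*)u_i'(\tau_i^*(s))\,\partial_iY\,e_i+\frac{dY}{d\tau_i(s)}\,\bm d,
\]
so after inverting $\bm H-\bm U\bm G$ and dotting with $\nabla Y$ you get the scalar equation
\[
\frac{dY}{d\tau_i(s)}=\productivity_i\centrality_iP_s'(Y^*)u_i'(\tau_i^*(s))+\frac{dY}{d\tau_i(s)}\cdot\big(\nabla Y^\top[\bm H-\bm U\bm G]^{-1}\bm d\big),
\]
which solves to give $l=\big(1-\nabla Y^\top[\bm H-\bm U\bm G]^{-1}\bm d\big)^{-1}$, manifestly independent of $i$ and $s$. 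Your first guess (that optimality kills the term) is not how the paper proceeds and would not work: the lemma uses optimality only to ensure that agents with zero payments take zero actions and can be dropped from the system.
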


A complete proof for the result above is provided in \Cref{a:detailedproofs}. We provide some intuition for the various terms in the expression in the lemma.

\medskip

\textit{Intuition for the proof:} The lemma characterizes the effect on team performance of increasing the transfer to agent $i$ under outcome $s$. We can decompose this effect as the product of: 
\begin{enumerate}[(i)]
\item a factor $P'_{s}(Y^*) \productivity_{i}$ capturing the sensitivity of the probability of the outcome $s$ to $i$'s effort; 
\item an action-responsiveness factor $\frac{ u_{i}'(\tau_{i}(s))}{C''_i(a_i^*)}$: the numerator governs how the transfer perturbation changes $i$'s marginal incentive, and the denominator converts that marginal-incentive change into a direct action response;
\item a term $\centrality_{i}$ capturing the spillovers from changing $i$'s incentive to exert effort; 
\item the constant $l$, which depends on the curvature of the probability $P_s(Y)$. 
\end{enumerate}

We focus on the first three factors and defer treatment of the fourth term, which is not central in the basic intuition, to the formal proof in the appendix.

The first two factors, (i) and (ii), measure the principal's ability to directly induce agent $i$ to change his action by rewarding him when outcome $s$ is realized. The change in agent $i$'s marginal payoff from effort as $\tau_i(s)$ increases is $P_s'(Y^*)\frac{\partial Y}{\partial a_i}u_i'(\tau_i(s))$. Dividing by $C_i''(a_i^*)$ converts this marginal-incentive change into the direct change in $i$'s optimal action. An agent is easier to incentivize when his marginal utility of money is higher and when his marginal cost of effort increases less steeply.

Multiplying by term (iii), the centrality $\centrality_i$, translates from this direct effect on $i$'s action to the overall change in equilibrium team performance. This effect is at the heart of our use of network theory; a detailed intuition for this appears in \Cref{sec:ripples} just below.

\medskip

First, though, we sketch why \Cref{l:changeteamperformance} implies \Cref{t:generalmodel}. (A formal proof is provided in \Cref{a:detailedproofs}.) We want to show that the balance condition \begin{equation*}
    \productivity_{i}\centrality_{i} \cdot \frac{u_{i}'(\tau_{i}^*(s))}{C''_i(a_i^*)} = \productivity_{j}\centrality_{j} \cdot \frac{u_{j}'(\tau_{j}^*(s))}{C''_j(a_j^*)},
\end{equation*} must hold under an optimal contract. Suppose that the principal would benefit from a slightly higher team performance (the case  in which the principal prefers a slightly lower team performance proceeds analogously). \Cref{l:changeteamperformance} shows that the change in team performance from increasing agent $i$'s payment under outcome $s$ is equal to $ \productivity_{i}\centrality_{i}\cdot \frac{u_{i}'(\tau_{i}^*(s))}{C''_i(a_i^*)}$ times terms independent of $i$, and similarly for agent $j$. If we had\begin{equation*}
    \productivity_{i}\centrality_{i}\cdot \frac{u_{i}'(\tau_{i}^*(s))}{C''_i(a_i^*)} > \productivity_{j}\centrality_{j}\cdot \frac{u_{j}'(\tau_{j}^*(s))}{C''_j(a_j^*)},
\end{equation*} it would be profitable for the principal to pay agent $i$ slightly more and agent $j$ slightly less under outcome $s$. The same argument holds in the opposite direction, so the balance condition is necessary for the contract to be optimal.

\medskip

\subsubsection{What centrality captures} \label{sec:ripples}

This section gives an intuition for the role of network centrality in our characterization, and the details of how network centrality is applied in our setting. For simplicity, we focus on the simple success-or-failure case, with $\bm{H}=\bm{I}$ (which is essentially a normalization), and recall $\bm S=P'(Y)\bm T\bm G$ is the spillover matrix induced by the complementarity matrix $\bm G$.

When the spectral radius of $\bm S$ is less than 1, we can write $$\bm{\centrality}^\top = \bm{\productivity}^\top \left[\sum_{\ell=0}^{\infty} \bm S^\ell\right].$$ Economically, the powers capture the effects of the initial increase in $i$'s action ($\ell=0$), the resulting changes in each agent's best response ($\ell=1$), the further changes in best responses induced by these, etc. The entries of $\bm S$ describe how $j$'s increased effort spills over to affect $i$'s marginal incentives, accounting for technological complementarities and contracts. The form of spillovers makes sense: the direct dependence of $i$'s best response on $j$'s action is proportional to $P'(Y)$---the sensitivity of success probability to team performance $Y$; $u_{i}\left(\tau_i\right)$---agent $i$'s utility conditional on success; and to the complementarity of $i$ with $j$. Unlike in standard models of network games with linear best-responses, the spillover matrix $\bm S$, and therefore the resulting centrality measure, depend endogenously on the contract and equilibrium.

In terms of graph statistics, entry $(i,j)$ of $\bm S^\ell$ counts walks of length $\ell$ from $j$ to $i$ in the network of spillovers, with each step in the walk from a node $j'$ to a node $i'$ contributing a factor $S_{i'j'}$ to the weight of a walk.\footnote{See \citet{BCZ-06} and \citet*{bloch2023centrality} for more on these centrality measures.} To compute centralities, the entries of these powers are then multiplied by productivities $\alpha_i$: an agent $j$ is more central if the ripple effects of motivating $j$ increase the incentives of more productive agents.

\subsubsection{Further comments on the result and the modeling behind it. }
As our discussion makes clear, \Cref{t:generalmodel} essentially follows from a decomposition of the terms in the first-order conditions for contract optimality. One contribution of our work, as we have already mentioned, is formulating a framework in which the first-order approach is valid under conditions that extend those justifying the first-order approach from single-agent contracts  (\Cref{a:appendix_indifferences}).

Another contribution is formulating the framework so that these conditions coming from the first-order approach are as simple and intuitive as possible. In particular, if we simply differentiate $i$'s incentives in $j$'s actions, second derivatives of $P(Y)$ show up. But such terms do not appear in our definition of centrality. In the proof of our main result, we show that changing the second derivative $P''(Y^*)$ rescales all relevant expressions involving the spillover matrix $\bm S$ by a common factor, and so second derivative terms can be ignored when  checking whether redistributing incentives between agents is profitable. This simplification  relies crucially on the model feature that agents' actions contribute to production via one-dimensional team performance $Y$.

In the remaining sections of the paper, we build on the insights from the implicit characterization and derive explicit connections between economic primitives and optimal contracts when the production function takes various functional forms. 

\subsection{Comparisons across agents and outcomes.}
\label{s:generalmarginalutils}

Before turning to these more applied consequences of \Cref{t:generalmodel}, we give several basic corollaries describing how incentives are distributed across agents and across outcomes. We first give conditions implying an ordinal ranking of payments to agents that does not depend on the particular outcome. We then provide a multi-agent version of a result from \cite{holmstrom1979moralhazard} on the optimal allocation of payments across outcomes rather than agents.

 \subsubsection{Ranking agents.} Agents can be ranked in terms of payments at the optimal contract. To see this, we establish a relationship between the marginal utilities of agents. An implication of \Cref{t:generalmodel} is that the ratio between any two agents' marginal utilities is the same at every outcome for which both receive positive transfers.

\begin{corollary}
\label{c:marginalutil1}
Consider an optimal contract $\optsharesvector$. Let $\mathcal{S}_{ij}^*$ be the set of outcomes at which agents $i$ and $j$ both receive a positive payment. For any outcome $s \in \mathcal{S}^*_{ij}$, we have \begin{align*}
        \frac{u_{i}'(\tau^*_{i}(s))}{u_{j}'(\tau^*_{j}(s))} = \frac{\productivity_{j}\centrality_{j}}
        {\productivity_{i}\centrality_{i} } \cdot \frac{C''_i(a_i^*)}{C''_j(a_j^*)}.
    \end{align*}
\end{corollary}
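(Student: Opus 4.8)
This is an immediate consequence of \Cref{t:generalmodel}, so the argument is short. Fix an outcome $s \in \mathcal{S}^*_{ij}$. By the definition of $\mathcal{S}^*_{ij}$, each of agents $i$ and $j$ receives a strictly positive payment at $s$, so the balance condition of \Cref{t:generalmodel} applies to both of them with the \emph{same} outcome-dependent constant $\balanceconstant_{s}$ (this constant depends only on $s$). Hence
\[
\productivity_{i}\centrality_{i}u_{i}'(\tau_{i}^{*}(s)) \;=\; \balanceconstant_{s} \;=\; \productivity_{j}\centrality_{j}u_{j}'(\tau_{j}^{*}(s)),
\]
and dividing the two outer expressions yields exactly $\frac{u_{i}'(\tau_{i}^{*}(s))}{u_{j}'(\tau_{j}^{*}(s))} = \frac{\productivity_{j}\centrality_{j}}{\productivity_{i}\centrality_{i}}$, which is the claim.

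The one point that warrants a sentence is that this division is legitimate, i.e.\ that $\productivity_{i}\centrality_{i}$ and $u_{j}'(\tau_{j}^{*}(s))$ are nonzero. The factor $u_{j}'(\tau_{j}^{*}(s)) > 0$ because $u_{j}$ is strictly increasing; the factor $\productivity_{i} = \bigl(\partial Y/\partial a_{i}\bigr)/\sqrt{C_{i}''(a_{i}^{*})} > 0$ because $Y$ is strictly increasing in $a_{i}$ and $C_{i}$ is strictly convex. For $\centrality_{i}$: since $u_{i}'(\tau_{i}^{*}(s)) > 0$, the common constant $\balanceconstant_{s} = \productivity_{i}\centrality_{i}u_{i}'(\tau_{i}^{*}(s))$ vanishes iff $\centrality_{i} = 0$; but \Cref{t:generalmodel} also asserts $\balanceconstant_{s} \propto P_{s}(Y^{*})/P_{s}'(Y^{*})$, which is nonzero because $P_{s}(Y^{*}) > 0$. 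Hence $\centrality_{i} \neq 0$, and symmetrically $\centrality_{j} \neq 0$. (As in \Cref{t:generalmodel}, this uses the standing \Cref{as:balancederive}, and $\productivity_{i},\centrality_{i}$ refer to the components indexed by agents taking strictly positive actions, which includes any agent receiving positive incentive pay.)

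Finally, since $s \in \mathcal{S}^*_{ij}$ was arbitrary, the identity holds at every outcome at which both agents are paid; in particular its right-hand side $\productivity_{j}\centrality_{j}/(\productivity_{i}\centrality_{i})$ is independent of $s$, which is the ``same ratio at every such outcome'' statement in the text preceding the corollary. I do not expect any real obstacle here: this is a corollary in the literal sense, and the only care required is the nonvanishing check above, which I would fold into a short parenthetical in the final write-up.
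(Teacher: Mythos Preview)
Your proposal is correct and follows essentially the same route as the paper: apply \Cref{t:generalmodel} to agents $i$ and $j$ at the common outcome $s$, equate both balance conditions to the same constant $\balanceconstant_s$, and divide. The paper's proof simply asserts $\balanceconstant_s \neq 0$ and divides, whereas you spell out why the relevant denominators are nonzero; this extra care is fine but not something the paper felt obliged to include.
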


Intuitively, since outcome probabilities are determined by a joint team performance, agents' incentives should vary across outcomes in similar ways. The corollary formalizes this intuition in terms of marginal utilities in each outcome.\footnote{This contrasts with a literature on optimal compensation when the observed outcome can be used to identify individuals who deviated from a desired level of effort (e.g., \cite{holmstrom1982moral} and \cite{legros1993efficient}).} %

The corollary only applies when the set of outcomes at which agents $i$ and $j$ both receive a payment is non-empty. Determining when an agent is paid at a given outcome can be complicated in general, but it is easy to construct settings where the corollary applies. In \Cref{a:sufficientconditionspay}, for example, we give a class of environments in which an Inada condition guarantees that all agents are paid at all outcomes where $P'_s(Y^*)>0$ (and no other outcomes). 

When agents have identical utility functions, agents can be ranked so that an optimal contract provides stronger incentives to more highly ranked agents.

\begin{proposition}
\label{p:rankingpayments}
    Suppose that $\optsharesvector$ is an optimal contract. If a pair of agents $i$ and $j$ have identical strictly concave utility functions $u_{i}(\cdot) = u_{j}(\cdot)$, then $$
        \tau^*_{i}(s) \geq \tau^*_{j}(s) \text{ for all } s \in \mathcal{S} \text{ or }
        \tau^*_{j}(s) \geq \tau^*_{i}(s) \text{ for all } s \in \mathcal{S}$$  (or both).
\end{proposition}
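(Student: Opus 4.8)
The plan is to combine the cross‑agent relation of \Cref{c:marginalutil1} with the complementary‑slackness (corner) conditions coming out of the very same first‑order analysis, and then to rule out any ``crossing'' of the two agents' payment schedules. Throughout I write $u := u_i = u_j$, so that $u'$ is \emph{strictly} decreasing, and set $\beta_k := \productivity_k\centrality_k$ for an agent $k$; I work, as elsewhere in the regular case, under the maintained positivity $\productivity_k\centrality_k > 0$ (recall $\productivity_k>0$ automatically since $Y$ is strictly increasing), and I inherit \Cref{as:balancederive} exactly as \Cref{t:generalmodel} does.

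\textbf{Step 1: optimality conditions, interior and at corners.} From the argument behind \Cref{t:generalmodel}, the derivative of the principal's payoff in $\sharesagt{k}(s)$, evaluated at $\optsharesvector$, is $-P_s(Y^*) + W\,\tfrac{dY}{d\sharesagt{k}(s)}$, where $W := \sum_{s'\in\mathcal{S}}\bigl(v_{s'}-\sum_m \optsharesagt{m}(s')\bigr)P_{s'}'(Y^*)$, and, by \Cref{l:changeteamperformance}, $\tfrac{dY}{d\sharesagt{k}(s)} = l\,P_s'(Y^*)\,\beta_k\,u'(\optsharesagt{k}(s))$. Writing $m := Wl$ (independent of $k$ and $s$), optimality forces $-P_s(Y^*) + m\,P_s'(Y^*)\,\beta_k\, u'(\optsharesagt{k}(s))$ to vanish when $\optsharesagt{k}(s)>0$ and to be $\le 0$ when $\optsharesagt{k}(s)=0$. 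At any outcome $s$ at which \emph{some} agent is paid we then get $m\,P_s'(Y^*)>0$ (for the paid agent $k$, $m\,P_s'(Y^*)\,\beta_k\,u'(\optsharesagt{k}(s)) = P_s(Y^*)>0$ while $\beta_k,u'>0$), so, setting $\balanceconstant_s := P_s(Y^*)/(m\,P_s'(Y^*)) > 0$ (consistent with $\balanceconstant_s \propto P_s(Y^*)/P_s'(Y^*)$),
\[
\beta_k\, u'(\optsharesagt{k}(s)) = \balanceconstant_s \ \text{ if } \optsharesagt{k}(s)>0, \qquad \beta_k\, u'(0) \le \balanceconstant_s \ \text{ if } \optsharesagt{k}(s)=0 .
\]
The first line is exactly \Cref{t:generalmodel}; I would spell out the corner inequality, as it is what lets one handle outcomes at which only one of $i,j$ is paid.

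\textbf{Step 2: no crossing.} Suppose, toward a contradiction, that neither displayed inequality in the proposition holds, so there are outcomes $s_1,s_2$ with $\optsharesagt{i}(s_1) > \optsharesagt{j}(s_1)$ and $\optsharesagt{j}(s_2) > \optsharesagt{i}(s_2)$. At $s_1$, since $\optsharesagt{i}(s_1) > \optsharesagt{j}(s_1)\ge 0$ we have $\optsharesagt{i}(s_1)>0$; hence $s_1$ is an outcome where someone is paid, $\balanceconstant_{s_1}>0$, and $\beta_i u'(\optsharesagt{i}(s_1)) = \balanceconstant_{s_1}$. If $\optsharesagt{j}(s_1)>0$, then also $\beta_j u'(\optsharesagt{j}(s_1)) = \balanceconstant_{s_1}$, and since $u'$ is strictly decreasing, $0 < u'(\optsharesagt{i}(s_1)) < u'(\optsharesagt{j}(s_1))$, which forces $\beta_i > \beta_j$. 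If instead $\optsharesagt{j}(s_1)=0$, then $\beta_j u'(0) \le \balanceconstant_{s_1} = \beta_i u'(\optsharesagt{i}(s_1)) < \beta_i u'(0)$, where the last strict inequality uses $\optsharesagt{i}(s_1)>0$ and strict concavity; again $\beta_i > \beta_j$. The symmetric argument at $s_2$ yields $\beta_j > \beta_i$, a contradiction. So one of the two orderings holds, which is the claim.

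\textbf{Main obstacle and edge cases.} The delicate point is the sign bookkeeping in Step~1: one must be sure that $\balanceconstant_s>0$ (equivalently $m\,P_s'(Y^*)>0$) at \emph{every} outcome where \emph{some} agent receives a positive payment, so that the corner inequality $\beta_k u'(0)\le\balanceconstant_s$ points the right way — this is precisely where positivity of $\productivity_k\centrality_k$ enters. Outcomes at which \emph{neither} $i$ nor $j$ is paid need no separate treatment, since there $\optsharesagt{i}(s)=\optsharesagt{j}(s)=0$ is consistent with both orderings; and the case $\productivity_i\centrality_i = \productivity_j\centrality_j$ is subsumed (the contradiction in Step~2 already rules out a strict crossing, and on outcomes where both are paid \Cref{c:marginalutil1} gives $\optsharesagt{i}(s)=\optsharesagt{j}(s)$). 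Finally, as with \Cref{t:generalmodel}, the whole argument is conditional on differentiability of the equilibrium selection near $\optsharesvector$ (\Cref{as:balancederive}).
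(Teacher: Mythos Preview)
Your proof is correct and takes a more direct route than the paper's. The paper first develops two auxiliary lemmas: \Cref{l:positivepaymentoutcomes}, showing $P_s'(Y^*)>0$ at any outcome where an agent is paid, and \Cref{l:commonoutcomepay}, showing that any two agents each paid somewhere must share a \emph{common} outcome at which both are paid. It then fixes that common outcome, uses \Cref{c:marginalutil1} there to order $|\productivity_i\centrality_i|$ against $|\productivity_j\centrality_j|$ (noting the two quantities have the same sign), and finally walks through the remaining outcomes, using a corner-FOC contradiction to rule out any $s$ with $\tau_j^*(s)>0=\tau_i^*(s)$. You bypass the common-outcome lemma entirely: by assuming a crossing and running the interior/corner dichotomy \emph{at each} of the two crossing outcomes, you extract strict opposite orderings of $\beta_i$ and $\beta_j$ directly, which is tidier.

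The one substantive difference is your maintained hypothesis $\productivity_k\centrality_k>0$. The paper does not assume this; it works with the combined sign of $lD\,\productivity_k\centrality_k$ (your $m\beta_k$), which is positive by the interior FOC, and then carries absolute values plus the observation that $\productivity_i\centrality_i$ and $\productivity_j\centrality_j$ share a sign once a common paid outcome is found. Your argument genuinely uses $\beta_k>0$ to orient the corner inequality $\beta_j u'(0)\le\balanceconstant_{s_1}$. If you wanted to match the paper's generality, you could instead invoke \Cref{l:positivepaymentoutcomes} to get $P_{s_1}'(Y^*),P_{s_2}'(Y^*)>0$; then $mP_{s_1}'(Y^*)$ and $mP_{s_2}'(Y^*)$ have the same sign, and the two crossing outcomes still yield contradictory strict orderings of $\beta_i$ versus $\beta_j$ regardless of that sign. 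With that small amendment your route would be both shorter than and at least as general as the paper's.
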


The intuition is simple: for two agents that derive the same value from a monetary transfer, the agent with a greater overall effect on team performance at the optimal contract must be receiving a higher payment.

When all agents have an identical utility function, the optimal contract induces a complete ranking on the agents. The relative magnitude of payments across agents depends on the environment. This becomes evident in the parametric example discussed further in \Cref{s:applyparametric}.

\subsubsection{Allocation across outcomes.} \label{ss:outcomespositivepayments} A second implication of the main balance result is a relationship between a single agent's marginal utility across outcomes. 

\begin{corollary}
\label{c:marginalutil2}
    Suppose $\optsharesvector$ is an optimal contract and $Y^*$ is the induced team performance. If agent $i$ receives positive payments under outcomes $s_1$ and $s_2$, then \begin{align*}
        \frac{u_{i}'(\tau_{i}^*(s_{1}))}{u_{i}'(\tau_{i}^*(s_{2}))} = \frac{P_{s_{1}}(Y^*)}{P'_{s_{1}}(Y^*)} \cdot \frac{P'_{s_{2}}(Y^*)}{P_{s_{2}}(Y^*)}.
    \end{align*}
\end{corollary}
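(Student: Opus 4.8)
The plan is to derive Corollary~\ref{c:marginalutil2} directly from Theorem~\ref{t:generalmodel}. The theorem provides, for each outcome $s$ at which agent $i$ receives a positive payment, the identity $\productivity_i \centrality_i u_i'(\tau_i^*(s)) = \balanceconstant_s$, together with the proportionality $\balanceconstant_s \propto P_s(Y^*)/P_s'(Y^*)$. The key observation is that the factors $\productivity_i$ and $\centrality_i$ depend only on the optimal contract and its induced equilibrium---hence on $Y^*$---but not on the particular outcome $s$. So for a single agent $i$ receiving positive payments at both $s_1$ and $s_2$, we have two equations $\productivity_i \centrality_i u_i'(\tau_i^*(s_1)) = \balanceconstant_{s_1}$ and $\productivity_i \centrality_i u_i'(\tau_i^*(s_2)) = \balanceconstant_{s_2}$ in which the product $\productivity_i \centrality_i$ is common.

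The main step is then simply to divide one equation by the other: $\productivity_i \centrality_i$ cancels, giving $u_i'(\tau_i^*(s_1))/u_i'(\tau_i^*(s_2)) = \balanceconstant_{s_1}/\balanceconstant_{s_2}$. Since $\balanceconstant_s \propto P_s(Y^*)/P_s'(Y^*)$ with a proportionality constant that does not depend on $s$, that constant also cancels in the ratio, yielding
\begin{equation*}
\frac{u_i'(\tau_i^*(s_1))}{u_i'(\tau_i^*(s_2))} = \frac{P_{s_1}(Y^*)/P_{s_1}'(Y^*)}{P_{s_2}(Y^*)/P_{s_2}'(Y^*)} = \frac{P_{s_1}(Y^*)}{P_{s_1}'(Y^*)} \cdot \frac{P_{s_2}'(Y^*)}{P_{s_2}(Y^*)},
\end{equation*}
which is exactly the claimed expression. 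One should note that this ratio is well-defined: $u_i$ is strictly increasing and differentiable so $u_i'(\tau_i^*(s_2)) > 0$, and the $P_s$ are strictly positive by assumption, so the only thing to verify is that $\productivity_i \centrality_i \neq 0$ in order for the division to be legitimate---but this follows from the fact that $u_i'(\tau_i^*(s_1)) > 0$ forces $\balanceconstant_{s_1} \neq 0$ and hence $\productivity_i \centrality_i \neq 0$.

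Since both displayed identities in Theorem~\ref{t:generalmodel} are invoked as given, there is essentially no obstacle here---the corollary is a one-line consequence of the theorem. If anything, the only subtlety worth a sentence is making explicit that the cancellation of $\productivity_i \centrality_i$ relies on these quantities being outcome-independent (which is immediate from the definitions in Subsection~\ref{ss:notation}, where $\bm{\productivity}$ and $\bm{\centrality}$ are built from $\nabla Y$, the Hessian $\bm{G}$, the curvature matrix $\bm{H}$, and the marginal payment utility matrix $\bm{U}$, all evaluated at the fixed optimal contract and equilibrium), and that the common proportionality constant in $\balanceconstant_s \propto P_s(Y^*)/P_s'(Y^*)$ likewise drops out of the ratio.
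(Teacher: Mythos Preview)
Your proof is correct and follows essentially the same approach as the paper: both take the ratio of the balance identities at $s_1$ and $s_2$ so that the outcome-independent factor $\productivity_i\centrality_i$ (and the proportionality constant in $\balanceconstant_s$) cancels. The paper's version quotes the intermediate first-order condition from inside the proof of Theorem~\ref{t:generalmodel} rather than the theorem's final statement, but the arguments are substantively identical.
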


The corollary states that the marginal utility under each outcome is proportional to the probability of that outcome divided by the marginal change in that probability as team performance increases. That is, agents are paid more in outcomes that are less likely and more responsive to team performance. This result generalizes a result in the single-agent setting of \citet{holmstrom1979moralhazard} concerning how a single agent's payments should be allocated across outcomes.

A straightforward application of \Cref{c:marginalutil2} characterizes the set of outcomes at which an agent receives a positive payment. If an agent receives a positive payment at some outcome, the outcomes at which he receives a positive payment must all either have a positive marginal probability at equilibrium team performance, or a negative marginal probability. When the team performance function $Y(\cdot)$ is strictly increasing in each of its arguments, the outcomes at which an agent receives a positive payment all have a positive marginal probability at equilibrium team performance. (This is formalized as \Cref{l:positivepaymentoutcomes} in the Appendix).

In the special case that an agent is risk-neutral, the corollary has a stronger implication: agents are only paid under the outcome(s) maximizing $P'_s(Y^*)/P_s(Y^*)$. It is straightforward to construct functions $P_s(Y)$ where there can only be one such outcome. Risk neutrality removes the principal's motives to diversify payments across outcomes, so compensation can be concentrated in the outcome providing the most incentive power.

\section{Implications: Whom to incentivize in teams?}
\label{s:applications}

The balance condition in \Cref{t:generalmodel} characterizes optimal contracts 
implicitly. We now derive explicit solutions and comparative statics in two 
tractable settings that illuminate how exogenous parameters determine the balance condition and the resulting contracts.

First, we 
study a setting in which agents differ in their intrinsic capacities for productivity and complementarity. A single complementarity parameter governs network interactions with all agents, yielding a rank-one network. This exercise allows us to examine which of these exogenous traits should primarily drive compensation.

A second application focuses on better understanding centrality in general networks. By turning off differences in standalone productivity, we can explicitly characterize how a heterogeneous technological complementarity matrix shapes agents' centralities at the optimal contract. We find some interesting non-monotonicities that sharply distinguish our theory from standard network models. This illustrates both the importance and the subtlety of the fact that centralities are evaluated at endogenously determined responses to optimal contracts, rather than in an exogenous network.

\subsection{Rank-one networks: Standalone productivity versus complementarity}
\label{ss:rankone}

We return to the rank-one, linear-quadratic setting of \Cref{ex:main}. 
This setting allows heterogeneity in both standalone productivities $k_i$ and 
complementarity parameters $\beta_i$, while maintaining tractability through the 
rank-one structure of the Hessian $\bm{G} = \rho \bm{\beta}\bm{\beta}^\top$. Throughout this section, we maintain the assumption that the parameters satisfy the bounds $k_{i} \in [\underline{k},\overline{k}]$ for some $\overline{k}>\underline{k} > 0$ and $\beta_{i} \in [\underline{\beta},\overline{\beta}]$ for some $\overline{\beta}>\underline{\beta} > 0$. 

\subsubsection{Characterization of optimal contracts}

Consider any contract $\bm{\tau}$ with corresponding equilibrium actions $\bm{a}^*$ 
and team performance $Y^*$. Define the \emph{spillover matrix}
\begin{equation*}
    \bm{S} := \rho P'(Y^*) \bm{T} \bm{\beta} \bm{\beta}^{\top},
\end{equation*}
where $\bm{T} = \diag\!\left(u_1(\tau_1), \ldots, u_n(\tau_n)\right)$ collects 
utilities from the success payments (with $\tau_i=\tau_i(1)$, recalling \Cref{fact:no_pay_failure}). The spectral radius 
$\mu(\bm{S})$ of this matrix determines whether standalone productivity or complementarity
governs optimal pay.

A key result, which follows from our general balance condition, characterizes optimal contracts in terms of a quadratic form in the primitive parameters.

\begin{lemma} \label{l:optprodcomp}
   Consider any optimal contract $\bm{\tau}^*$ in the rank-one setting of \Cref{ex:main}. 
   There exist constants $B_{1} \geq 0$ and $B_{2} \geq 0$ such that for all agents $i$ receiving positive pay,
   \begin{align*}
       u_{i}'\left(\tau^*_{i}\right) \propto \left(k_{i}^2 + B_{1}k_{i}\beta_{i} + B_{2}\beta_{i}^2\right)^{-1}.
   \end{align*}
\end{lemma}

This lemma shows that optimal pay depends on a quadratic form in the agent's 
standalone productivity $k_i$ and complementarity $\beta_i$. The coefficients $B_1$ and
$B_2$ depend on the equilibrium and determine the relative importance 
of standalone productivity versus complementarity in setting compensation.

\subsubsection{Spillover regimes}

The following theorem establishes that the strength of spillovers determines 
which characteristic---standalone productivity or complementarity---dominates optimal pay.

\begin{theorem} \label{t:spilloverregimes}
   In the rank-one setting of \Cref{ex:main}, fix all primitives other than the complementarity parameter $\rho$. Consider a sequence of complementarity parameters $\rho^{(m)}$ with corresponding optimal contracts $\bm{\tau}^{(m)}$ and induced spillover matrices $\bm{S}^{(m)}$. Write $\mu^{(m)}$ for the spectral radius of $\bm{S}^{(m)}$. For any pair of agents $i$ and $j$ that receive positive pay for all sufficiently large $m$,
   \begin{enumerate}[(i)]
       \item If $\mu^{(m)} \to 0$, then
       $\displaystyle\lim_{m\to\infty}\frac{u_i'(\tau_i^{(m)})}{u_j'(\tau_j^{(m)})}=\frac{k_j^2}{k_i^2}$.
       \item If $\mu^{(m)} \to 1$, then
       $\displaystyle\lim_{m\to\infty}\frac{u_i'(\tau_i^{(m)})}{u_j'(\tau_j^{(m)})}=\frac{\beta_j^2}{\beta_i^2}$.
   \end{enumerate}
\end{theorem}

The theorem establishes a sharp dichotomy.
When spillovers are weak (low $\mu$), 
the propagation of incentives across agents is limited, so pay primarily reflects 
direct productivity $k_i$. When spillovers are strong (high $\mu$), the network 
of complementarities dominates, and an agent's position in the interaction 
structure---captured by $\beta_i$---becomes the primary determinant of compensation. We note that the spectral radius $\mu^{(m)}$ varies endogenously with the complementarity parameter $\rho^{(m)}$, fixing the rest of the environment.

Beyond delineating these regimes, \Cref{t:spilloverregimes} reveals that 
heterogeneity in productivity or complementarity is, in a sense, \emph{magnified} under 
optimal contracts: the ratio of marginal utilities depends on the \emph{square} 
of the ratio of the relevant parameters. This is a consequence (a non-obvious one, in our view) of the specific type of incentive problem we have here, where the proceeds of a single output must be shared out to compensate the agents.

The characterization in \Cref{l:optprodcomp} suggests a regression specification 
for studying compensation in teams. If $k_i$ and $\beta_i$ can be measured (or 
proxied), then optimal pay should depend on a quadratic form in these variables. 
\Cref{t:spilloverregimes} further predicts that the relative weights should vary systematically with the strength of 
complementarities: in industries or teams with stronger spillovers, the $\beta$ 
terms should receive more weight.

We provide a numerical example to illustrate how the optimal contract's behavior depends on the strength of spillovers, as described above. This effect is particularly relevant in environments where the team includes members with high standalone productivities and weak complementarities. 

We define a team of four agents, partitioned evenly into two groups $\mathcal{T}_{1}$ and $\mathcal{T}_{2}$. Agents in  group $\mathcal{T}_{1}$ have standalone productivity $k = 0.3$ and complementarity $\beta = 0.15$, whereas agents in group $\mathcal{T}_{2}$ have standalone productivity $k = 0.15$ and complementarity $\beta = 0.3$. So the first group has higher standalone productivity and weaker complementarity. The probability of success is an exponential function: \begin{align*}
    P(Y) = 1-e^{-Y} \quad \text{for all } Y \geq 0.
\end{align*} Each agent obtains utility $
    u_{i}(\tau_{i}) = \sqrt{\tau_{i}}$ from a transfer $\tau_i$.

\Cref{subfig:specradvary} plots the spectral radius of the spillover matrix induced by the optimal contract as $\rho$ varies. By symmetry, the two agents in a group $\mathcal{T}_{i}$ receive identical pay under an optimal contract. \Cref{subfig:ratpayrho} illustrates how the ratio of payments to agents in $\mathcal{T}_{1}$ relative to those in $\mathcal{T}_{2}$ varies with $\rho$. 

As depicted in \Cref{subfig:ratpayrho}, when $\rho$ is close to $0$, cross-sectional variation in pay is driven primarily by differences in standalone productivities: agents in $\mathcal{T}_1$ receive approximately sixteen times the payment of agents in $\mathcal{T}_{2}$. For intermediate values of $\rho$,  payments at the optimal contract are approximately equal across agents. For large values of $\rho$, it is optimal to pay agents in $\mathcal{T}_2$ much more than agents in $\mathcal{T}_1$.

\begin{figure*}[t!]
    \centering
\begin{subfigure}[t]{0.5\textwidth}
        \centering
        \includegraphics[width=\textwidth]{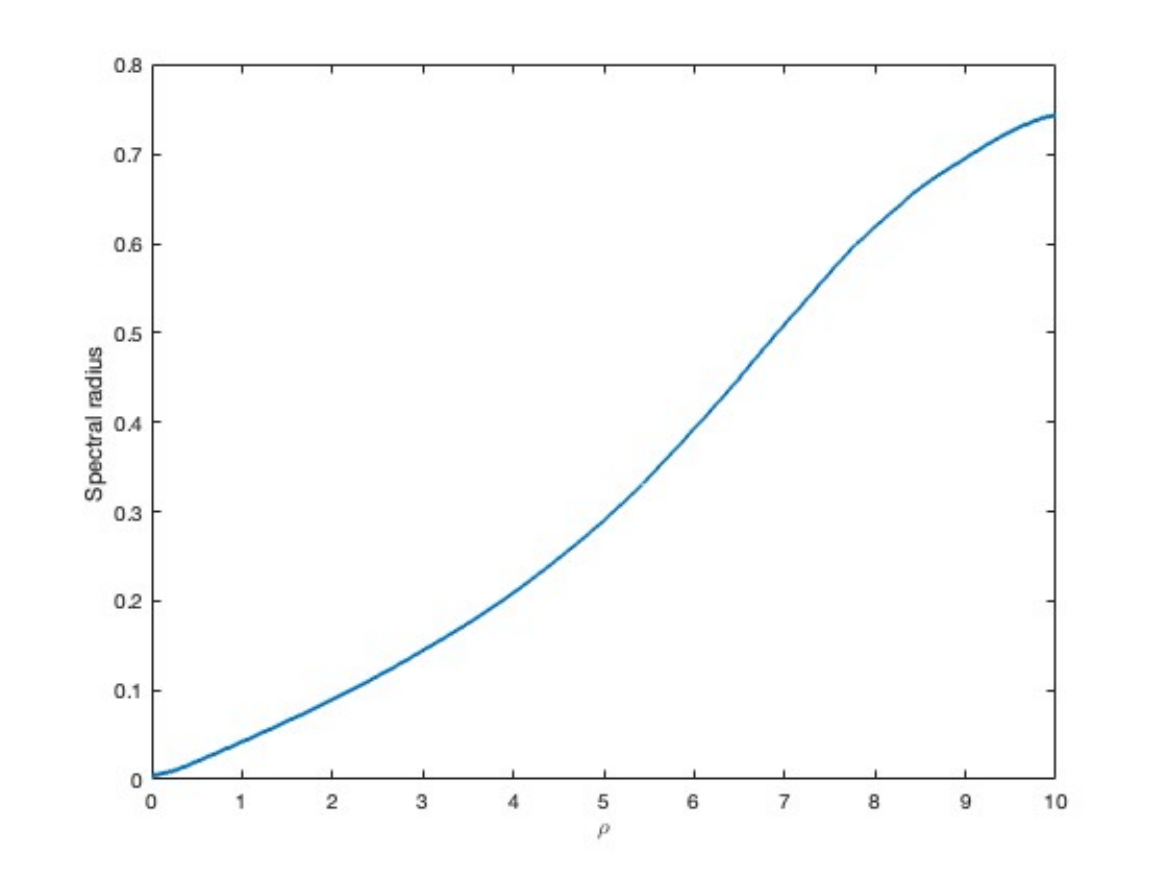}
        \caption{Spectral radius of the spillover matrix}
        \label{subfig:specradvary}   
    \end{subfigure} 
    ~
    \begin{subfigure}[t]{0.5\textwidth}
        \centering
        \includegraphics[width=\textwidth]{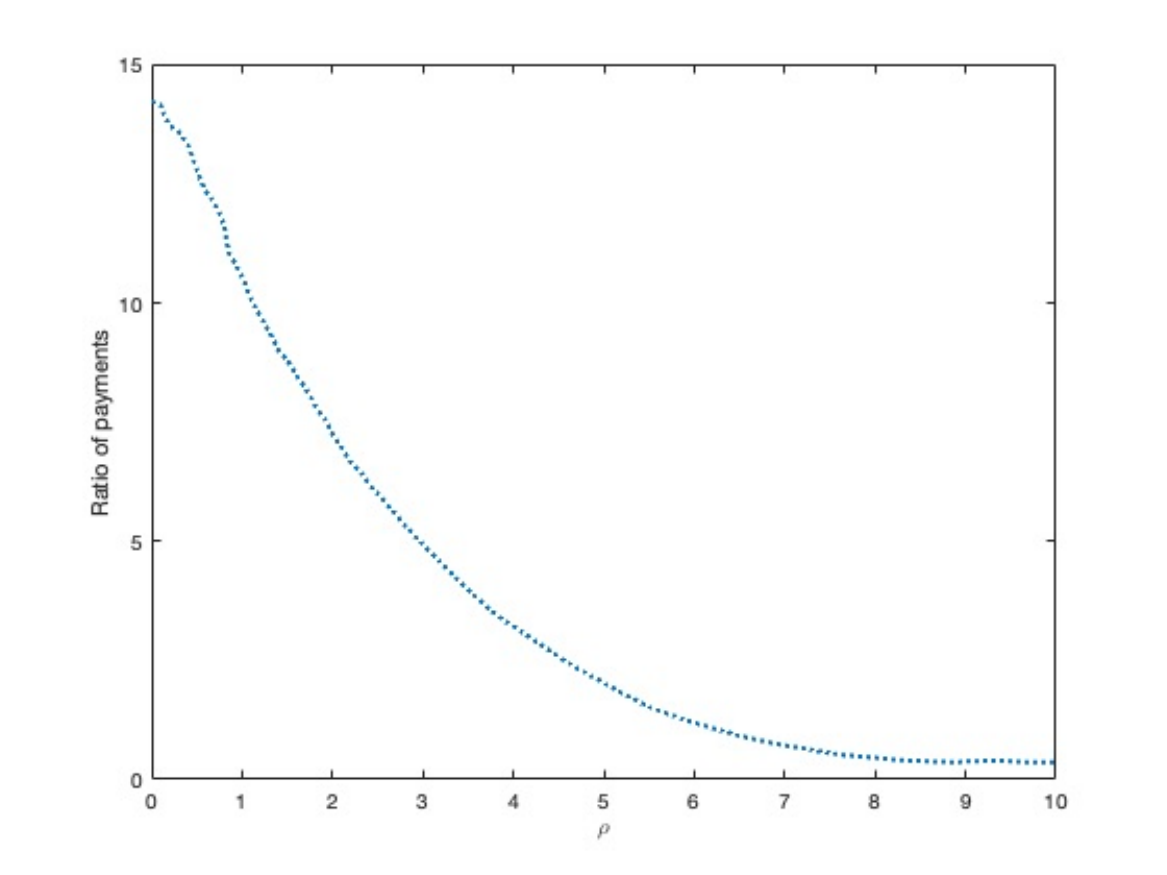}
    \caption{Ratio of payments}
    \label{subfig:ratpayrho}
    \end{subfigure} 
    \label{fig:rank1example}
    \caption{Panel (a) displays the spectral radius of the spillover matrix as $\rho$ varies over the range $[0,10]$. Panel (b) reports the ratio of payments to agents in group $\mathcal{T}_{1}$ relative to those in group $\mathcal{T}_{2}$ at the optimal contract as a function of $\rho$.}
\end{figure*}

\subsubsection{The importance of spillovers in contract design} The preceding analysis describes how the structure of an optimal contract changes with spillovers. To underscore the importance of accounting for spillovers, we now compare the principal's profits under the optimal contract to profits under a misspecified contract neglecting spillovers.

Formally, the misspecified contract is the optimal contract if $\rho = 0$ (and all other parameters are unchanged). This misspecified contract equalizes 
$    k_{i}^2u_{i}'(\tau_{i}^h)$ across agents. In contrast, as established in \Cref{l:optprodcomp}, the correctly-specified optimal contract $\bm{\tau}^*$ equalizes $$
    \left(k_{i}^2 + B_{1}k_{i}\beta_{i} + B_{2}\beta_{i}^2\right) u_{i}'(\tau_{i}^*)
 $$ across agents for some constants $B_{1}>0$ and $B_{2}>0$. Comparing profits across these contracts therefore isolates the cost to the principal of ignoring spillover effects.

We perform this comparison in the setting of the numerical example in the preceding section with global complementarity parameter $\rho=3$. We modify the probability of success to be an exponential function \begin{align*}
    P(Y) = 1-e^{-\xi Y} \quad \text{for all } Y \geq 0
\end{align*} with a flexible parameter $\xi \geq 0$ (which was set to $1$ in the previous example).

\Cref{fig:hillclimbratio} depicts the ratio of profits at an optimal contract $\bm{\tau}^*$ to profits at the benchmark contract $\bm{\tau}^h$. The optimal contract significantly outperforms the benchmark as long as failures are not too rare. The gap grows very large when the probability of success becomes a steeper function of team performance (large $\xi$).

\begin{figure}[t!]
    \centering
    \includegraphics[width=0.5\textwidth]{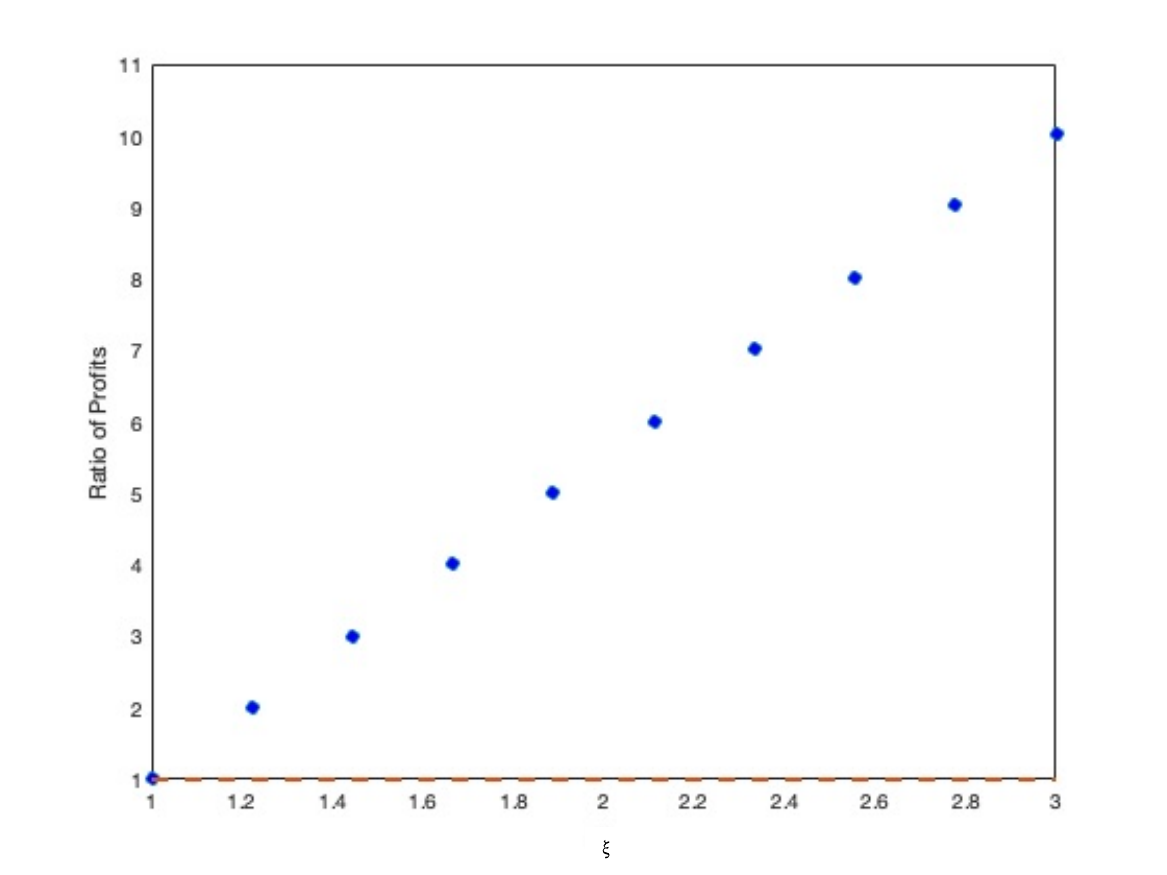}
    \caption{The ratio of profits at an optimal contract and profits at a benchmark contract which ignores spillovers. The red dashed line indicates a ratio of $1$.}
    \label{fig:hillclimbratio}
\end{figure}

\subsection{General networks}
\label{s:applyparametric}

The rank-one setting in \Cref{ss:rankone} allowed rich individual heterogeneity but restricted the complementarity structure. Ideally, we would have a similar level of insight into how the endogenous moving parts of our balance condition depend on primitives with arbitrary network structures $\bm{G}$. Unfortunately, we do not know how to obtain explicit characterizations in the linear-quadratic setting  with general networks and heterogeneous standalone productivities.

Considerable insight can be gained, however, from a natural special case. Staying within the linear-quadratic success-or-failure setting of \Cref{ex:main}, we make all standalone productivities 
equal and assume risk-neutrality, i.e. $u_i(\tau_i) = \tau_i$, for all agents. This setting, with an arbitrary $\bm{G}$, will let us get some traction on how network structure alone shapes optimal contracts.

 Consider any contract $\bm{\tau}$ with corresponding equilibrium actions $\bm{a}^*$. At equilibrium, an agent $i$ chooses a positive action if $\tau_i>0$ and an action of zero otherwise. An agent is said to be \emph{active} under a given contract $\sharesvector$ if he receives a positive payment $\tau_{i} > 0$ and \emph{inactive} otherwise.\footnote{The extensive margin leads to a rich set of questions---which agents are active at an optimal contract. We leave these questions for future work. } We will focus on characterizing the optimal contract payments and equilibrium actions among active agents.

\begin{proposition} \label{t:optsharescharacterization} In the setting of this section, suppose $\sharesvector^*$ is an optimal contract and $\eqlbactions$ and $Y^*$ are the induced equilibrium actions and team performance, respectively. The following properties are satisfied:
    \begin{enumerate}
        \item For any two active agents $i$ and $j$, we have $\productivity_i = \productivity_j$ and $\centrality_i = \centrality_j$.
           \item Balanced neighborhood actions: There is a constant $\balanceconstant'>0$ such that for all active agents $i$, we have $(\network \bm{a}^*)_i = \balanceconstant'$.
        \item Balanced neighborhood equity: There is a constant $\balanceconstant>0$ such that for all active agents $i$, we have $(\network \sharesvector^*)_i = \balanceconstant$.
    \end{enumerate}
\end{proposition}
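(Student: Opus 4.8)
The plan is to specialize the balance characterization of \Cref{t:generalmodel} to this environment and then upgrade the single scalar identity it delivers into the three claimed ones, using a nonnegative-matrix structure special to the quadratic network game. Throughout I restrict to the active set $A=\{i:\optsharesagt{i}>0\}$, which by \Cref{p:uniqueeq} coincides with $\{i:\eqlbaction{i}>0\}$. First I would evaluate the objects of \Cref{ss:notation} here. Since costs are $\actionagt{i}^2/2$ we have $\bm H=\bm I$; since $\production(\actions)=\sum_i\actionagt{i}+\tfrac12\actions^\top\network\actions$, the Hessian of $\production$ is $\network$ and the (normalized) marginal productivity is $\bm{\productivity}=\bm 1+\network\eqlbactions$, with every $\productivity_i\ge 1>0$ because $\production$ is strictly increasing; and since agents are risk neutral and are paid nothing at failure, $U_{jj}=\sum_s P_s'(Y^*)u_j(\optsharesagt{j}(s))=P'(Y^*)\optsharesagt{j}$, so the centrality vector is $\bm{\centrality}^\top=\bm{\productivity}^\top[\bm I-P'(Y^*)\sharesmatrix\network]^{-1}$, exactly the preview formula for simple success-or-failure environments.

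Next I would extract the algebraic skeleton. Rearranging an active agent's first-order condition --- equivalently, \eqref{eq:Nash_conditions} --- gives $\eqlbactions=P'(Y^*)\sharesmatrix\bm{\productivity}$ on $A$, hence $\bm{\productivity}=\bm 1+P'(Y^*)\network\sharesmatrix\bm{\productivity}$. Set $\bm K:=[\bm I-P'(Y^*)\network\sharesmatrix]^{-1}$. At the equilibrium induced by an optimal contract the spectral radius of $P'(Y^*)\network\sharesmatrix$ is below one --- this is precisely the regularity that makes the walk-count reading of $\bm{\centrality}$ meaningful, and that \Cref{p:uniqueeq} supplies --- so $\bm K=\sum_{k\ge 0}(P'(Y^*)\network\sharesmatrix)^k$ is entrywise nonnegative. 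Transposing the centrality formula and using $\network^\top=\network$, $\sharesmatrix^\top=\sharesmatrix$, I get the two identities the rest of the argument rests on: $\bm{\productivity}=\bm K\bm 1$, so the $i$th row sum of $\bm K$ equals $\productivity_i$; and $\bm{\centrality}=\bm K\bm{\productivity}$.

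Now I would apply \Cref{t:generalmodel}. \Cref{as:balancederive} holds here (the equilibrium is unique and smooth in $\sharesvector$, by \Cref{p:uniqueeq}); agents are risk neutral so $u_i'\equiv 1$, and positive payments occur only at success; hence the balance condition is simply $\productivity_i\centrality_i=\balanceconstant_1$ for all $i\in A$, a single constant. Substituting $\centrality_i=\balanceconstant_1/\productivity_i$ into $\bm{\centrality}=\bm K\bm{\productivity}$ and dividing by $\productivity_i>0$ gives
\[
\frac{\balanceconstant_1}{\productivity_i^2}=\frac{(\bm K\bm{\productivity})_i}{(\bm K\bm 1)_i}=\frac{\sum_{j\in A}K_{ij}\productivity_j}{\sum_{j\in A}K_{ij}},
\]
a $K_{i\cdot}$-weighted average of $\{\productivity_j\}_{j\in A}$, so $\balanceconstant_1/\productivity_i^2\in[\min_j\productivity_j,\ \max_j\productivity_j]$ for every $i$. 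Evaluating at an agent where $\productivity_i$ attains its maximum forces $\balanceconstant_1\ge(\min_j\productivity_j)(\max_j\productivity_j)^2$; evaluating at one where it attains its minimum forces $\balanceconstant_1\le(\min_j\productivity_j)^2(\max_j\productivity_j)$; dividing by the positive number $(\min_j\productivity_j)(\max_j\productivity_j)$ yields $\max_j\productivity_j\le\min_j\productivity_j$, so $\bm{\productivity}$ is constant on $A$. I expect this short extremal step to be the heart of the matter: it is what converts ``the \emph{product} is balanced'' into ``\emph{each factor} is balanced.'' The only other point needing care is the entrywise nonnegativity of $\bm K$ --- i.e. the spectral-radius bound at the optimal equilibrium --- but that is exactly the regularity already in force when $\bm{\centrality}$ was defined.

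Finally I would read off the three claims. Let $\bar{\productivity}$ denote the common productivity on $A$. Balance gives $\centrality_i=\balanceconstant_1/\bar{\productivity}$, also constant --- part (1). From $\productivity_i=1+(\network\eqlbactions)_i$ we get $(\network\eqlbactions)_i=\bar{\productivity}-1=:\balanceconstant'$ for all $i\in A$ --- part (2). Since $\eqlbactions=P'(Y^*)\sharesmatrix\bm{\productivity}=P'(Y^*)\bar{\productivity}\,\optsharesvector$ on $A$ while $\eqlbaction{j}=\optsharesagt{j}=0$ off $A$, we have $(\network\eqlbactions)_i=P'(Y^*)\bar{\productivity}\,(\network\optsharesvector)_i$, whence $(\network\optsharesvector)_i=\balanceconstant'/(P'(Y^*)\bar{\productivity})=:\balanceconstant$ for all $i\in A$ --- part (3). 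It remains only to note that $\balanceconstant'$ and $\balanceconstant$ are strictly positive, i.e. $\bar{\productivity}>1$, equivalently that every active agent has an active neighbor: if some active agent had none, part (1) would force $(\network\eqlbactions)_i=0$ for every active $i$, i.e. the active agents to be mutually unlinked, a configuration that cannot be optimal (and that is precisely what the extensive-margin analysis in \Cref{sec:active} rules out), so this case does not arise.
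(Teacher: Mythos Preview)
Your proposal is correct and follows essentially the same route as the paper. Both arguments specialize \Cref{t:generalmodel} to obtain $\productivity_i\centrality_i=\text{const}$, establish the twin identities $\bm{\productivity}=\bm{M}\bm{1}$ and $\bm{\centrality}=\bm{M}\bm{\productivity}$ with $\bm{M}=[\bm{I}-P'(Y^*)\network\sharesmatrix]^{-1}$ entrywise nonnegative, and then run an extremal argument to force $\bm{\productivity}$ constant; your weighted-average phrasing of that step is the same computation as the paper's direct inequalities in \Cref{l:equalizebonacichendo}, and your derivation of (c) via $\eqlbactions=P'(Y^*)\bar{\productivity}\,\optsharesvector$ is algebraically equivalent to the paper's.
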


The first part of the result states that all active agents have equal productivities and equal centralities. This is much stronger than what the balance condition guarantees, which is that $\alpha_i c_i$ is constant across active agents $i$.

The property of balanced neighborhood actions states that for each active agent $i$, the sum of actions of neighbors of $i$, weighted by the strength of $i$'s connections to those neighbors in $\network$, is equal to the same number, $\balanceconstant'$.  Similarly, the property of balanced neighborhood equity says that for each active agent $i$, the sum $\sum_j G_{ij} \tau_j$ of shares given to neighbors of $i$, weighted by the strength of $i$'s connections to those neighbors in $\network$, is equal to the same number (i.e., is not dependent on $i$).

Why must neighborhood action balance hold? Intuitively, if balance failed and an agent $i$ had strong links and neighbors taking high actions, it would be profitable to reallocate incentives from other agents to $i$. (This relies on equal standalone productivities and did not need to hold in \Cref{ss:rankone}.)  One implication of this balance is that differences in the exogenous complementarity structure are attenuated by the principal's optimization. If an agent $i$ has stronger links $G_{ij}$, then to satisfy balanced neighborhood actions, i.e. holding $\sum_j G_{ij} a_j$ constant, $i$'s neighbors must take lower actions. This ``balancing'' force diminishes the complementarities that $i$ experiences. So, while the example here has a lot of structure---and, in particular, the equality of standalone productivity plays an important role---it cleanly makes the point that comparative statics as we vary ``complementarity parameters'' in general networks are subtle.

In fact, the structure of this setting permits an explicit solution for optimal equity shares and the associated equilibrium actions, which allows us to explore these subtleties more precisely. Assume the relevant adjacency matrix $\network$ is invertible, which holds for generic weighted networks. At an optimal solution, the payments and actions of active agent $i$ satisfy $$\tau_{i}^* \propto \left(\subnetwork^{-1}\boldsymbol{1}\right)_i,$$ where $\subnetwork$ is the subnetwork of active agents for that payment allocation. This follows immediately from the balance conditions in  \Cref{t:optsharescharacterization}. Thus, the sum of row $i$ in $\widetilde{G}^{-1}$ is an index characterizing $i$'s compensation and equilibrium action. Thanks to this explicit description, we can see that these outcomes behave quite differently from standard measures of centrality such as Katz--Bonacich centrality. In particular, unlike the case with such standard measures, increasing the strength of an agent's links in $\bm{G}$ need not increase the compensation index, as we illustrate in the next section.  

\subsubsection{Comparative statics}
\label{sec:compstatics}

In this section, we explore how the optimal contract, as well as the agents' and principal's payoffs, vary with the technology of production. The simple form of the team performance function $Y$ in our environment, as well as the explicit characterization of incentives and outcomes, facilitate this exercise. The results demonstrate some interesting tensions between the principal's and the agents' interests. 

We look at how the principal's and agents' payoffs vary as the network changes. 

\begin{proposition}
\label{p:profitsnetworkperturb}
The principal's payoff is weakly increasing in the edge weight $G_{ij}=G_{ji}.$
\end{proposition}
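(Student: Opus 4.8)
The plan is to use an envelope-theorem argument. Fix the optimal contract $\sharesvector^*$ for the original network $\network$ and treat it as a feasible (though not necessarily optimal) contract when the edge weight $G_{ij}$ increases slightly. If I can show that the principal's payoff, holding the contract fixed at $\sharesvector^*$ and letting the agents re-equilibrate, is weakly increasing in $G_{ij}$, then the optimal payoff---which is an upper envelope over contracts of these fixed-contract payoffs---is also weakly increasing in $G_{ij}$. So the whole problem reduces to a monotone-comparative-statics claim about the equilibrium of the agents' game at a fixed contract.

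\medskip

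\textbf{Step 1: Monotonicity of equilibrium actions in $G_{ij}$.} Fix any contract $\sharesvector \geq 0$. By \Cref{p:uniqueeq}, the equilibrium actions solve $[\bm{I}-P'(Y^*)\sharesmatrix\network]\eqlbactions = P'(Y^*)\sharesvector$ together with $Y^* = \production(\eqlbactions)$. I would argue that increasing $G_{ij}$ weakly increases every agent's equilibrium action. This is a supermodularity/monotone-comparative-statics argument: the agents' game has strategic complementarities (the off-diagonal cross-partials of $\mathcal{U}_k$ are nonnegative because $G$ has nonnegative entries and $P$ is increasing), and raising $G_{ij}$ raises the marginal return to effort for agents $i$ and $j$, hence---by the monotonicity of the extremal equilibria under Topkis-type arguments, or more concretely by a direct Neumann-series/positivity argument on the fixed-point system---the (unique, by \Cref{p:uniqueeq}) equilibrium moves up coordinatewise. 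One has to be a little careful because $P'(Y^*)$ is itself endogenous and $P$ is concave, so the ``decay factor'' falls as $Y^*$ rises; but the net effect on actions is still upward, which one can see by writing the equilibrium as the limit of best-response iteration started from $\bm{0}$ and checking the iteration is monotone in $G_{ij}$ at every step.

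\medskip

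\textbf{Step 2: From higher actions to higher output to higher principal payoff.} Since $\production$ is strictly increasing in each argument (and, given $\sharesvector \geq 0$, also explicitly increasing in $G_{ij}$), Step 1 gives that $Y^*$ is weakly increasing in $G_{ij}$. The principal's payoff at the fixed contract $\sharesvector^*$ is $\bigl(v_1 - \sum_k \tau_k^*\bigr)P(Y^*) = (1 - \sum_k \tau_k^*)P(Y^*)$ in the success-or-failure environment; here $1 - \sum_k \tau_k^* $ is a constant (the contract is fixed) and it must be nonnegative at an optimal contract (otherwise the principal does better paying nothing), so the payoff is weakly increasing in $Y^*$, hence in $G_{ij}$. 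Combining with the envelope observation from the opening paragraph finishes the proof: $\text{(optimal payoff at }\network') \geq \text{(payoff of }\sharesvector^*\text{ at }\network') \geq \text{(payoff of }\sharesvector^*\text{ at }\network) = \text{(optimal payoff at }\network)$.

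\medskip

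\textbf{Main obstacle.} The delicate step is Step 1---establishing coordinatewise monotonicity of the equilibrium actions in $G_{ij}$ despite the endogeneity of the decay factor $P'(Y^*)$ and the concavity of $P$. The clean way around this is to avoid differentiating and instead run the monotone best-response iteration: define the map $\Phi_{\network}(\bm{a})$ sending $\bm{a}$ to the profile of best responses (which, by each agent's first-order condition $a_k = P'(\production(\bm{a}))\,(\tau_k + \tau_k(\network\bm{a})_k)$ in this environment, is an explicit increasing function of $\bm{a}$ and of $G_{ij}$), note $\Phi$ is monotone and its iterates from $\bm{0}$ converge to the unique equilibrium, and observe $\Phi_{\network'} \geq \Phi_{\network}$ pointwise when $G'_{ij} \geq G_{ij}$; monotone convergence then yields $\eqlbactions(\network') \geq \eqlbactions(\network)$. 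One should double-check that the iteration is well-defined and convergent here (it is, under the paper's maintained assumptions, essentially because $P' $ is bounded and the relevant operator is a contraction in the region of interest), but no new ideas beyond those already in \Cref{p:uniqueeq} are needed.
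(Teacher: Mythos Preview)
Your overall strategy—fix the optimal contract, show the principal's payoff at that fixed contract is weakly increasing in $G_{ij}$ after agents re-equilibrate, then take the upper envelope—is exactly the paper's. Step 2 is fine. The gap is in Step 1.

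The game is \emph{not} supermodular, so your monotone-best-response argument breaks. The off-diagonal cross-partial of agent $k$'s payoff is
\[
\frac{\partial^2 \mathcal{U}_k}{\partial a_k \partial a_j}
= \tau_k\Bigl[P''(Y)\,(1+(\network\bm{a})_k)(1+(\network\bm{a})_j) + P'(Y)\,G_{kj}\Bigr],
\]
and with $P''\le 0$ this has no determinate sign. For the same reason, your map $\Phi_{\network}$ is not monotone in $\bm{a}$; and it is not even true that $\Phi_{\network'}\ge\Phi_{\network}$ pointwise: for $k\notin\{i,j\}$, raising $G_{ij}$ leaves $(\network\bm{a})_k$ unchanged but \emph{raises} $Y(\bm{a})$ and hence \emph{lowers} $P'(Y(\bm{a}))$, so $\Phi_{\network'}(\bm{a})_k\le\Phi_{\network}(\bm{a})_k$. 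Monotone iteration from $\bm{0}$ therefore does not go through.

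The paper sidesteps this by working with the scalar fixed point for $Y^*$ rather than with best responses in actions. From \Cref{p:uniqueeq}, at a fixed contract the equilibrium solves
\[
\eqlbactions = P'(Y^*)\bigl[\bm{I}-P'(Y^*)\sharesmatrix\network\bigr]^{-1}\sharesvector,\qquad Y^*=\production(\eqlbactions),
\]
so for each \emph{given} value of $Y^*$ the action profile is entrywise increasing in $\network$ (Neumann series with nonnegative terms) and decreasing in $Y^*$ (concavity of $P$). A short contradiction then yields $Y^*(\widetilde{\network})\ge Y^*(\network)$: if $Y^*(\widetilde{\network})<Y^*(\network)$, then $P'(Y^*(\widetilde{\network}))\ge P'(Y^*(\network))$ and $\widetilde{\network}\ge\network$ force $\eqlbactions(\widetilde{\network})\ge\eqlbactions(\network)$, whence $Y^*(\widetilde{\network})=\production(\eqlbactions(\widetilde{\network}),\widetilde{\network})\ge \production(\eqlbactions(\network),\network)=Y^*(\network)$, a contradiction. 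The point is that parameterizing by the scalar $Y^*$ isolates the two opposing effects you flagged and makes the comparison clean, whereas the full best-response map mixes them and is genuinely non-monotone.
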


The principal obtains weakly higher profits from an increase in edge weights. However, it need not be the case that agents prefer such a perturbation. We will illustrate this through a network on three agents (see \Cref{l:threeagentcomplete}).  %

Without loss of generality, we can assume $G_{12} \geq G_{13} \geq G_{23}$ and choose the normalization $G_{12}=1$, so that the adjacency matrix is 
$$\network = 
\begin{bmatrix} 0 & 1 & G_{13} \\
1 & 0 & G_{23} \\
G_{13} & G_{23} & 0
\end{bmatrix}.$$

\begin{figure}
    \centering
        \begin{tikzpicture} [node distance={30mm}, agt1/.style = {draw,circle,color=blue!60,fill=blue!20}, agt2/.style = {draw,circle,color=red!50,fill=red!20}, agt3/.style = {draw,circle,color=orange!100,fill=orange!30}]
\node[agt1] (1) {$1$};
\node[agt2] (2) [below left of=1] {$2$};
\node[agt3] (3) [below right of=1] {$3$};
\draw (2) edge["$G_{12}$"] (1);
\draw (3) edge["$G_{23}$"] (2);
\draw (1) edge["$G_{13}$"] (3);
\end{tikzpicture}
    \caption{Three agent weighted graph with weights $G_{12}, G_{13},$ and $G_{23}$.}
    \label{l:threeagentcomplete}
\end{figure}

\begin{figure}
    \centering
    \begin{subfigure}{0.7\textwidth}
    \centering
    \includegraphics[width=\textwidth]{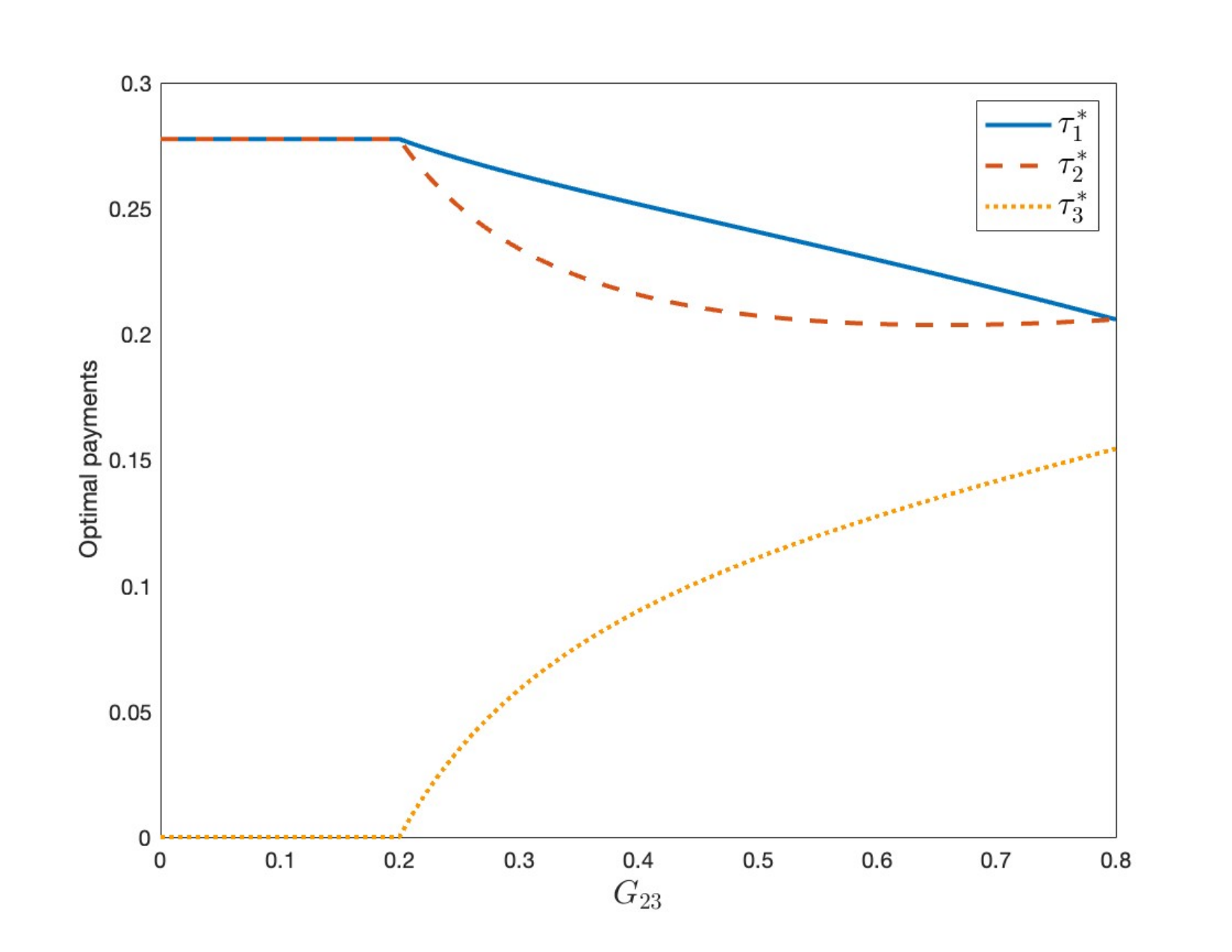}
    \caption{Optimal payments}
    \label{fig:optshares}
    \end{subfigure}
    \begin{subfigure}{0.7\textwidth}
        \centering
     \includegraphics[width=\textwidth]{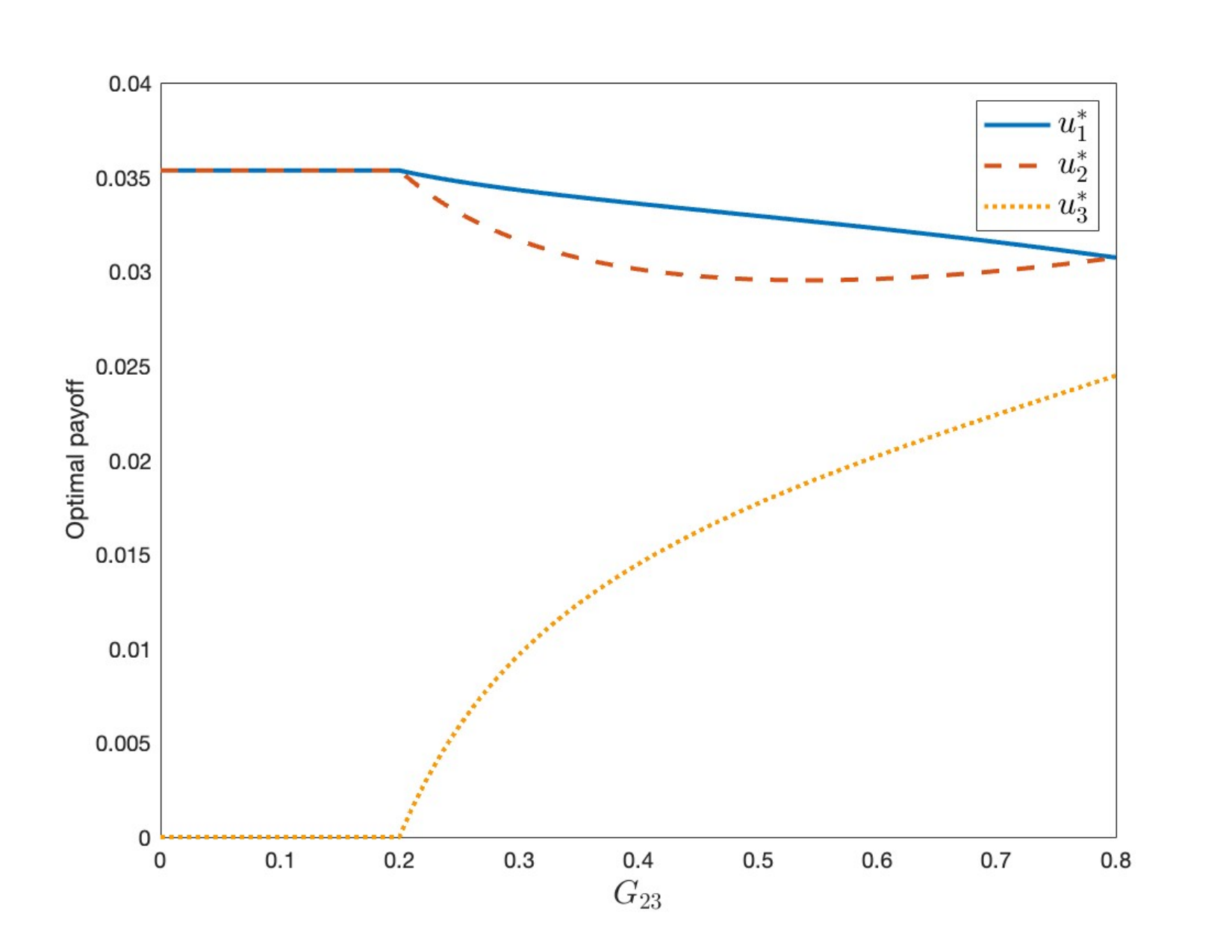}
     \caption{Payoffs under optimal payments}
     \label{fig:optpayoff}
    \end{subfigure}
    \caption{The optimal payments and resulting equilibrium payoffs as a function of the weight $G_{23}$. Here $G_{13}=0.8$ while $P(Y) = \min\{0.5Y,1\}$ (this could be replaced by a smooth, strictly concave $P$ without affecting the results.). In both diagrams, the curve corresponding to agent $1$ is the topmost (solid blue) one; the curve corresponding to agent $2$ is the second from the top (dashed red); and the curve corresponding to agent $3$ is the lowest (dotted orange) one.}
    \label{fig:optshares_payoffs}
\end{figure}

\Cref{fig:optshares_payoffs} shows the optimal payments and the corresponding equilibrium payoffs as we vary the link weight $G_{23}$, under parameter values specified in the caption. 
\Cref{fig:optshares} depicts optimal payments to each agent as a function of $G_{23}$. The payment is non-monotonic in own links: once payments are nonconstant in the strength of that link, increasing $G_{23}$ initially decreases agent $2$'s payment. The numerical example also illustrates a corresponding non-monotonicity in payoffs: strengthening one of an agent's links can decrease his equilibrium payoff under the optimal contract. \Cref{fig:optpayoff} depicts the equilibrium payoffs under optimal payments as a function of $G_{23}$. Strengthening the link between agents $2$ and $3$ can \emph{decrease} the resulting payoffs for agents $1$ and $2$. 

To give some intuition, observe that we are strengthening a link $G_{23}$ between two agents with a common neighbor $1$. When we do so, we may want to increase incentives to one of these neighbors. It turns out this agent is $3$, who is now connected enough to include in the team. But the balanced neighborhood equity condition requires that as we increase $\tau_3,$ we must keep the total equity $G_{12} \tau_2 + G_{13}\tau_3$ of $1$'s neighbors equal to the total equity in other neighborhoods. So to allocate more shares to agent $3$ while maintaining the balance condition, the principal must either take shares away from agent $2$ or give out substantially more shares to agents. The example demonstrates that the principal can prefer the former to the latter. More generally, we expect such a force to push toward non-monotonicities when we strengthen a link between two agents with one or more common neighbors (though there can be other countervailing forces).

This finding contrasts with an intuition that one might have from the network games literature, that agents are better off from becoming more central. Under fixed payments, all agents' payoffs are monotone in the network. In the present setting, however, agent $2$ can benefit from weakening one of his links. There is therefore a tension between the network formation incentives of the principal and the agents. Agents may not be willing to form links that would benefit the principal or the team as a whole, even if link formation has no technological cost.

\section{Production functions with broader complementarities}\label{s:CES}

The balance condition applies beyond network games. We briefly illustrate 
with two standard production functions where complementarities operate at the team level rather than through bilateral network links. We show that our framework yields 
simple characterizations of optimal contracts in these classical settings as well and that these characterizations have implications for the amount of pay dispersion in teams.

We derive optimal contracts for Cobb-Douglas and constant elasticity of substitution production functions. The solutions describe how much incentives should be concentrated on more productive agents, and our main finding is that the optimal contract exhibits more unequal compensation when agents are more substitutable.

We work again in the simple success-or-failure environment with risk-neutral agents.\footnote{The assumption that all agents have the same quadratic effort cost $C_i(a_i) = a_i^2/2$ is not essential: it is straightforward to extend the subsequent analysis to heterogeneous quadratic costs by rescaling actions.} For Cobb-Douglas production, team performance is $$Y(\bm{a}) = \prod_{i=1}^n a_i^{\gamma_i},$$
where the factor shares $\gamma_i$ can differ across agents. For constant elasticity of substitution (CES) production, team performance is $$Y(\bm{a}) = \left(\sum_{i=1}^n \gamma_i a_i^{\rho}\right)^{\kappa/\rho}$$
for a non-zero $\rho$. The parameter $\kappa$ captures returns to scale and the factor shares $\gamma_i>0$ can again differ across agents. Finally, the elasticity of substitution between agents' efforts is $\frac{1}{1-\rho}$.

We will characterize optimal contracts when the first-order approach is valid, and can establish sufficient conditions for this:

\begin{fact} \label{fact:cesstrictstable}
\Cref{as:balancederive} holds if $P(\cdot)$ is concave and:
\begin{enumerate}[(i)]
    \item For Cobb-Douglas, $\gamma_i \leq 1$ for all $i$ and $\sum_{i=1}^n \gamma_i < 2$,
    \item For CES, $\rho \leq 1$ and $\kappa \in (0,1]$.
\end{enumerate}
\end{fact}

One can directly characterize the optimal contract by applying \Cref{t:generalmodel} (see \cite{zuo2024optimizing}, which solves a closely related example in the Cobb-Douglas case). We take an alternate approach of transforming our problem to a simpler one and then applying \Cref{t:generalmodel}.

For Cobb-Douglas, consider an equivalent transformed problem in which we replace $Y$ with $\widetilde{Y} = \log(Y)$ and $P(Y)$ with $\widetilde{P}(\widetilde{Y}) = P(\exp(\widetilde{Y}))$. The problem is now separable:
$$\widetilde{Y}(\bm{a}) = \sum_{i=1}^n \gamma_i \log(a_i).$$
For CES, consider a transformed problem in which we replace $Y$ with $\widetilde{Y} = \frac{1}{\rho} Y^{\frac{\rho}{\kappa}}$ and $P(Y)$ with $\widetilde{P}(\widetilde{Y}) = P((\rho \widetilde{Y})^{\frac{\kappa}{\rho}})$. Analogous to Cobb-Douglas, the problem is separable:
\begin{equation}\widetilde{Y}(\bm{a}) = \frac{1}{\rho}\sum_{i=1}^n \gamma_i a_i^{\rho}. \label{eq:Y_transformed} \end{equation} These transformations do not change the optimal contract or the corresponding equilibrium actions.\footnote{For Cobb--Douglas production and CES production with $\rho<0$, we restrict attention to positive action profiles, the domain on which the transformations below are defined.} But because the transformed problems have additively separable team performances, we can now easily calculate agents' productivities and centralities.

This lets us obtain the following characterizations of optimal contracts:
\begin{proposition} \label{p:equityces} In the setting of this section, suppose \Cref{as:balancederive} holds.
    \begin{enumerate}
        \item 
For Cobb-Douglas production, the optimal payments $\tau^*_i$ are proportional to $\gamma_i$.
\item For CES production, the non-zero optimal payments\footnote{If $\rho=1$, then only agents with maximal $\gamma_i$ receive positive transfers. See \Cref{fact:cesrhoone}.}  $\tau^*_i$ are proportional to $\gamma_i^{\frac{1}{1-\rho}}$.   \end{enumerate}
\end{proposition}

The Cobb-Douglas solution is intuitive: the principal pays more to agents with higher factor shares. Turning to CES, we see that pay dispersion under the optimal contract depends on the elasticity of substitution $\frac{1}{1-\rho}$. When this elasticity is high, so agents' efforts are highly substitutable, the optimal contract pays agents with higher factor shares much more. Conversely when the elasticity of substitution is low, so agents' efforts are highly complementary, there is little pay dispersion.

Several limiting cases of the CES result are illustrative. In the limit  as $\rho \rightarrow -\infty$ with $\sum_i \gamma_i=1$, the production function converges to the Leontief production function $Y(\bm{a})=(\min_i a_i)^\kappa$. In this case, it is optimal to pay all agents equally because inducing one agent to take a higher action than others does not improve team performance. In the limit as $\rho \rightarrow 0$, again assuming $\sum_i\gamma_i=1$, the production function approaches the Cobb--Douglas, where output is quite responsive to each agent's own effort, and contributes to transformed output in proportion to $\gamma_i$. Then compensation turns out to be linear in $\gamma_i$ as in part (a) of the proposition. Finally, note that for any $\rho$, the relative compensation of agents does not depend on the returns to scale $\kappa$, but the total amount paid to agents when the project succeeds (and thus the absolute payments to individual agents) does depend on $\kappa$.

The Cobb-Douglas and CES cases show that the balance condition provides explicit characterizations of optimal incentives for production functions that have been difficult to analyze in the network games literature. A key point is that because we assume outcomes in our model depend on a one-dimensional team performance (via functions $P_{s}(Y)$ that can be quite general), we can apply monotone transformations that simplify the relevant spillovers between agents.

\section{Concluding discussion}
\label{s:conclusion}

We have studied an incentive design problem for a team whose members contribute via non-contractable effort. We investigate how optimal contracts depend on the team's production function. Our main result is a necessary condition for contract optimality. We show that optimal contracts must satisfy a balance condition across agents receiving positive incentive pay. 

In applications, our analysis offers a new perspective on compensation design that comes from focusing on spillovers. The overall theme is deriving interpretable insights into the roles of standalone capabilities and collaboration. We close with a few remarks on main features of this model and potentially important variations on them.

\subsection{Participation constraints}

The setting we have presented has no formal outside options; instead of participation constraints, it uses the constraint of limited liability to make the contracting problem nontrivial. 

Nevertheless, one could take the same model of production to a setting with standard participation constraints, where some  participation constraints would bind at the principal's optimum. The analogue of  \Cref{as:balancederive} would have to be made, not for all perturbations, but in a constrained space of contract perturbations where these constraints continue to be respected (as in \Cref{ss:indifferences}). Moreover, the balance condition would need to be augmented with agent-specific Lagrange multipliers on the binding constraints. If one agent works harder, this relaxes or tightens the constraints for other agents, and the balance condition would involve the multipliers on the constraints to account for these spillovers.  We leave a full treatment of this case to future work.

\subsection{Robustness to measurement error}

Our analysis has assumed that the principal knows the network of incentive
spillovers. In practice this network would be estimated with error. Suppose a principal takes measured spillover parameters at face value, checks the balance condition at a given contract, and perturbs incentive pay when it fails.
\Cref{a:centralityestimate} takes up the question of when this strategy can work.

It turns out that certain spectral statistics of the network play a subtle role in the answer. When
spillovers are moderate---the spectral norm of the effective spillover matrix
is bounded away from its critical value---the balance condition is robust to measurement error as long as this error is not too large. By
contrast, when spillovers are strong and the team is nearly partitioned into
weakly connected subteams, even arbitrarily small measurement error can mislead the principal about which subteam
is more responsive to incentives. We can rule out this issue when the network's spectral gap, a standard measure of segregation
\citep{vonLuxburg2007,GolubJackson2012}, is bounded away from zero. This extension connects our framework to the econometrics of
networks and to recent work on interventions with noisy spillover estimates
\citep{GaleottiGolubGoyalTalamasTamuz2024}.

\subsection{Other objectives} While we have focused on a standard principal concerned with maximizing profits, the analysis of how contract perturbations affect team performance is equally relevant for understanding other natural principal objectives. For example, the objective could be maximizing a (weighted) sum of workers' utilities. The key insight is that, under suitable conditions, the first-order conditions of such an optimization problems involve only the local comparative statics of the equilibrium, and the incentive spillover forces we discuss would be equally relevant. 

\subsection{The role of a one-dimensional outcome} An important simplifying assumption throughout the analysis is that actions influence outcomes only via a one-dimensional team performance $Y(\bm{a})$. Moving beyond this assumption to settings where probabilities of outcomes depend in an arbitrary way on the full action profile is a natural direction. Though a first-order approach may continue to be fruitful, establishing conditions for its validity is likely challenging. Subject to that, we expect that generalizations of our balance conditions would hold. However, the outcome distribution may now provide more fine-grained information about an agent's effort, and the principal will use this information to design optimal incentives (as in, e.g., \cite{holmstrom1982moral} and \cite{legros1993efficient}). The generalized balance conditions must incorporate this along with agents' centralities and productivities. So the balance conditions would be suitably adjusted, and rankings such as those in \Cref{s:generalmarginalutils} would also be affected by these informational considerations.

{\footnotesize \linespread{1.0}
\bibliographystyle{ecta}
\bibliography{references.bib}

\begin{thebibliography}{28}
\newcommand{\enquote}[1]{``#1''}
\expandafter\ifx\csname natexlab\endcsname\relax\def\natexlab#1{#1}\fi

\bibitem[\protect\citeauthoryear{Ballester, Calv\'{o}~Armengol, and
  Zenou}{Ballester et~al.}{2006}]{BCZ-06}
\textsc{Ballester, C., A.~Calv\'{o}~Armengol, and Y.~Zenou} (2006):
  \enquote{Who's who in networks. Wanted: The key player,} \emph{Econometrica,
  74, 1403-1417}.

\bibitem[\protect\citeauthoryear{Belhaj and Dero{\"\i}an}{Belhaj and
  Dero{\"\i}an}{2018}]{belhaj2018targeting}
\textsc{Belhaj, M. and F.~Dero{\"\i}an} (2018): \enquote{Targeting the key
  player: An incentive-based approach,} \emph{Journal of Mathematical
  Economics}, 79, 57--64.

\bibitem[\protect\citeauthoryear{Bloch}{Bloch}{2016}]{bloch2016targeting}
\textsc{Bloch, F.} (2016): \enquote{Targeting and pricing in social networks,}
  Oxford University Press, 504--542.

\bibitem[\protect\citeauthoryear{Bloch, Jackson, and Tebaldi}{Bloch
  et~al.}{2023}]{bloch2023centrality}
\textsc{Bloch, F., M.~O. Jackson, and P.~Tebaldi} (2023): \enquote{Centrality
  measures in networks,} \emph{Social Choice and Welfare}, 61, 413--453.

\bibitem[\protect\citeauthoryear{Candogan, Bimpikis, and Ozdaglar}{Candogan
  et~al.}{2012}]{candogan2012optimal}
\textsc{Candogan, O., K.~Bimpikis, and A.~Ozdaglar} (2012): \enquote{Optimal
  pricing in networks with externalities,} \emph{Operations Research}, 60,
  883--905.

\bibitem[\protect\citeauthoryear{Carroll}{Carroll}{2015}]{carroll2015robustness}
\textsc{Carroll, G.} (2015): \enquote{Robustness and linear contracts,}
  \emph{American Economic Review}, 105, 536--563.

\bibitem[\protect\citeauthoryear{Claveria-Mayol, Milán, and
  Oviedo-Dávila}{Claveria-Mayol et~al.}{2024}]{CMO}
\textsc{Claveria-Mayol, M., P.~Milán, and N.~Oviedo-Dávila} (2024):
  \enquote{Incentive Contracts and Peer Effects in the Workplace,} .

\bibitem[\protect\citeauthoryear{Dai and Toikka}{Dai and
  Toikka}{2022}]{dai2022robust}
\textsc{Dai, T. and J.~Toikka} (2022): \enquote{Robust incentives for teams,}
  \emph{Econometrica}, 90, 1583--1613.

\bibitem[\protect\citeauthoryear{DeMarzo and Kaniel}{DeMarzo and
  Kaniel}{2023}]{demarzo2023contracting}
\textsc{DeMarzo, P.~M. and R.~Kaniel} (2023): \enquote{Contracting in peer
  networks,} \emph{The Journal of Finance}, 78, 2725--2778.

\bibitem[\protect\citeauthoryear{D\"{u}tting, Ezra, Feldman, and
  Kesselheim}{D\"{u}tting et~al.}{2023}]{dutting2023multiagent}
\textsc{D\"{u}tting, P., T.~Ezra, M.~Feldman, and T.~Kesselheim} (2023):
  \enquote{Multi-agent Contracts,} in \emph{Proceedings of the 55th Annual ACM
  Symposium on Theory of Computing}, New York, NY, USA: Association for
  Computing Machinery, STOC 2023, 1311–1324.

\bibitem[\protect\citeauthoryear{D\"{u}tting, Ezra, Feldman, and
  Kesselheim}{D\"{u}tting et~al.}{2025}]{ezramultiagentcombinatorial}
---\hspace{-.1pt}---\hspace{-.1pt}--- (2025): \enquote{Multi-agent
  combinatorial contracts,} in \emph{ACM-SIAM Symposium on Discrete Algorithms
  (forthcoming)}.

\bibitem[\protect\citeauthoryear{Ezra, Feldman, and Schlesinger}{Ezra
  et~al.}{2024}]{ezra2023inapproximability}
\textsc{Ezra, T., M.~Feldman, and M.~Schlesinger} (2024): \enquote{{On the
  (In)approximability of Combinatorial Contracts},} in \emph{15th Innovations
  in Theoretical Computer Science Conference (ITCS 2024)}, 44:1--44:22.

\bibitem[\protect\citeauthoryear{Gaitonde, Kleinberg, and Tardos}{Gaitonde
  et~al.}{2020}]{gaitonde2020adversarial}
\textsc{Gaitonde, J., J.~Kleinberg, and E.~Tardos} (2020): \enquote{Adversarial
  perturbations of opinion dynamics in networks,} in \emph{Proceedings of the
  21st ACM Conference on Economics and Computation}, 471--472.

\bibitem[\protect\citeauthoryear{Galeotti, Golub, and Goyal}{Galeotti
  et~al.}{2020}]{galeotti2020targeting}
\textsc{Galeotti, A., B.~Golub, and S.~Goyal} (2020): \enquote{Targeting
  interventions in networks,} \emph{Econometrica}, 88, 2445--2471.

\bibitem[\protect\citeauthoryear{Galeotti, Golub, Goyal, Talam{\`a}s, and
  Tamuz}{Galeotti et~al.}{2025}]{GaleottiGolubGoyalTalamasTamuz2024}
\textsc{Galeotti, A., B.~Golub, S.~Goyal, E.~Talam{\`a}s, and O.~Tamuz} (2025):
  \enquote{Robust Market Interventions,} {arXiv}:2411.03026.

\bibitem[\protect\citeauthoryear{Golub and Jackson}{Golub and
  Jackson}{2012}]{GolubJackson2012}
\textsc{Golub, B. and M.~O. Jackson} (2012): \enquote{How Homophily Affects the
  Speed of Learning and Best-Response Dynamics,} \emph{The Quarterly Journal of
  Economics}, 127, 1287--1338.

\bibitem[\protect\citeauthoryear{Grossman and Hart}{Grossman and
  Hart}{1983}]{grossman1983implicit}
\textsc{Grossman, S.~J. and O.~D. Hart} (1983): \enquote{Implicit contracts
  under asymmetric information,} \emph{The quarterly journal of economics},
  123--156.

\bibitem[\protect\citeauthoryear{Holmström}{Holmström}{1979}]{holmstrom1979moralhazard}
\textsc{Holmström, B.} (1979): \enquote{Moral Hazard and Observability,}
  \emph{The Bell Journal of Economics}, 10, 74--91.

\bibitem[\protect\citeauthoryear{Holmström}{Holmström}{1982}]{holmstrom1982moral}
---\hspace{-.1pt}---\hspace{-.1pt}--- (1982): \enquote{Moral hazard in teams,}
  \emph{The Bell Journal of Economics}, 324--340.

\bibitem[\protect\citeauthoryear{Itoh}{Itoh}{1991}]{itoh1991incentives}
\textsc{Itoh, H.} (1991): \enquote{Incentives to help in multi-agent
  situations,} \emph{Econometrica: Journal of the Econometric Society},
  611--636.

\bibitem[\protect\citeauthoryear{Jewitt}{Jewitt}{1988}]{jewitt1988justifying}
\textsc{Jewitt, I.} (1988): \enquote{Justifying the first-order approach to
  principal-agent problems,} \emph{Econometrica: Journal of the Econometric
  Society}, 1177--1190.

\bibitem[\protect\citeauthoryear{Legros and Matsushima}{Legros and
  Matsushima}{1991}]{legrosmatsushima}
\textsc{Legros, P. and H.~Matsushima} (1991): \enquote{{Efficiency in
  partnerships},} \emph{Journal of Economic Theory}, 55, 296--322.

\bibitem[\protect\citeauthoryear{Legros and Matthews}{Legros and
  Matthews}{1993}]{legros1993efficient}
\textsc{Legros, P. and S.~A. Matthews} (1993): \enquote{Efficient and
  nearly-efficient partnerships,} \emph{The Review of Economic Studies}, 60,
  599--611.

\bibitem[\protect\citeauthoryear{Mookherjee}{Mookherjee}{1984}]{mookherjee1984optimal}
\textsc{Mookherjee, D.} (1984): \enquote{Optimal incentive schemes with many
  agents,} \emph{The Review of Economic Studies}, 51, 433--446.

\bibitem[\protect\citeauthoryear{Rogerson}{Rogerson}{1985}]{rogerson1985first}
\textsc{Rogerson, W.~P.} (1985): \enquote{The first-order approach to
  principal-agent problems,} \emph{Econometrica: Journal of the Econometric
  Society}, 1357--1367.

\bibitem[\protect\citeauthoryear{Shi}{Shi}{2022}]{shi2022optimal}
\textsc{Shi, X.} (2022): \enquote{{Optimal compensation scheme in networked
  organizations},} \emph{{Working Paper}}.

\bibitem[\protect\citeauthoryear{von Luxburg}{von
  Luxburg}{2007}]{vonLuxburg2007}
\textsc{von Luxburg, U.} (2007): \enquote{A Tutorial on Spectral Clustering,}
  \emph{Statistics and Computing}, 17, 395--416.

\bibitem[\protect\citeauthoryear{Zuo}{Zuo}{2024}]{zuo2024optimizing}
\textsc{Zuo, S.} (2024): \enquote{Optimizing Contracts in Principal-Agent Team
  Production,} \emph{Working Paper}.

\end{thebibliography}
}
\appendix

\newpage

\section{Omitted proofs}
\label{a:detailedproofs}

This section proves all results from  \Cref{sec:generalmodel}, \Cref{s:generalintensivemargin}, and \Cref{s:applications}. Proofs from \Cref{s:CES} are in the Online Appendix.

\subsection{Proof of \Cref{fact:opt_exist}} For contracts in this compact space, agent best responses will also be contained in a compact space. We want to show the supremum $\pi$ of attainable principal contracts is attainable. Choose a sequence of contracts $\bm{\tau}^{(i)}$ giving payoffs converging to $\pi$ and let $(\bm{a^*})^{(i)}$ be the corresponding equilibria. (If $\pi$ is zero, then the contract giving zero payments under all outcomes is optimal. So we can assume that $\pi$ is positive and thus an equilibrium exists under $\bm{\tau}^{(i)}$ for $i$ sufficiently large.) By compactness, we can choose a subsequence of $\bm{\tau}^{(i)}$ such that $\bm{\tau}^{(i)} \rightarrow \bm{\tau}^*$ and $(\bm{a}^*)^{(i)} \rightarrow \bm{a}^*$. By upper-hemicontinuity of equilibrium, the action profile $\bm{a}^*$ is an equilibrium under $\bm{\tau}^*$. Since the principal's payoffs are continuous in the contract and actions, the contract $\bm{\tau}^*$ attains the optimal payoff $\pi$.

\subsection{Proof of \autoref{l:changeteamperformance}}
We begin by observing that under any optimal contract, \begin{align*}
    a^*_i(\bm{\tau}^*) =0 \iff \tau^*_{i}(s) = 0, \quad \text{for all } s \in \mathcal{S}.
\end{align*} That is, at an optimal contract, an agent $i$ exerts zero effort at equilibrium if and only if it does not receive a payment from the contract at any outcome.

To see why the observation holds, suppose at an optimal contract $\optsharesvector$, there is an agent $i$ who receives positive payment $\tau_{i}^*(s) > 0$ under an outcome $s$ but chooses action $a_{i}^*(\optsharesvector) = 0$. Then the principal receives a strictly higher profit under the contract $\sharesvector^\dag$ which sets $\tau_{i}^\dag(s) = 0$ for all outcomes $s$ and is otherwise equal to $\optsharesvector$. At this contract, agent $i$ chooses action $a_{i}=0$ for any profile of actions $\bm{a}_{-i}$ played by other agents. Thus, the equilibrium $\eqlbactions(\optsharesvector)$ under contract $\optsharesvector$ is also an equilibrium profile under contract $\sharesvector^\dag$. Since outcome $s$ occurs with positive probability under any team performance, the principal's expected payments to agents are strictly higher under $\optsharesvector$ than $\sharesvector^\dag$. The other direction is straightforward.

With the observation established, we analyze the change in team performance as the transfers to agents are perturbed. Consider contract $\bm{\tau}$ and any agent $i$ for which there exists an outcome $s'$ such that $\tau_{i}(s') > 0$. For any outcome $s$, consider marginally increasing $\tau_{i}(s)$. The change induced by this perturbation is \begin{equation} \label{eq:changeteamperformance}
    \frac{\partial Y}{\partial \tau_{i}(s)} =\nabla Y(\eqlbactions)^{T} \cdot \frac{\partial \eqlbactions}{\partial \tau_{i}(s)},
\end{equation} where $\eqlbactions$ is the equilibrium action profile for the contract $\sharesvector$. The substance of the proof is analyzing the second term on the right-hand side of (\ref{eq:changeteamperformance}).  

First, consider any agent $j$ such that $a_{j}^*(\sharesvector)=0$. The change in their equilibrium action due to an increase in $\tau_{i}(s)$ is zero. The utility to agent $j$ at contract $\sharesvector$ is \begin{equation*}
    \mathcal{U}_{j} = \sum_{s' \in \mathcal{S}}P_{s'}(Y^*)u_{j}(\tau_{j}(s')) - C_{j}(a_{j}).
\end{equation*} At contract $\sharesvector$, agent $j$ receives no payment under any outcome, so he has a unique best response of $a_{j}^*=0$.

It is thus without loss to analyze the change in equilibrium actions of agents $j$ that take a strictly positive action in profile $\eqlbactions$, that is, $a_{j}^* > 0$. The analysis from here on focuses on such agents, overloading notation to represent the actions of these agents by $\eqlbactions$.

We will show that the change in equilibrium actions $\eqlbactions$ as the transfer $\tau_{i}(s)$ increases is \begin{equation}
    \label{eq:eqlbactionsderivative}
    \frac{\partial \bm{a}^*}{\partial \tau_{i}(s)} = \left[\bm{I}-\bm{H}^{-1} \bm{U}\bm{G} \right]^{-1} \bm{H}^{-1} \begin{bmatrix}
                 \bm{0}  \\
                 \frac{\partial Y}{\partial a_{i}}P'_{s}(Y)u_{i}'(\tau_{i}(s)) \\
                 \bm{0}
            \end{bmatrix} + \frac{\partial Y}{\partial \tau_{i}(s)}\left[\bm{H}-\bm{U}\bm{G}\right]^{-1}\bm{d}, 
\end{equation} for some vector $\bm{d}$.

Consider the equilibrium action profile $\eqlbactions$. For an agent $j$, the first-order conditions imply $a_{j}^*$ must solve the equation \begin{equation} \label{eq:bestresponsegeneral}
    C_{j}'(a_{j}) = \left(\sum_{s' \in \mathcal{S}}P_{s'}'(Y)u_{j}(\tau_{j}(s'))\right) \frac{\partial Y}{\partial a_{j}}.
\end{equation} 

To arrive at (\ref{eq:eqlbactionsderivative}), let us implicitly differentiate (\ref{eq:bestresponsegeneral}) with respect to $\tau_{i}(s)$. The resulting expression depends on the identity of agent $j$ in comparison to $i$, the agent whose payment is perturbed. For all $j \neq i$, \begin{multline} \label{eq:implicitgeneraljneqi}
    C_{j}''(a_{j}^*)\frac{\partial a^*_{j}}{\partial \tau_{i}(s)} = \left(\sum_{s' \in \mathcal{S}}P_{s'}'(Y)u_{j}(\tau_{j}(s'))\right) \left(\sum_{k=1}^{n}\frac{\partial^2 Y}{\partial a_{k} \partial a_{j}} \cdot \frac{\partial a_{k}^*}{\partial \tau_{i}(s)}\right) + \\ \frac{\partial Y}{\partial a_{j}} \cdot \frac{\partial Y}{\partial \tau_{i}(s)} \cdot \sum_{s' \in \mathcal{S}}P_{s'}''(Y)u_{j}(\tau_{j}(s')).
\end{multline} On the other hand, for $j=i$,  \begin{multline} \label{eq:implicitgeneraljeqi}
    C_{j}''(a_{j}^*)\frac{\partial a^*_{j}}{\partial \tau_{i}(s)} = \left(\sum_{s' \in \mathcal{S}}P_{s'}'(Y)u_{j}(\tau_{j}(s'))\right) \left(\sum_{k=1}^{n}\frac{\partial^2 Y}{\partial a_{k} \partial a_{j}} \cdot \frac{\partial a_{k}^*}{\partial \tau_{i}(s)}\right) \\ + \frac{\partial Y}{\partial a_{j}}P'_{s}(Y)u_{j}'(\tau_{j}(s)) +  \frac{\partial Y}{\partial a_{j}} \cdot \frac{\partial Y}{\partial \tau_{i}(s)} \sum_{s' \in \mathcal{S}}P_{s'}''(Y)u_{j}(\tau_{j}(s')). \end{multline}  We can combine (\ref{eq:implicitgeneraljneqi}) and (\ref{eq:implicitgeneraljeqi}) to write the resulting expression in vector form below \begin{align*}
    \frac{\partial \bm{a}^*}{\partial \tau_{i}(s)} &= \left[\bm{H}-\bm{U}\bm{G}\right]^{-1} \begin{bmatrix}
                 \bm{0}  \\
                 \frac{\partial Y}{\partial a_{i}}P'_{s}(Y)u_{i}'(\tau_{i}(s)) \\
                 \bm{0}
            \end{bmatrix} + \frac{\partial Y}{\partial \tau_{i}(s)}\left[\bm{H}-\bm{U}\bm{G}\right]^{-1}\bm{d},
\end{align*} where $\bm{d}$ is a vector with $j^{\text{th}}$ element defined as \begin{align*}
    d_{j} := \frac{\partial Y}{\partial a_{j}} \cdot \sum_{s' \in \mathcal{S}}P_{s'}''(Y)u_{j}(\tau_{j}(s')).
\end{align*} The expression in (\ref{eq:eqlbactionsderivative}) follows.

Substituting (\ref{eq:eqlbactionsderivative}) into (\ref{eq:changeteamperformance}), the change in team performance as the transfer $\tau_{i}(s)$ increases is\begin{multline*}
    \frac{\partial Y}{\partial \tau_{i}(s)} = \nabla Y(\eqlbactions)^{T}  \left[\bm{I}-\bm{H}^{-1} \bm{U} \network \right]^{-1} \bm{H}^{-1}\begin{bmatrix}
                 \bm{0}  \\
                 \frac{\partial Y}{\partial a_{i}}P'_{s}(Y)u_{i}'(\tau_{i}(s)) \\
                 \bm{0}
            \end{bmatrix} + \\ \frac{\partial Y}{\partial \tau_{i}(s)}\nabla Y(\eqlbactions)^{T}\left[\bm{H}-\bm{U}\bm{G}\right]^{-1}\bm{d}.
\end{multline*} Applying the definitions of $\productivity_{i}$ and $\centrality_{i}$, we obtain \begin{align} \label{eq:implicitteamperformance}
    \frac{\partial Y}{\partial \tau_{i}(s)} = \productivity_{i}\centrality_{i}P_{s}'(Y)\cdot \frac{u_{i}'(\tau_{i}(s))}{C''_i(a_i^*)} + \frac{\partial Y}{\partial \tau_{i}(s)}\nabla Y(\eqlbactions)^{T}\left[\bm{H}-\bm{U}\bm{G}\right]^{-1}\bm{d}.
\end{align}

To complete the proof, we have two cases depending on whether $\nabla Y(\bm{a}^*)^{\top}\left[\bm{H}-\bm{U}\bm{G}\right]^{-1}\bm{d}$ is equal to $1$. First suppose this term does not equal $1$. Rearranging,
$$    \frac{\partial Y}{\partial \tau_{i}(s)} = \frac{1}{1-\nabla Y(\eqlbactions)^{T}\left[\bm{H}-\bm{U}\bm{G}\right]^{-1}\bm{d}} \cdot \productivity_{i}\centrality_{i}P_{s}'(Y)\cdot \frac{u_{i}'(\tau_{i}(s))}{C''_i(a_i^*)}  
.$$
Setting $l=\frac{1}{1-\nabla Y(\eqlbactions)^{T}\left[\bm{H}-\bm{U}\bm{G}\right]^{-1}\bm{d}}$ and observing $l$ does not depend on $i$, we obtain the desired result.

In the remaining case $\nabla Y(\bm{a}^*)^{\top}\left[\bm{H}-\bm{U}\bm{G}\right]^{-1}\bm{d}=1$, (\ref{eq:implicitteamperformance}) simplifies to $\alpha_{i}c_{i}P_{s}'(Y)\cdot \frac{u_{i}'(\tau_{i}(s))}{C''_i(a_i^*)} =0$ for any agent $i$ and outcome $s$ such that $\tau_{i}(s)>0$. We will show in \Cref{l:positivepaymentoutcomes} that any agent that receives a payment at an optimal contract does so at an outcome where $P_{s}'(Y^*)>0$. Thus $c_{i}=0$ for such agents and the balance condition in \Cref{t:generalmodel} is satisfied trivially. 

\subsection{Proof of \autoref{t:generalmodel}}
The expected payoff for the principal under contract $\sharesvector$ and corresponding equilibrium actions $\eqlbactions$ is \begin{align*}
\sum_{s' \in \mathcal{S}}\left(v_{s'} - \sum_{i \in N}\tau_{i}(s')\right)P_{s'}(Y(\eqlbactions)).
    \end{align*} Suppose $\optsharesvector$ is an optimal contract inducing equilibrium $\eqlbactions(\optsharesvector)$ with team performance $Y^*$. Consider outcome $s$ and any agent $i$ such that $\tau_{i}^*(s)>0$. 
    Then the first-order condition for $\tau^*_i(s)$ implies that \begin{align*}
        \frac{d Y}{d \tau_{i}(s)} \underbrace{\sum_{s' \in \mathcal{S}}\left(v_{s'} - \sum_{i \in N}\tau^*_{i}(s')\right)P'_{s'}(Y^*)}_{D} = P_{s}(Y^*).
    \end{align*}
The left-hand side is the benefit from increasing $\tau^*_i(s)$ while the right-hand side is the expected additional transfer required. Since $P_s(Y^*)>0$ by assumption, the summation labeled $D$ is nonzero.
    
Substituting \Cref{l:changeteamperformance} in the above equation, we obtain   
\begin{align*} l \productivity_{i}\centrality_{i}P'_{s}(Y^*)\cdot \frac{u_{i}'(\tau_{i}(s))}{C''_i(a_i^*)}  &= \frac{P_{s}(Y^*)}{D}, \\ 
\iff \productivity_{i}\centrality_{i}\cdot \frac{u_{i}'(\tau_{i}(s))}{C''_i(a_i^*)}  &= \balanceconstant_{s},
\end{align*} where $\balanceconstant_{s} = P_{s}(Y^*)/(lP_{s}'(Y^*) D)$. Observing that $\balanceconstant_{s}$ is independent of $i$, the statement of the result follows.

\subsection{Proof of \autoref{c:marginalutil1}}
    Fix any $s\in\mathcal S_{ij}^*$. By \Cref{t:generalmodel}, there is a constant $\balanceconstant_s \neq 0$ such that \begin{align*}
        \productivity_{i} \centrality_{i}\cdot \frac{u_{i}'(\tau_{i}^*(s))}{C''_i(a_i^*)}  &= \balanceconstant_{s}, \quad \text{and } \quad  \productivity_{j} \centrality_{j}\cdot \frac{u_{j}'(\tau_{j}^*(s))}{C''_j(a_j^*)}  = \balanceconstant_{s}.
    \end{align*} It follows that \begin{equation*}
        \frac{u_{i}'(\tau_{i}^*(s))}{u_{j}'(\tau_{j}^*(s))} = \frac{\productivity_{j}\centrality_{j}}{\productivity_{i}\centrality_{i}} \cdot \frac{C''_i(a_i^*)}{C''_j(a_j^*)}.
    \end{equation*} This is exactly the claimed equality.

\subsection{Proof of \autoref{p:rankingpayments}}

We prove a couple of lemmas which help in proving the proposition statement. The first lemma gives a condition which must hold for all  outcomes at which an agent receives a positive payment.

\begin{lemma}
\label{l:positivepaymentoutcomes}
    Suppose $\optsharesvector$ is an optimal contract and $Y^*$ is the induced team performance. For all $s$ in the set of outcomes $\mathcal{S}_{i}^*$ where $i$ receives a positive payment,
$        P_{s}'(Y^*) > 0.$
\end{lemma}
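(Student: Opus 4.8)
The plan is to argue by contradiction: suppose $P_s'(Y^*) \le 0$ for some outcome $s \in \mathcal{S}_i^*$ at which agent $i$ receives a positive payment $\tau_i^*(s) > 0$. I want to show that the principal can strictly improve by lowering $\tau_i(s)$. The key quantity to track is the marginal effect on the principal's objective of decreasing $\tau_i(s)$ slightly. From the proof of \Cref{l:changeteamperformance}, the derivative of team performance satisfies $\frac{dY}{d\tau_i(s)} = l P_s'(Y^*) \productivity_i \centrality_i u_i'(\tau_i^*(s))$, and the principal's first-order expression (from the proof of \Cref{t:generalmodel}) equates the marginal benefit of raising $\tau_i(s)$ with the marginal cost $P_s(Y^*) > 0$. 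So the logic is: the direct cost of paying $\tau_i(s)$ is the strictly positive expected transfer $P_s(Y^*)$, while the benefit operates only through the induced change in team performance times the principal's marginal value of performance $D$.

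The main steps I would carry out: (1) Write the principal's payoff and compute the total derivative with respect to $\tau_i(s)$, namely $\frac{d}{d\tau_i(s)}\left[\sum_{s'}(v_{s'} - \sum_k \tau_k(s'))P_{s'}(Y)\right] = -P_s(Y^*) + \frac{dY}{d\tau_i(s)} \cdot D$, where $D = \sum_{s'}(v_{s'} - \sum_k \tau_k^*(s'))P_{s'}'(Y^*)$. (2) Substitute $\frac{dY}{d\tau_i(s)} = l P_s'(Y^*) \productivity_i \centrality_i u_i'(\tau_i^*(s))$ from \Cref{l:changeteamperformance}. (3) Observe that since $\optsharesvector$ is optimal and $\tau_i^*(s) > 0$ is interior, the derivative must vanish, giving $P_s(Y^*) = l P_s'(Y^*) \productivity_i \centrality_i u_i'(\tau_i^*(s)) \cdot D$. (4) Since $P_s(Y^*) > 0$, $\productivity_i > 0$, $\centrality_i > 0$ (these need to be checked — $\productivity_i > 0$ because $Y$ is strictly increasing and $\centrality_i > 0$ should follow from the Neumann-series / nonnegative-matrix structure, which I would need to invoke), and $u_i'(\tau_i^*(s)) > 0$, the right-hand side has the sign of $l \cdot P_s'(Y^*) \cdot D$. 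If $P_s'(Y^*) \le 0$, then for the equation to hold with a strictly positive left side we would need $l \cdot D < 0$ strictly (ruling out $P_s'(Y^*) = 0$, which would force $P_s(Y^*) = 0$, a contradiction; so $P_s'(Y^*) < 0$ and $lD < 0$).

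The crux is then to rule out $l D < 0$ jointly with $P_s'(Y^*) < 0$ at an optimum. The clean way: $D$ is the principal's marginal value of increasing team performance. If $lD < 0$ the principal would want to \emph{decrease} $Y$. But since $Y$ is strictly increasing in every action and actions respond monotonically to incentives in the relevant sense, any agent $j$ receiving a positive payment at some outcome $s'$ with $P_{s'}'(Y^*) > 0$ would represent a contradiction with a symmetric first-order condition of the opposite sign — so I would derive a contradiction by comparing the first-order condition at outcome $s$ (giving $lD < 0$) with one at an outcome with positive $P_{s'}'$, which must exist because $\sum_{s'} P_{s'}'(Y^*) \cdot (\text{something}) $ or simply because agent $i$ exerts positive effort only if incentivized, requiring $\sum_{s'}P_{s'}'(Y^*)u_i(\tau_i^*(s')) > 0$ in the first-order condition \eqref{eq:bestresponsegeneral}; hence not all $P_{s'}'(Y^*)$ with $\tau_i^*(s') > 0$ can be nonpositive, so some $s' \in \mathcal{S}_i^*$ has $P_{s'}'(Y^*) > 0$, and applying \Cref{t:generalmodel} or the first-order condition at $s'$ forces $lD > 0$ — contradiction. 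I expect reconciling the sign of $lD$ across outcomes, and being careful that $\productivity_i, \centrality_i > 0$ (nonnegativity/irreducibility of the relevant network matrix), to be the main obstacle; everything else is bookkeeping with the formulas already established.
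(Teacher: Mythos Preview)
Your overall strategy---combine the principal's first-order condition at outcomes in $\mathcal{S}_i^*$ with the agent's equilibrium condition \eqref{eq:bestresponsegeneral} to pin down the sign of $P_s'(Y^*)$---is exactly the paper's. But your execution has a real gap: you rely on $\centrality_i > 0$, and the justification you propose (a Neumann-series expansion using nonnegativity of the spillover matrix) does not go through at this level of generality. The matrix $\bm{G}$ entering the definition of $\bm{\centrality}$ is the Hessian of $Y$, and the general model places no sign restriction on its off-diagonal entries; actions may be strategic substitutes. Positivity of centralities is established in the paper only under the extra hypothesis $\partial^2 Y/\partial a_i \partial a_j \geq 0$ (\Cref{as:extensivemargin}(c); see \Cref{l:positivecentrality}), which is not assumed for this lemma. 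So your sign chain in step~(4) cannot be closed as written.

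The fix is simple and is precisely what the paper does: never separate out $\centrality_i$. From the principal's interior first-order condition at any $s \in \mathcal{S}_i^*$ you get $l D \productivity_i \centrality_i P_s'(Y^*) u_i'(\tau_i^*(s)) = P_s(Y^*) > 0$, so the composite $l D \productivity_i \centrality_i$ and $P_s'(Y^*)$ have the same sign, for every such $s$. This already forces all $P_s'(Y^*)$, $s \in \mathcal{S}_i^*$, to share a common nonzero sign (equivalently, take the ratio of the two first-order conditions and note the left side is a ratio of marginal utilities, hence positive). Your agent-side step then finishes it: using $\sum_s P_s'(Y^*)=0$ to rewrite the equilibrium condition as $C_i'(a_i^*) = \tfrac{\partial Y}{\partial a_i}\sum_{s \in \mathcal{S}_i^*} P_s'(Y^*)\big(u_i(\tau_i^*(s)) - u_i(0)\big) > 0$ shows that common sign must be positive. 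No claim about the sign of $\centrality_i$ is needed anywhere.
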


\begin{proof}
    Consider agent $i$ and let $\mathcal{S}^*_{i}$ be the set of outcomes at which agent $i$ receives a positive payment. If $\mathcal{S}^*_{i}$ is the empty set, the result holds vacuously. Otherwise, we will show that, either \begin{align} \label{eq:characpaymentoutcome} P'_{s}(Y^*) > 0, \text{ for all } s \in \mathcal{S}^*_{i}, \quad \text{or}, \quad P'_{s}(Y^*) < 0, \text{ for all } s \in \mathcal{S}^*_{i}.
    \end{align} 
    Recall from the proof of \Cref{t:generalmodel} that
    \begin{align*}
        l \productivity_{i} \centrality_{i} P_{s}'(Y^*) \cdot \frac{u_{i}'(\tau_{i}(s))}{C''_i(a_i^*)}  = \frac{P_{s}(Y^*)}{\sum_{s' \in \mathcal{S}}\left(v_{s'} - \sum_{i \in N}\tau_{i}(s')\right)P'_{s'}(Y^*)}, \quad \forall s \in \mathcal{S}^*_{i}.
    \end{align*} Since every outcome occurs with non-zero probability, it must be that $P_{s}'(Y^*) \neq 0$ for any outcome in $\mathcal{S}_{i}^*$. In the case $\mathcal{S}_{i}^*$ has exactly $1$ outcome, it follows that $P_{s}'(Y^*) > 0$ or $P_{s}'(Y^*) < 0$ for $s \in \mathcal{S}_{i}^*$. Thus, suppose $\mathcal{S}_{i}^*$ has at least $2$ outcomes. Taking the ratio of the above equation for any pair of outcomes $s_{1},s_{2} \in \mathcal{S}^*_{i}$, we obtain \begin{align*}
        \frac{u_{i}'(\tau_{i}^*(s_{1}))}{u_{i}'(\tau_{i}^*(s_{2}))} = \frac{P_{s_{1}}(Y^*)}{P'_{s_{1}}(Y^*)} \cdot \frac{P'_{s_{2}}(Y^*)}{P_{s_{2}}(Y^*)}.
    \end{align*} Since the utility function $u_{i}(\cdot)$ is strictly increasing, we must have either \begin{align*}
        P'_{s}(Y^*) > 0 \text{ for } s \in \{s_{1},s_{2}\}, \quad \text{or}, \quad P'_{s}(Y^*) < 0 \text{ for } s \in \{s_{1},s_{2}\}.
    \end{align*} The statement in (\ref{eq:characpaymentoutcome}) follows. We now show that $
        P_{s}'(Y^*) > 0$  for all $s \in \mathcal{S}_{i}^*.$
 The equilibrium condition for agent $i$ is \begin{align*}
    C_{i}'(a_{i}^*) &= \frac{\partial Y}{\partial a_{i}}\sum_{s \in \mathcal{S}}P_{s}'(Y^*)u_{i}(\tau^*_{i}(s)). 
\end{align*} Since $\sum_{s \in \mathcal{S}}P_{s}'(Y) = 0$, the equilibrium condition can be rewritten as \begin{align*}
    C_{i}'(a_{i}^*) &= \frac{\partial Y}{\partial a_{i}}\sum_{s \in \mathcal{S}_{i}^*}P_{s}'(Y^*)\left(u_{i}(\tau^*_{i}(s)) - u_{i}(0)\right).
\end{align*} By assumption we have a positive productivity, that is, $\frac{\partial Y}{\partial a_{i}} > 0$. The cost of effort is strictly increasing, that is, $C_{i}'(\cdot) > 0$. The utility function $u_{i}(\cdot)$ is strictly increasing in payments. Thus,  \begin{align*}
     P'_{s}(Y^*) > 0, \text{ for all } s \in \mathcal{S}^*_{i}
\end{align*} as desired.
\end{proof}

The second lemma shows the existence of a common outcome at which agents receiving a positive payment are paid. 

\begin{lemma}
\label{l:commonoutcomepay}
    Suppose $\optsharesvector$ is an optimal contract. Consider a pair of agents $i$ and $j$, each with strictly concave utility functions. If there exist outcomes $s_{i}$ and $s_{j}$ such that $\tau_{i}^*(s_{i}) > 0$ and $\tau_{j}^*(s_{j}) > 0$, then there exists an outcome $s \in \mathcal{S}$ such that \begin{align*}
    \tau_{i}^*(s) > 0 \text{ and } \tau_{j}^*(s) > 0.
\end{align*} 
\end{lemma}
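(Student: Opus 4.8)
The plan is to argue by contradiction: suppose agents $i$ and $j$ each receive a positive payment at \emph{some} outcome, but the sets $\mathcal{S}_i^*$ and $\mathcal{S}_j^*$ of outcomes at which they are paid are disjoint. The key structural facts to exploit are (i) \Cref{l:positivepaymentoutcomes}, which says that every outcome in $\mathcal{S}_i^*$ (and likewise $\mathcal{S}_j^*$) has $P_s'(Y^*)>0$; and (ii) the balance relation from the proof of \Cref{t:generalmodel}, namely that for each agent $k$ and each $s\in\mathcal{S}_k^*$,
\[
  l\,\productivity_k \centrality_k P_s'(Y^*) u_k'(\tau_k^*(s)) = \frac{P_s(Y^*)}{D},
\]
where $D = \sum_{s'\in\mathcal{S}}\big(v_{s'} - \sum_{m}\tau_m^*(s')\big)P_{s'}'(Y^*)$ is independent of the agent and the outcome. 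Rearranging, for each agent $k$ that is paid somewhere there is a constant $b_k := 1/(l\,\productivity_k\centrality_k D)$ such that $u_k'(\tau_k^*(s)) = b_k\,\dfrac{P_s(Y^*)}{P_s'(Y^*)}$ for all $s\in\mathcal{S}_k^*$, while for $s\notin\mathcal{S}_k^*$ we have $\tau_k^*(s)=0$ and hence $u_k'(\tau_k^*(s)) = u_k'(0)$, which is the \emph{maximal} value of $u_k'$ by strict concavity.

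The contradiction should come from a profitable deviation that moves a little of $i$'s pay out of $\mathcal{S}_i^*$ and into some outcome $s_j\in\mathcal{S}_j^*$ (symmetrically for $j$), chosen so that team performance $Y^*$ is held fixed to first order while the principal's expected transfer strictly falls --- contradicting optimality of $\optsharesvector$. Concretely, I would consider the two-parameter perturbation that raises $\tau_i(s_j)$ from $0$ and lowers $\tau_i(s_i)$ for some $s_i\in\mathcal{S}_i^*$, with the ratio of the two adjustments tuned (via \Cref{l:changeteamperformance}, using that $P_{s_i}'(Y^*),P_{s_j}'(Y^*)>0$ and $u_i'(0)$ finite) so that $dY=0$ to first order. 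Since $Y$ is unchanged to first order, all outcome probabilities are unchanged to first order, so the first-order change in the principal's payoff is exactly $-\sum_{s'} P_{s'}(Y^*)\,d(\sum_m \tau_m(s'))$, i.e.\ (up to the tuning constants) proportional to $\dfrac{P_{s_i}(Y^*)}{P_{s_i}'(Y^*)}u_i'(\tau_i^*(s_i)) - \dfrac{P_{s_j}(Y^*)}{P_{s_j}'(Y^*)}u_i'(0)$. Using the balance relation this equals $\dfrac{1}{b_i}\big(1 - b_i\,\dfrac{P_{s_j}(Y^*)}{P_{s_j}'(Y^*)} u_i'(0)\big)$; and since $u_i'(0)\ge u_i'(\tau_i^*(s))$ with equality impossible under strict concavity when $\tau_i^*(s)>0$, this expression has a definite nonzero sign, so \emph{one of the two directions} of the perturbation strictly raises the principal's payoff. (One has to double-check that the nonnegativity constraints $\tau_i(s)\ge 0$ are respected --- fine in the direction that increases $\tau_i(s_j)$ and decreases $\tau_i(s_i)>0$, and the opposite direction is available whenever $|\mathcal{S}_i^*|\ge 2$ or we instead raise $\tau_i(s_i)$ and lower $\tau_i(s_j')$, so at least one feasible improving direction always exists.)

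The main obstacle I anticipate is handling the boundary/feasibility bookkeeping cleanly: the perturbation increasing pay at an unpaid outcome $s_j$ is always feasible, but to get a \emph{contradiction} regardless of the sign of the payoff change I need a feasible perturbation in both directions, which forces a small case analysis on whether $i$ (or $j$) is paid at two or more outcomes, or whether it suffices to exhibit the improving direction only on the side of the inequality that happens to be strict. I would organize this by first establishing the sign comparison $u_i'(0) > u_i'(\tau_i^*(s_i))$ (strict, from strict concavity and $\tau_i^*(s_i)>0$), which pins down that shifting $i$'s pay \emph{from} $s_i$ \emph{to} $s_j$ lowers the transfer cost while keeping $Y$ fixed --- an unambiguously feasible and profitable deviation --- delivering the contradiction without needing the reverse direction. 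The remaining routine step is the explicit choice of the two perturbation magnitudes making $dY=0$, which follows directly from the linear formula in \Cref{l:changeteamperformance}.
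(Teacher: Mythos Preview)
Your overall architecture --- contradiction via a perturbation of agent $i$'s payments that holds $Y$ fixed to first order --- is sound and in spirit matches the paper's KKT approach. But the final paragraph contains a genuine gap. You assert that the strict inequality $u_i'(0) > u_i'(\tau_i^*(s_i))$ ``pins down that shifting $i$'s pay from $s_i$ to $s_j$ lowers the transfer cost while keeping $Y$ fixed.'' It does not. Carrying out the computation (with the $u_i'$ factors in the denominator, not the numerator as you wrote), the first-order change in the principal's payoff along the feasible direction is proportional to
\[
\frac{P_{s_i}(Y^*)}{P_{s_i}'(Y^*)\,u_i'(\tau_i^*(s_i))} \;-\; \frac{P_{s_j}(Y^*)}{P_{s_j}'(Y^*)\,u_i'(0)}
\;=\; \frac{1}{b_i}\;-\;\frac{P_{s_j}(Y^*)}{P_{s_j}'(Y^*)\,u_i'(0)}.
\]
Strict concavity gives $u_i'(0)>u_i'(\tau_i^*(s_i))$, but that only lets you compare $1/b_i$ with $\frac{P_{s_i}(Y^*)}{P_{s_i}'(Y^*)\,u_i'(0)}$, not with $\frac{P_{s_j}(Y^*)}{P_{s_j}'(Y^*)\,u_i'(0)}$. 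If $\frac{P_{s_j}(Y^*)}{P_{s_j}'(Y^*)}$ happens to be much larger than $\frac{P_{s_i}(Y^*)}{P_{s_i}'(Y^*)}$, the feasible perturbation for agent $i$ alone is \emph{not} profitable, and the infeasible reverse direction is the one your sign argument would need.

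The missing idea is that you must bring agent $j$ back in. If the feasible $i$-perturbation fails to improve, the inequality above (combined with strict concavity) forces $\frac{P_{s_i}(Y^*)}{P_{s_i}'(Y^*)} < \frac{P_{s_j}(Y^*)}{P_{s_j}'(Y^*)}$. Running the symmetric feasible perturbation for agent $j$ (shifting $j$'s pay from $s_j$ to $s_i$) and assuming it too fails yields the reverse strict inequality, a contradiction. This is exactly what the paper does, phrased as the KKT equality at the paid outcome together with the KKT inequality at the unpaid outcome, applied to \emph{each} of $i$ and $j$: the two resulting chains
\[
\frac{P_{s_i}(Y^*)}{P_{s_i}'(Y^*)} < lD\,\alpha_i c_i\,u_i'(0) \le \frac{P_{s_j}(Y^*)}{P_{s_j}'(Y^*)}
\quad\text{and}\quad
\frac{P_{s_j}(Y^*)}{P_{s_j}'(Y^*)} < lD\,\alpha_j c_j\,u_j'(0) \le \frac{P_{s_i}(Y^*)}{P_{s_i}'(Y^*)}
\]
are mutually inconsistent. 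Your single-agent perturbation cannot close the argument by itself; once you add the symmetric $j$-step, it is essentially the paper's proof.
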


\begin{proof}
    Suppose there does not exist an outcome at which both agents receive a positive payment. Thus, the payments $\tau_{i}^*(s_{j})=0$ and $\tau_{j}^*(s_{i})=0$. The KKT first-order conditions at optimal contract $\optsharesvector$ are \begin{align} \label{eq:focequal}
    lD\productivity_{k}\centrality_{k}\cdot \frac{u_{k}'(\tau_{k}^*(s_{k}))}{C''_k(a_k^*)}\cdot P_{s_{k}}'(Y^*) - P_{s_{k}}(Y^*) = 0 \quad \text{for } k \in \{i,j\}.
\end{align} In addition to the above set of equations, we also have \begin{align} \label{eq:focinequal}
    lD\productivity_{k}\centrality_{k}\cdot \frac{u_{k}'(0)}{C''_k(a_k^*)}\cdot P'_{s_{\{i,j\} \setminus k}}(Y^*) - P_{s_{\{i,j\} \setminus k}}(Y^*) \leq 0 \quad \text{for } k \in \{i,j\}.
\end{align} Recall $\mathcal{S}_{i}^*$ is the set of outcomes where agent $i$ receives a positive payment under contract $\optsharesvector$. 
 
Since $P'_{s}(Y^*) > 0$ for any $s \in \mathcal{S}_{i}^*$ (see \Cref{l:positivepaymentoutcomes}), we must have $lD \productivity_{i} \centrality_{i} > 0$. Consider the following chain of inequalities for agent $i$: \begin{align} \label{eq:agenticond}
        \frac{P_{s_{i}}(Y^*)}{P_{s_{i}}'(Y^*)} < lD\productivity_{i}\centrality_{i}\cdot \frac{u_{i}'(0)}{C''_i(a_i^*)} 
        \leq \frac{P_{s_{j}}(Y^*)}{P'_{s_{j}}(Y^*)}.
    \end{align} Both the inequalities follow from applying (\ref{eq:focequal}) and (\ref{eq:focinequal}) to agent $i$. We utilize the fact that $u_{i}(\cdot)$ is strictly concave. We also utilize the observation that, since agent $i$ and $j$ receive a positive payment at outcome $s_{i}$ and $s_{j}$, \Cref{l:positivepaymentoutcomes} tells us that $P_{s_{i}}'(Y^*)>0$ and $P_{s_{j}}'(Y^*)>0$. Following the same computation for agent $j$, we obtain the inequalities \begin{align} \label{eq:agentjcond}
        \frac{P_{s_{j}}(Y^*)}{P_{s_{j}}'(Y^*)} < lD\productivity_{j}\centrality_{j}\cdot \frac{u_{j}'(0)}{C''_j(a_j^*)} 
        \leq \frac{P_{s_{i}}(Y^*)}{P'_{s_{i}}(Y^*)}.
    \end{align} This contradicts inequality (\ref{eq:agenticond}). Thus, if two agents receive a positive payment at some (potentially different) outcomes under the optimal contract, then there must exist an outcome at which both agents receive a positive payment.
\end{proof}

\begin{proof}[Proof of \Cref{p:rankingpayments}]
    Consider agents $i$ and $j$ with identical strictly concave utility functions $u_{i}(\cdot) = u_{j}(\cdot)$. The statement trivially holds if either agent $i$ or agent $j$ receives a $0$ payment at all outcomes. Thus, consider a scenario where there exist outcomes $s_{i}$ and $s_{j}$ such that \begin{align*}
    \tau_{i}^*(s_{i}) > 0 \text{ and } \tau_{j}^*(s_{j}) > 0.
\end{align*}

By \Cref{l:commonoutcomepay}, it suffices to show that when there exists an outcome such that both agents $i$ and $j$ receive a positive payment at this outcome, then \begin{align*}
    \tau^*_{i}(s) \geq \tau^*_{j}(s) \text{ for all } s \in \mathcal{S} \text{ or }
        \tau^*_{j}(s) \geq \tau^*_{i}(s) \text{ for all } s \in \mathcal{S} 
\end{align*} (or both).

Let $\mathcal{S}^*_{ij}$ be the set of outcomes at which both agents receive a positive payment under contract $\optsharesvector$. The set $\mathcal{S}^*_{ij}$ is non-empty. We can assume without loss of generality that $\tau_{i}^*(s) \geq \tau_{j}^*(s)$ for some outcome $s \in \mathcal{S}_{ij}^*$. We show that then\begin{align*}
        \tau^*_{i}(s) \geq \tau^*_{j}(s) \text{ for all } s \in \mathcal{S}.
    \end{align*} Applying \Cref{c:marginalutil1}, we must have \begin{align*}
        \left|\frac{\productivity_{i}\centrality_{i}}{C''_i(a_i^*)}\right| \geq \left|\frac{\productivity_{j}\centrality_{j}}{C''_j(a_j^*)}\right|. 
    \end{align*} Additionally, $\productivity_{i}\centrality_{i}$ and $\productivity_{j}\centrality_{j}$ are either both positive or negative. Further applying \Cref{c:marginalutil1} to any outcome $s' \in \mathcal{S}_{ij}^*$, the ratio of marginal utilities satisfies \begin{align*}
        \frac{u_{i}'(\tau^*_{i}(s'))}{u_{j}'(\tau^*_{j}(s'))} = \frac{\productivity_{j}\centrality_{j}C''_i(a_i^*)}{\productivity_{i}\centrality_{i}C''_j(a_j^*)} \leq 1.
    \end{align*} This implies that agent $i$ receives a weakly larger payment than agent $j$ under all outcomes in $\mathcal{S}_{ij}^*$, that is, \begin{align*}
        \tau_{i}^*(s) \geq \tau_{j}^*(s) \quad \textit{for all } s \in \mathcal{S}_{ij}^*.
    \end{align*} We will show that this ordering on payments holds for outcomes in the set $\mathcal{S} \setminus \mathcal{S}_{ij}^*$ as well. The ordering trivially holds at outcomes where $\tau_{j}^*(s) = 0$. Consider an outcome $s$ at which $\tau_{j}^*(s) > 0$ but $\tau_{i}^*(s)=0$. We show that such an outcome cannot exist at an optimal contract $\optsharesvector$. We showed in the proof of \Cref{t:generalmodel} that the first-order condition for the principal is \begin{align*}
        lD\productivity_{j}\centrality_{j}\cdot \frac{u_{j}'(\tau_{j}^*(s))}{C''_j(a_j^*)}\cdot P_{s}'(Y^*)-P_{s}(Y^*)=0.
    \end{align*} Since the utility to agent $j$ is strictly increasing and the cost of effort is strictly convex, this implies that \begin{align*}
        lD\productivity_{j}\centrality_{j}P_{s}'(Y^*) > 0.
    \end{align*} Recall that $ \left|\frac{\productivity_{i}\centrality_{i}}{C''_i(a_i^*)}\right| \geq \left|\frac{\productivity_{j}\centrality_{j}}{C''_j(a_j^*)}\right|$ and they are either both positive or negative. Using the fact that $u_{i}(\cdot)$ and $u_{j}(\cdot)$ are strictly increasing identical utility functions, and thus $u_{i}'(0) = u_{j}'(0) > 0$, and $P_{s}'(Y^*) > 0$ (see \Cref{l:positivepaymentoutcomes}), we conclude \begin{align} \label{eq:rankingpteameffect}
    lD\productivity_{i}\centrality_{i} \cdot \frac{u_{i}'(0)}{C''_i(a_i^*)} \cdot P_{s}'(Y^*) \geq lD\productivity_{j}\centrality_{j}\cdot \frac{u_{j}'(0)}{C''_j(a_j^*)} \cdot P_{s}'(Y^*). 
    \end{align}
    
    Now, consider the following chain of inequalities: \begin{align*}
        lD\productivity_{i}\centrality_{i}\cdot \frac{u_{i}'(0)}{C''_i(a_i^*)}\cdot P_{s}'(Y^*)-P_{s}(Y^*) & \geq lD\productivity_{j}\centrality_{j}\cdot \frac{u_{j}'(0)}{C''_j(a_j^*)}\cdot P_{s}'(Y^*)-P_{s}(Y^*), \\
        & > lD\productivity_{j}\centrality_{j}\cdot \frac{u_{j}'(\tau_{j}^*(s))}{C''_j(a_j^*)} \cdot P_{s}'(Y^*)-P_{s}(Y^*), \\
        & = 0. 
    \end{align*} The first inequality follows from (\ref{eq:rankingpteameffect}). The second inequality follows from the fact that $u_{j}(\cdot)$ is strictly concave and $lD\frac{\productivity_{j}\centrality_{j}}{C''_j(a_j^*)}P_{s}'(Y^*) > 0$. The left-hand side in the above chain of inequalities is the derivative of the principal's objective in $\tau_{i}(s)$. The derivative being positive contradicts the optimality of $\optsharesvector$, so the statement of the proposition holds. 
\end{proof}

\subsection{Proof of \autoref{c:marginalutil2}}
    Recall from the proof of \Cref{t:generalmodel} that
    \begin{align*}
        l \productivity_{i} \centrality_{i} P_{s}'(Y^*) \cdot \frac{u_{i}'(\tau_{i}^*(s))}{C''_i(a_i^*)} = \frac{P_{s}(Y^*)}{\sum_{s' \in \mathcal{S}}\left(v_{s'} - \sum_{i \in N}\tau_{i}(s')\right)P'_{s'}(Y^*)}, \quad \forall s \in \mathcal{S}^*_{i}.
    \end{align*} Taking the ratio of the above equation for any pair of outcomes $s_{1},s_{2} \in \mathcal{S}^*_{i}$, we obtain \begin{align*}
        \frac{u_{i}'(\tau_{i}^*(s_{1}))}{u_{i}'(\tau_{i}^*(s_{2}))} = \frac{P_{s_{1}}(Y^*)}{P'_{s_{1}}(Y^*)} \cdot \frac{P'_{s_{2}}(Y^*)}{P_{s_{2}}(Y^*)}.
    \end{align*} The statement is proved.

\subsection{Proof of \autoref{l:optprodcomp}} 
We begin by deriving an expression for an agent's productivity $\alpha_{i}$. By definition, \begin{align} \label{eq:prodrankone}
    \alpha_{i} = \frac{\partial Y}{\partial a_{i}} = k_{i} + \underbrace{\rho \left(\sum_{j \in \mathcal{N}}\beta_{j}a_{j}\right)}_{\widetilde{B}_{1}} \beta_{i}. 
\end{align} Because actions satisfy $a_{j} \geq 0$ for every $j$, and $\beta_{j} \geq 0$ by assumption, the term $\widetilde{B}_{1} \geq 0$. 

We next derive an expression for an agent's centrality $c_{i}$. By definition, at any contract $\bm{\tau}$ and corresponding equilibrium $\bm{a}^*$, the centrality \begin{align} \label{eq:centrankone}
    \bm{c}^{\top} = \bm{\alpha}^{\top} \left[\bm{I}-\rho P'(Y^*)\bm{T} \bm{\beta} \bm{\beta}^{\top}\right]^{-1},
\end{align} where $\bm{T} = \text{diag}\left(u_{1}(\tau_{1}),\dots,u_{n}(\tau_n)\right)$. Using the Sherman-Morrison formula, the inverse matrix is \begin{align} \label{eq:rankoneinv}
    \left[\bm{I}-\rho P'(Y^*)\bm{T} \bm{\beta} \bm{\beta}^{\top}\right]^{-1} = \bm{I} + \frac{\rho P'(Y^*) \bm{T}\bm{\beta} \bm{\beta}^\top}{1-\rho P'(Y^*) \bm{\beta}^{\top} \bm{T} \bm{\beta}}. 
\end{align} As established in the proof of \autoref{l:uniqueeq}, the spectral radius of the spillover matrix at any contract and corresponding equilibrium is strictly smaller than $1$, and in the rank-one case this radius equals $\rho P'(Y^*)\bm\beta^\top\bm T\bm\beta$, so the denominator in \eqref{eq:rankoneinv} is strictly positive. Substituting (\ref{eq:rankoneinv}) in (\ref{eq:centrankone}), we get \begin{align*}
    \bm{c}^{\top} = \bm{\alpha}^{\top} + \underbrace{\frac{\rho P'(Y^*) \bm{\alpha}^{\top}\bm{T} \bm{\beta}}{1-\rho P'(Y^*) \bm{\beta}^{\top} \bm{T} \bm{\beta}}}_{\widetilde{B}_{2}} \bm{\beta}^{\top} ,
\end{align*} where $\widetilde{B}_{2} \geq 0$. Substituting (\ref{eq:prodrankone}) in the expression above, we get \begin{align} \label{eq:centralityrankone}
    c_{i} = k_{i} + \left(\widetilde{B}_{1}+\widetilde{B}_{2}\right) \beta_{i}. 
\end{align}

Finally, consider any optimal contract $\bm{\tau}^*$. By \Cref{t:generalmodel}, there exists a constant $\lambda$ such that for any agent receiving positive pay \begin{align*}
   \alpha_{i} c_{i} u_{i}'(\tau_{i}) = \lambda.
\end{align*} Substituting (\ref{eq:prodrankone}) and (\ref{eq:centralityrankone}) in the balance condition, it follows that there exists $B_{1} \geq 0$ and $B_{2} \geq 0$ such that for all agents $i$ ,
   \begin{align*}
       u_{i}'\left(\tau_{i}^*\right) \propto \left(k_{i}^2 + B_{1}k_{i}\beta_{i} + B_{2}\beta_{i}^2\right)^{-1}.
   \end{align*} The statement is proved. 

\subsection{Proof of \autoref{t:spilloverregimes}} 

This proof utilizes \Cref{l:optprodcomp}, which characterizes an optimal contract for any given environment. We will also require a characterization of equilibrium action profiles under a given contract. We provide the characterization for any linear-quadratic production function without requiring rank-one structure, as the more general version will be useful for later proofs.

\begin{lemma}\label{l:uniqueeq}
Consider the setting of \Cref{ex:main}. Fixing $\sharesvector$, there exists a unique pure Nash equilibrium. The equilibrium actions $\eqlbactions$ and  team performance $Y^*$ solve the equations
\begin{equation} [\bm{I}-P'(Y^*) \bm{\bm{T}} \network]\eqlbactions = P'(Y^*)\bm{T} \bm{k} \text{ and }Y^* = \production(\eqlbactions),\label{eq:Nash_conditions} \end{equation} where $\bm{T}$ is the diagonal matrix with entries ${T}_{ii} = u_{i}\left(\sharesagt{i}\right)$. 
\end{lemma}

\begin{proof}
Fix a contract $\sharesvector$. Before we provide a proof, we make some useful observations. Any agent $i$ that receives zero pay $\tau_i=0$ must take a zero action in any equilibrium. Now, the probability of success $P(Y)$ is strictly increasing in $Y$ and team performance $Y(\bm{a})$ is strictly increasing in each argument. Thus, any agent that receives strictly positive pay $\tau_i>0$ has a strictly positive marginal returns to effort $\frac{\partial U_i}{\partial a_{i}} > 0$ at a zero action $a_{i}=0$. To analyze equilibrium, it thus suffices to focus on agents that receive positive pay. For the rest of the argument, we overload notation and denote positive payments made to agents by $\sharesvector$. The proof proceeds in two steps.

\textit{Step (i): An action profile $\eqlbactions$ is a pure Nash equilibrium if and only if it is a solution to the following system of equations: \begin{equation}
    [\bm{I}-P'(Y^*)\sharesmatrix \network]\eqlbactions = P'(Y^*)\bm{Tk} \text{ and }Y^* = \production(\eqlbactions). \label{eq:kktfoc}
\end{equation}} 

\textit{Proof of Step (i):} Consider any pure Nash equilibrium $\bm{a}^*$. Because agents receive positive pay and the marginal returns at a zero action is strictly positive, we have $a_{i}^*>0$ for every $i$. The KKT first-order condition for actions gives (\ref{eq:kktfoc}).

Conversely, fix $\bm a_{-i}$. Agent $i$'s payoff derivative has the sign of
\[
u_i(\tau_i)P'(Y(a_i,\bm a_{-i}))
-\frac{a_i}{k_i+\sum_{j\ne i}G_{ij}a_j+G_{ii}a_i}.
\]
The first term is weakly decreasing in $a_i$ by concavity of $P$ and monotonicity of $Y$, while the fraction is strictly increasing in $a_i$. The displayed expression is therefore strictly decreasing in $a_i$. Thus, \eqref{eq:kktfoc} makes $a_i^*$ agent $i$'s unique global best response.

\textit{Step (ii): Fix a contract $\sharesvector$. There exists a unique solution to the following equations: \begin{equation*}
    [\bm{I}-P'(Y^*)\sharesmatrix \network]\eqlbactions = P'(Y^*)\bm{Tk} \text{ and }Y^* = \production(\eqlbactions).
\end{equation*}}

\textit{Proof of Step (ii):} For $y$ satisfying $P'(y)\rho(\sharesmatrix\network)<1$, define $$\eqlbactions(y)=[\bm I-P'(y)\sharesmatrix\network]^{-1}P'(y)\bm{Tk}.$$ 
Solutions of the first-order conditions then correspond to solutions to
$$Y(\eqlbactions(y)) =y.$$

The function $Y(\eqlbactions(y))$ is strictly increasing in each coordinate of $\eqlbactions(y)$. We analyze how $\eqlbactions(y)$ changes as $y$ increases. Consider the set \begin{align*}
    y_{R} := \{y: P'(y) \rho(\sharesmatrix \network) < 1\}.
\end{align*} Observe that because $P(\cdot)$ is concave, if $y \in y_{R}$ then $y + \epsilon \in y_{R}$ for any $\epsilon > 0$. We show that constrained to the set $y_{R}$, there exists a unique fixed point to the function $Y(\eqlbactions(y))$. Each coordinate of $\eqlbactions(y)$ is weakly decreasing in $y$ since $P'(\cdot)$ is weakly decreasing (by our assumption $P(\cdot)$ is concave). So  $Y(\eqlbactions(y))$ is decreasing, meaning there is at most one solution to $Y(\eqlbactions(y)) = y$. It remains to show a solution to this equation exists.

We claim that we can find $y$ such that $Y(\eqlbactions(y)) \geq y$ and $P'(y) \rho(\sharesmatrix \network)<1$. If $P'(0) \rho(\sharesmatrix \network)<1$, the claim holds with $y=0$ since $Y(\eqlbactions(0)) \geq 0$. Otherwise, define $y_0$ by $P'(y_0) \rho(\sharesmatrix \network)=1$. A solution to this equation exists since $P'(y)$ is continuous and converges to zero as $y\rightarrow \infty$. Then $Y(\eqlbactions(y)) \rightarrow \infty$ as $y \rightarrow y_0$ from above, so we have $Y(\eqlbactions(y_0+\epsilon)) \geq y_0+\epsilon$ for $\epsilon>0$ sufficiently small. This completes the proof of the claim.

Since $Y(\eqlbactions(y))$ is decreasing in $y$, we can also choose $y$ large enough such that $y > Y(\eqlbactions(y))$. Since $Y(\eqlbactions(y))$ is continuous in $y$, by the intermediate value theorem this function has a fixed point, denoted by $y^*$. We conclude that there exists a unique solution to $Y(\eqlbactions(y))  = y$ in the set $y_{R}$ and a corresponding profile $\eqlbactions$ of equilibrium actions.

It remains to show that there does not exist an equilibrium $\eqlbactions$ with corresponding team performance $Y^*$ such that $P'(Y^*) \rho(\sharesmatrix \network) \geq 1$. The case $\sharesvector=0$ is immediate as the only equilibrium is $\eqlbactions=0$. Take $\sharesvector$ not identically zero and suppose there exists an equilibrium $\eqlbactions$ such that $P'(Y^*) \rho(\sharesmatrix \network) \geq 1$. It must solve: \begin{equation} \label{eq:necsufeqlb}
    [\bm{I}-P'(Y^*)\sharesmatrix \network]\eqlbactions = P'(Y^*)\bm{Tk} \text{ and }Y^* = \production(\eqlbactions).
\end{equation} By the Perron-Frobenius theorem,\footnote{For this argument, it is without loss to assume the matrix $\sharesmatrix\network$ is irreducible. If not, since $\network$ is symmetric and $\tau_{i}>0$ for some agent $i$, we can rewrite $\sharesmatrix\network$ in a block diagonal form with irreducible blocks. Then $P'(Y^*)\rho(\sharesmatrix \network)$ must be an eigenvalue of at least one block of the matrix $P'(Y^*) \sharesmatrix \network$. We can drop agents in all other blocks and apply the remainder of the argument to this block.} there exists a left-eigenvector $\bm{v}$ of the matrix $P'(Y^*)  \sharesmatrix \network$ such that $\bm{v}$ has strictly positive entries. Multiplying the LHS of (\ref{eq:necsufeqlb}) by the vector $v$, we get \begin{align*}
        \bm{v}^{T}[\bm{I}-P'(Y^*)\sharesmatrix \network] \eqlbactions &= [1 - P'(Y^*) \rho(\sharesmatrix \network)]\bm{v}^{T} \eqlbactions \\
        & \leq 0,
\end{align*} where the inequality follows from the assumption $P'(Y^*)\rho(\sharesmatrix \network) \geq 1$ and the fact that $\eqlbactions$ has strictly positive elements. However, we also compute
\begin{align*}
        \bm{v}^{T}[\bm{I}-P'(Y^*)\sharesmatrix \network] \eqlbactions &= \bm{v}^{T}P'(Y^*)\bm{Tk}  \text{ by }(\ref{eq:necsufeqlb})\\
        & > 0,
\end{align*} 
where the inequality holds because the entries of $\bm{v}$ are all positive and the entries of $\bm{\tau}$ are all non-negative and not identically zero. This is a contradiction, so there does not exist an equilibrium $\eqlbactions$ with corresponding team performance $Y^*$ such that $P'(Y^*) \rho(\sharesmatrix \network) \geq 1$. We conclude the equilibrium described above is the unique one.
\end{proof}

We can now prove the theorem. Consider any optimal contract $\bm{\tau}$ and the corresponding equilibrium $\bm{a}^*$, which (by the previous lemma) is given by \begin{align} \label{eq:eqlbactrankone}
    \bm{a}^* = P'(Y^*) \left[\bm{I}-\rho P'(Y^*)\bm{T} \bm{\beta} \bm{\beta}^{\top}\right]^{-1} \bm{T} \bm{k},
\end{align} where $\bm{T} = \text{diag}\left(u_{1}(\tau_{1}),\dots,u_{n}(\tau_{n})\right)$. Under this characterization, it can be established that the constants in \Cref{l:optprodcomp} have the following expression: \begin{align} \label{eq:b1}
    B_{1} &= \rho P'(Y^*) \bm{k}^{\top} \bm{T} \bm{\beta} \cdot \frac{3-2\rho P'(Y^*)\bm{\beta}^{\top}\bm{T}\bm{\beta}}{\left(1-\rho P'(Y^*)\bm{\beta}^{\top}\bm{T}\bm{\beta}\right)^2}, \\ \label{eq:b2} \text{ and }  B_{2} &= \left(\rho P'(Y^*) \bm{k}^{\top} \bm{T} \bm{\beta}\right)^{2} \cdot \frac{2-\rho P'(Y^*)\bm{\beta}^{\top} \bm{T} \bm{\beta}}{\left(1-\rho P'(Y^*)\bm{\beta}^{\top}\bm{T}\bm{\beta}\right)^3}. 
\end{align} Given the above expressions, we establish Part (i) and Part (ii). 

Before turning to each part of the argument, we characterize the spectral radius of the spillover matrix. Consider the sequence of environments in the theorem, and define
\begin{align*}
    \bm T^{(m)}=\operatorname{diag}\!\left(u_1(\tau_1^{(m)}),\ldots,u_n(\tau_n^{(m)})\right).
\end{align*}
Denote the induced team performance by $Y^{*(m)}$. The spillover matrix is
\begin{align*}
    \bm S^{(m)}=\rho^{(m)}P'(Y^{*(m)})\bm T^{(m)}\bm\beta\bm\beta^{\top},
\end{align*}
which is rank one. Its unique non-zero eigenvalue is
\begin{align} \label{eq:specrad}
    \mu^{(m)}=\rho^{(m)}P'(Y^{*(m)})\bm\beta^{\top}\bm T^{(m)}\bm\beta
\end{align}
with associated eigenvector $\bm T^{(m)}\bm\beta$. Let $B_1^{(m)}$ and $B_2^{(m)}$ denote the expressions in \eqref{eq:b1} and \eqref{eq:b2}, respectively, evaluated in environment $m$.

To establish Part (i), we want to prove that as $m\to\infty$ under $\mu^{(m)}\to0$, the ratio of agent payments converges as follows: \begin{align*}
    \lim_{m \to \infty} \frac{u_{i}'\left(\tau_{i}^{(m)}\right)}{u_{j}'\left(\tau_{j}^{(m)}\right)} = \frac{k_{j}^2}{k_{i}^2}.
\end{align*} By \Cref{l:optprodcomp}, the ratio of payments for any pair of agents $i$ and $j$ is \begin{align} \label{eq:ratutilrankone}
    \frac{u_{i}'\left(\tau_{i}^{(m)}\right)}{u_{j}'\left(\tau_{j}^{(m)}\right)} = \frac{k_{j}^2 + B_{1}^{(m)}k_{j}\beta_{j} + B_{2}^{(m)}\beta_{j}^2}{k_{i}^2 + B_{1}^{(m)}k_{i}\beta_{i} + B_{2}^{(m)}\beta_{i}^2}
\end{align} We will show that $B_{1}^{(m)}$ and $B_{2}^{(m)}$, given by \eqref{eq:b1} and \eqref{eq:b2} evaluated in environment $m$, converge to $0$ as $m\to\infty$ under $\mu^{(m)}\to0$. By assumption and \eqref{eq:specrad}, we have \begin{align*}
    \lim_{m \to \infty} \rho^{(m)} P'(Y^{*(m)}) \bm{\beta}^{\top} \bm{T}^{(m)} \bm{\beta} = 0.
\end{align*} It remains to analyze the convergence of $\rho^{(m)} P'(Y^{*(m)}) \bm{k}^{\top} \bm{T}^{(m)} \bm{\beta}$. Consider the following chain of inequalities: \begin{align*}
    \rho^{(m)} P'(Y^{*(m)}) \bm{\beta}^{\top} \bm{T}^{(m)} \bm{\beta} \geq \underline{\beta} \rho^{(m)} P'(Y^{*(m)}) \bm{1}^{\top} \bm{T}^{(m)} \bm{\beta}
     \geq ({\underline{\beta}}/{\overline{k}}) \rho^{(m)} P'(Y^{*(m)}) \bm{k}^{\top} \bm{T}^{(m)} \bm{\beta}.
\end{align*} Since $\underline{\beta} / \overline{k} > 0$ and the left-hand side converges to $0$, the term $\rho^{(m)} P'(Y^{*(m)}) \bm{k}^{\top} \bm{T}^{(m)} \bm{\beta}$ converges to $0$. Substituting in \eqref{eq:b1} and \eqref{eq:b2}, it follows that $B_{1}^{(m)}$ and $B_{2}^{(m)}$ both converge to $0$. Thus, \begin{align*}
    \lim_{m \to \infty} \frac{u_{i}'\left(\tau_{i}^{(m)}\right)}{u_{j}'\left(\tau_{j}^{(m)}\right)} = \frac{k_{j}^2}{k_{i}^2}.
\end{align*}

To establish Part (ii), we want to prove that as $m\to\infty$ under $\mu^{(m)}\to1$, the ratio of agent payments converges as follows: \begin{align*}
    \lim_{m \to \infty} \frac{u_{i}'\left(\tau_{i}^{(m)}\right)}{u_{j}'\left(\tau_{j}^{(m)}\right)} = \frac{\beta_{j}^2}{\beta_{i}^2}.
\end{align*} Rewriting the ratio of payments using (\ref{eq:b1}) and (\ref{eq:b2}), we have that \begin{align*}
    \frac{u_{i}'\left(\tau_{i}^{(m)}\right)}{u_{j}'\left(\tau_{j}^{(m)}\right)}
    =\frac{
    \begin{aligned}
    &\left(1-\mu^{(m)}\right)^3k_{j}^2 \\
    &\quad + \rho^{(m)} P'(Y^{*(m)}) \bm{k}^{\top} \bm{T}^{(m)} \bm{\beta} (1-\mu^{(m)})(3-2\mu^{(m)})k_{j}\beta_{j} \\
    &\quad + \left(\rho^{(m)} P'(Y^{*(m)}) \bm{k}^{\top} \bm{T}^{(m)} \bm{\beta}\right)^2(2-\mu^{(m)})\beta_{j}^2
    \end{aligned}
    }{
    \begin{aligned}
    &\left(1-\mu^{(m)}\right)^3k_{i}^2 \\
    &\quad + \rho^{(m)} P'(Y^{*(m)}) \bm{k}^{\top} \bm{T}^{(m)} \bm{\beta} (1-\mu^{(m)})(3-2\mu^{(m)})k_{i}\beta_{i} \\
    &\quad + \left(\rho^{(m)} P'(Y^{*(m)}) \bm{k}^{\top} \bm{T}^{(m)} \bm{\beta}\right)^2(2-\mu^{(m)})\beta_{i}^2
    \end{aligned}
    }
\end{align*} It suffices to establish that $\rho^{(m)} P'(Y^{*(m)}) \bm{k}^{\top} \bm{T}^{(m)} \bm{\beta}$ does not converge to $0$ and remains bounded as $m\to\infty$ under $\mu^{(m)}\to1$. Consider the following chain of inequalities: \begin{align*}
    \rho^{(m)} P'(Y^{*(m)}) \bm{k}^{\top} \bm{T}^{(m)} \bm{\beta} \geq \underline{k} \rho^{(m)} P'(Y^{*(m)}) \bm{1}^{\top} \bm{T}^{(m)} \bm{\beta} \geq \frac{\underline{k}}{\overline{\beta}} \rho^{(m)} P'(Y^{*(m)}) \bm{\beta}^{\top} \bm{T}^{(m)} \bm{\beta}.
\end{align*} Since $\underline{k} / \overline{\beta} > 0$ and $\mu^{(m)}$ converges to $1$, it must be that $\rho^{(m)} P'(Y^{*(m)}) \bm{k}^{\top} \bm{T}^{(m)} \bm{\beta}$ does not converge to $0$. Next, \begin{align*}
     \rho^{(m)} P'(Y^{*(m)}) \bm{k}^{\top} \bm{T}^{(m)} \bm{\beta} \leq \overline{k} \rho^{(m)} P'(Y^{*(m)}) \bm{1}^{\top} \bm{T}^{(m)} \bm{\beta} \leq \frac{\overline{k}}{\underline{\beta}} \rho^{(m)} P'(Y^{*(m)}) \bm{\beta}^{\top} \bm{T}^{(m)} \bm{\beta}.
\end{align*} These inequalities establish that $\rho^{(m)} P'(Y^{*(m)}) \bm{k}^{\top} \bm{T}^{(m)} \bm{\beta}$ remains bounded. Thus, \begin{align*}
    \lim_{m \to \infty} \frac{u_{i}'\left(\tau_{i}^{(m)}\right)}{u_{j}'\left(\tau_{j}^{(m)}\right)} = \frac{\beta_{j}^2}{\beta_{i}^2}.
\end{align*} The statement is proved.

\subsection{Proof of \Cref{t:optsharescharacterization}} As established in \Cref{s:applyparametric}, under an optimal contract $\bm{\tau}^*$, the product $\alpha_{i} c_{i}$ is equal across all active agents. 

\begin{lemma}
\label{l:equalizebonacichendo}
    If $\productivity_{i}\centrality_{i}$ is constant across all active agents, then $\productivity_{i}$  is constant across all active agents.
\end{lemma}

\begin{proof}
    By \Cref{t:generalmodel}, we have that $$\productivity_i \centrality_i \text{ is constant across agents $i$} .$$ We derive an expression for productivity $\bm{\alpha}$ and centrality $\bm{c}$ that will be used throughout the proof. The productivity $\bm{\alpha}$ is given by \begin{align*}
        \bm{\alpha} &= \bm{1} + \bm{G}\bm{a}^*, \\
        &= \bm{1} + P'(Y^*)\bm{G}\left[\bm{I}-P'(Y^*)\bm{TG}\right]^{-1}\bm{\tau} \quad (\text{by } \autoref{l:uniqueeq}), \\
        &= \left[\bm{I}-P'(Y^*)\bm{GT}\right]^{-1}\bm{1},
    \end{align*} where $\bm{1}$ denotes the vector whose entries are all equal to one. Furthermore, the centrality $\bm{c}$ is defined as \begin{align*}
        \bm{c} = \left[\bm{I}-P'(Y^*)\bm{GT}\right]^{-1} \bm{\alpha}. 
    \end{align*}
    
    Suppose that there exist two agents $i^{*} \in N$ with $i^{*} = \argmin_{k \in N}\productivity_{k}$ and $j^{*} \in N$ with $j^{*} = \argmax_{k \in N} \productivity_{k}$ such that $\productivity_{i^{*}} < \productivity_{j^{*}}$.\footnote{We are grateful to Michael Ostrovsky for suggesting the argument in the next paragraph.} 

Then we have that, for agent $i^{*}$, \begin{equation} \productivity_{i^{*}}\centrality_{i^*} < \productivity_{i^{*}}\productivity_{j^{*}}\sum_{j \in N}\left[\bm{I} - P'(Y^*) \network \sharesmatrix \right]^{-1}_{i^*j} = (\productivity_{i^{*}})^{2}\productivity_{j^{*}}, \label{eq:ineq1}\end{equation} using the maximality of $\productivity_{j^*}$ among the $\productivity_j$ and  the definitions of $\centrality_{i^*}$ and $\productivity_{i^*}$. But we similarly have that, for agent $j^{*}$, \begin{equation}\productivity_{j^{*}}\centrality_{j^*} > \productivity_{j^{*}}\productivity_{i^{*}}\sum_{i \in N}\left[\bm{I}-P'(Y^*) \network \sharesmatrix \right]^{-1}_{j^*i} = \productivity_{i^{*}}(\productivity_{j^{*}})^{2}. \label{eq:ineq2} \end{equation} \Cref{t:generalmodel} implies that $\productivity_{i^{*}}\centrality_{i^*}=\productivity_{j^{*}}\centrality_{j^*}$ for any two agents $i^*$ and $j^*$, and so combining (\ref{eq:ineq1}) and (\ref{eq:ineq2}) implies
 $$(\productivity_{i^*})^2\productivity_{j^*} > \productivity_{i^*}(\productivity_{j^*})^2.$$
This contradicts our assumption $\productivity_{j^*}>\productivity_{i^*}$, so we must have $\productivity_i$ equal to some constant $\balanceconstant_1$ for all $i$ in $N$.    
\end{proof}

Part (b) follows because $\alpha_i=1+(\bm G\bm a^*)_i$, so Part (a) implies that $(\bm G\bm a^*)_i$ is constant across active agents.
For Part (c), the equilibrium first-order condition gives $a_i^*=P'(Y^*)\tau_i^*\alpha_i$; since $P'(Y^*)$ and $\alpha_i$ are common across active agents, $\bm\tau^*$ is a common scalar multiple of $\bm a^*$, and Part (b) implies that $(\bm G\bm\tau^*)_i$ is constant across active agents.

\subsection{Proof of \Cref{p:profitsnetworkperturb}} 
Suppose $\optsharesvector$ is an optimal contract for network $\network$ with equilibrium team performance $Y^*(\network,\optsharesvector)$. Consider a perturbed network $\widetilde{\bm{G}}$ generated by increasing edge weight $G_{ij}$ to $G_{ij} + \epsilon$ for some $\epsilon > 0$. We will show that contract $\optsharesvector$ performs weakly better on network $\widetilde{\bm{G}}$ than on $\network$. Since we will be comparing $\optsharesvector$ across networks, we suppress the dependence of equilibrium team performance on the contract. 

Consider contract $\optsharesvector$ and network $\widetilde{\bm{G}}$. We want to show that the equilibrium team performance $Y^*(\widetilde{\network})$ is at least $Y^*(\network)$. The equilibrium actions solve \begin{align*}
    \eqlbactions(\widetilde{\bm{G}}) = P'(Y^*(\widetilde{\bm{G}}))\left[\bm{I} - P'(Y^*(\widetilde{\bm{G}}))\bm{T}^* \widetilde{\network}\right]^{-1} \optsharesvector.
\end{align*} Suppose $Y^*(\widetilde{\bm{G}}) < Y^*(\network)$. It follows that $\eqlbactions(\widetilde{\bm G})$ is pointwise weakly greater than $\eqlbactions(\bm G)$ by the Neumann series and the nonnegativity of the spillover matrix. However, this is a contradiction to $Y^*(\widetilde{\bm{G}}) < Y^*(\network)$ because \begin{align*}
    Y(\bm{a},\bm{G}) = \sum_{i}a_{i} + \frac{1}{2}\sum_{i,j}G_{ij}a_{i}a_{j}.
\end{align*} Thus, we must have $Y^*(\widetilde{\bm{G}}) \geq Y^*(\network)$. Moreover, $1-\sum_i\tau_i^*\geq0$ because the optimal contract weakly dominates the zero contract. The profits to the principal under contract $\optsharesvector$ are thus weakly higher on network $\widetilde{\bm{G}}$ than on network $\network$:\begin{align*}
    \left(1-\sum_{i}\tau_{i}^*\right)P\left(Y^*(\widetilde{\bm{G}})\right) \geq \left(1-\sum_{i}\tau_{i}^*\right)P\left(Y^*\left(\network\right)\right).
\end{align*} Finally, the optimal contract for network $\widetilde{\bm{G}}$ must deliver at least as high a payoff as contract $\optsharesvector$ does on network $\widetilde{\bm{G}}.$

\newpage

\clearpage

\pagenumbering{arabic}
\setcounter{page}{1}

\setcounter{section}{0}
\setcounter{subsection}{0}
\setcounter{subsubsection}{0}
\setcounter{equation}{0}
\setcounter{figure}{0}
\setcounter{table}{0}

\setcounter{theorem}{0}
\setcounter{lemma}{0}
\setcounter{proposition}{0}
\setcounter{corollary}{0}
\setcounter{assumption}{0}
\setcounter{conjecture}{0}
\setcounter{example}{0}
\setcounter{remark}{0}
\setcounter{fact}{0}

\renewcommand{\thesection}{OA.\Alph{section}}
\renewcommand{\thesubsection}{\thesection.\arabic{subsection}}
\renewcommand{\thesubsubsection}{\thesubsection.\arabic{subsubsection}}

\renewcommand{\theHsection}{OA.\Alph{section}}
\renewcommand{\theHsubsection}{OA.\Alph{section}.\arabic{subsection}}
\renewcommand{\theHsubsubsection}
  {OA.\Alph{section}.\arabic{subsection}.\arabic{subsubsection}}

\renewcommand{\theequation}{OA.\arabic{equation}}

\renewcommand{\thetheorem}{OA.\arabic{theorem}}
\renewcommand{\thelemma}{OA.\arabic{lemma}}
\renewcommand{\theproposition}{OA.\arabic{proposition}}
\renewcommand{\thecorollary}{OA.\arabic{corollary}}
\renewcommand{\theassumption}{OA.\arabic{assumption}}
\renewcommand{\thefact}{OA.\arabic{fact}}

\renewcommand{\theHequation}{onlineappendix.\Alph{section}.\arabic{equation}}
\renewcommand{\theHfigure}{onlineappendix.\Alph{section}.\arabic{figure}}
\renewcommand{\theHtable}{onlineappendix.\Alph{section}.\arabic{table}}

\renewcommand{\theHtheorem}{onlineappendix.\Alph{section}.\arabic{theorem}}
\renewcommand{\theHlemma}{onlineappendix.\Alph{section}.\arabic{lemma}}
\renewcommand{\theHproposition}{onlineappendix.\Alph{section}.\arabic{proposition}}
\renewcommand{\theHcorollary}{onlineappendix.\Alph{section}.\arabic{corollary}}
\renewcommand{\theHassumption}{onlineappendix.\Alph{section}.\arabic{assumption}}
\renewcommand{\theHconjecture}{onlineappendix.\Alph{section}.\arabic{conjecture}}
\renewcommand{\theHexample}{onlineappendix.\Alph{section}.\arabic{example}}
\renewcommand{\theHremark}{onlineappendix.\Alph{section}.\arabic{remark}}
\renewcommand{\theHfact}{onlineappendix.\Alph{section}.\arabic{fact}}

\title{Online Appendix:
Incentive Design With Spillovers}

\date{\today}

\makeatletter
\let\thankses\@empty
\makeatother
\maketitle

\setcounter{section}{1}

\section{Omitted Proofs}

\subsection{Proof of \Cref{fact:cesstrictstable}} We begin by providing sufficient conditions for strict best responses and stable equilibrium actions under Cobb-Douglas production, before turning to the CES specification. 

\textit{Proof of Part (i)}. Consider any contract $\bm{\tau}$. For Cobb-Douglas production, team performance is \begin{align*}
    Y(\bm{a}) = \prod_{i=1}^{n}a_{i}^{\gamma_{i}}. 
\end{align*} We begin by establishing strictness of best responses. The utility of an agent is \begin{align*}
    \mathcal{U}_{i} = \tau_{i}P(Y) - \frac{a_{i}^2}{2}. 
\end{align*} The second derivative of $\mathcal{U}_{i}$ with respect to $a_{i}$ is \begin{align} \label{eq:sdutil}
    \frac{\partial^2 \mathcal{U}_{i}}{\partial a_{i}^2} = \tau_{i} P''(Y) \left(\frac{\partial Y}{\partial a_{i}}\right)^2 + \tau_{i}P'(Y)\frac{\partial^2 Y}{\partial a_{i}^2} - 1. 
\end{align} The second derivative of $Y(\cdot)$ with respect to $a_{i}$ is $\gamma_{i}\left(\gamma_{i}-1\right)Y/a_{i}^2$. Thus, (\ref{eq:sdutil}) is strictly negative when $P(\cdot)$ is concave and $\gamma_{i} \leq 1$. 

We next establish the stability of equilibrium actions at the optimal contract. Observe that the action profile $a_{i}^*=0$ for every agent $i \in \mathcal{N}$ is a Nash equilibrium for any contract $\bm{\tau}$. Consequently, for a given contract, the principal's preferred equilibrium is the one --- if it exists --- in which all agents exert positive effort. For any contract such that $\tau_{i}=0$ for some agent $i \in \mathcal{N}$, the only equilibrium is one where every agent takes a zero action. For the remainder of the proof, we focus on contracts where every agent receives positive pay. 

We show that when the factor shares also satisfy \begin{align*}
    \sum_{i \in \mathcal{N}}\gamma_{i} < 2,
\end{align*} a unique equilibrium with positive actions exists and is  differentiable with respect to the  contract $\bm{\tau}$. The result relies on the following characterization of equilibrium team performance: given $\bm{\tau}$, an equilibrium with positive actions exists if and only if there is a solution to \begin{align} \label{eq:tpcharaccd}
    Y^{\frac{2-\sum_{i}\gamma_{i}}{\sum_{i}\gamma_{i}}} \cdot \frac{1}{P'(Y)} = \prod_{i}\left(\tau_{i}\gamma_{i}\right)^{\frac{\gamma_{i}}{\sum_{i}\gamma_{i}}}.
\end{align} To see why, suppose a positive equilibrium $\bm{a}^*$ exists and let $Y^*$ denote the corresponding team performance. Agents' actions must satisfy the equilibrium first-order conditions \begin{align*}
    a_{i}^* = \sqrt{\tau_{i}\gamma_{i}P'(Y^*)Y^*}\quad \text{for every } i \in \mathcal{N}. 
\end{align*} Raising both sides to the power $\gamma_{i}$, multiplying across all agents and taking the power of that equation to $\left(2/\sum_{i}\gamma_{i}\right)$ yields (\ref{eq:tpcharaccd}). For the converse direction, suppose there exists a solution to (\ref{eq:tpcharaccd}) which we denote by $Y^*$. The action profile defined by \begin{align*}
    a_{i}^* = \sqrt{\tau_{i}\gamma_{i}P'(Y^*)Y^*} \quad \text{for every } i \in \mathcal{N},
\end{align*} then constitutes a Nash equilibrium. This follows because when $\gamma_{i} \leq 1$, each agent's utility is strictly concave in own action, implying that any profile satisfying the first-order conditions is indeed a Nash equilibrium.  

We now establish that when $\sum_{i}\gamma_{i}<2$, (\ref{eq:tpcharaccd}) admits a unique positive solution for any given contract $\bm{\tau}$. By assumption, $Y^{\frac{2-\sum_{i}\gamma_{i}}{\sum_{i}\gamma_{i}}}$ is strictly increasing and $P'(\cdot)$ is weakly decreasing by concavity of $P(\cdot)$. Thus the left-hand side of (\ref{eq:tpcharaccd}) is strictly increasing which implies a unique positive solution exists. Moreover, because $P(\cdot)$ is smooth, the left-hand side of (\ref{eq:tpcharaccd}) is smooth, implying that the solution $Y^*(\bm{\tau})$ is differentiable with respect to the contract $\bm{\tau}$. Combining this with the uniqueness of the positive equilibrium action profile, it follows that the principal's preferred equilibrium is differentiable at every contract $\bm{\tau}$. 

\textit{Proof of Part (ii).} A similar analysis applies to CES production. Consider any contract $\bm{\tau}$. The team performance is \begin{align*}
    Y(\bm{a}) = \left(\sum_{i=1}^{n}\gamma_{i}a_{i}^{\rho}\right)^{\kappa / \rho}.
\end{align*} We begin by establishing strictness of best responses. The second derivative of $Y(\cdot)$ with respect to $a_{i}$ is \begin{align*}
    \frac{\partial^2 Y}{\partial a_{i}^2} = \gamma_{i} \kappa a_{i}^{\rho-2} \left(\sum_{\ell=1}^{n}\gamma_{\ell}a_{\ell}^{\rho}\right)^{\frac{\kappa}{\rho}-2} \left[\gamma_{i}\left(\kappa-1\right)a_{i}^{\rho} + (\rho-1)\sum_{j \neq i}\gamma_{j}a_{j}^{\rho}\right].
\end{align*} Thus, the second derivative of $\mathcal{U}_{i}$ with respect to $a_{i}$, as defined in (\ref{eq:sdutil}), is strictly negative when $P(\cdot)$ is concave, $\kappa \in (0,1]$ and $\rho \leq 1$. Under these conditions, each agent's utility is strictly concave in own action, implying that the best responses are unique. 

We next establish the stability of equilibrium actions at the optimal contract. We show that when $\kappa<2$, a unique Nash equilibrium exists and is differentiable with respect to the contract $\bm{\tau}$. The result relies on the following characterization: given $\bm{\tau}$, a positive equilibrium exists if and only if there is a solution to \begin{align} \label{eq:tpces}
    Y^{\frac{2-\kappa}{\kappa}} \cdot \frac{1}{P'(Y)} = \kappa \left(\sum_{i=1}^{n}\gamma_{i}\left(\tau_{i}\gamma_{i}\right)^{\frac{\rho}{2-\rho}}\right)^{\frac{2-\rho}{\rho}}. 
\end{align} To see why, suppose an equilibrium $\bm{a}^*$ exists and let $Y^*$ denote the corresponding team performance. Agents' actions must satisfy the equilibrium first-order conditions \begin{align*}
    a_{i}^* = \kappa \tau_{i} \gamma_{i} P'(Y^*) \left(Y^*\right)^\frac{\kappa-\rho}{\kappa} \left(a_{i}^*\right)^{\rho-1}.
\end{align*} Taking equilibrium action $a_{i}^*$ to the left-hand side and taking the power of that equation to $\rho/(2-\rho)$ yields \begin{align*}
    \left(a_{i}^*\right)^{\rho} = \left(\kappa \tau_{i} \gamma_{i} P'(Y^*)\right)^{\frac{\rho}{2-\rho}} \left(Y^*\right)^{\frac{\kappa-\rho}{\kappa} \cdot \frac{\rho}{2-\rho}} \quad \text{ for every } i \in \mathcal{N}.
\end{align*} Multiplying the equation by $\gamma_{i}$ and summing across all agents gives \begin{align*}
    \left(Y^*\right)^{\rho/\kappa} = \left(Y^*\right)^{\frac{\kappa-\rho}{\kappa} \cdot \frac{\rho}{2-\rho}} \left(\kappa P'(Y^*)\right)^{\frac{\rho}{2-\rho}}\sum_{i=1}^{n}\gamma_{i}\left(\tau_{i}\gamma_{i}\right)^{\frac{\rho}{2-\rho}}. 
\end{align*} Eq. (\ref{eq:tpces}) then follows by moving the terms involving $Y^*$ and $P'(Y^*)$ to the left-hand side and taking the power of that equation to $(2-\rho)/\rho$. For the converse direction, suppose there exists a solution to (\ref{eq:tpces}) which we denote by $Y^*$. The action profile defined by \begin{align*}
    a_{i}^* = \left(\kappa \tau_{i} \gamma_{i} P'(Y^*)\right)^{\frac{1}{2-\rho}} \left(Y^*\right)^{\frac{\kappa-\rho}{\kappa} \cdot \frac{1}{2-\rho}} \quad \text{ for every } i \in \mathcal{N},
\end{align*} then constitutes a Nash equilibrium. This follows because when $\rho\leq1$ and $\kappa \in (0,1]$, each agent's utility is strictly concave in own action, so any profile satisfying the first-order conditions constitutes a Nash equilibrium. 

By assumption that $\kappa<2$, the term $Y^{(2-\kappa)/\kappa}$ is strictly increasing in $Y$, and since $P'(\cdot)$ is weakly decreasing by concavity of $P(\cdot)$, the left-hand side of (\ref{eq:tpces}) is strictly increasing. It follows that equation (\ref{eq:tpces}) admits a unique positive solution for any given contract $\bm{\tau}$. 

Because $P(\cdot)$ is smooth, the left-hand side of (\ref{eq:tpces}) is smooth, implying that the solution $Y^*(\bm{\tau})$ is differentiable with respect to the contract $\bm{\tau}$. The statement of the result follows. 

\subsection{Proof of \Cref{p:equityces}} \textit{Proof of Part (a):} Recall that, after applying the monotone transformation $\widetilde{Y} = \log Y$ and $\widetilde{P}(\widetilde{Y}) = P(\text{exp}(\widetilde{Y}))$, team performance can be written as \begin{align*}
    \widetilde{Y}(\bm{a}) = \sum_{i \in \mathcal{N}}\gamma_{i}\log \left(a_{i}\right). 
\end{align*} Consider an optimal contract $\bm{\tau}^*$ and equilibrium $\bm{a}^*$. Each agent's first-order condition is \begin{align} \label{eq:cdfoc}
    a_{i}^* = \frac{\tau_{i}\gamma_{i}\widetilde{P}'(\widetilde{Y}^*)}{a_{i}^*}.
\end{align} We now compute the quantities in \Cref{t:generalmodel}. Differentiating $\widetilde{Y}$, we compute productivity to be $\widetilde{\alpha}_{i} = \gamma_{i} / a_{i}^*$. Since the transformed problem is separable, the matrix $\frac{\partial^2 \widetilde{Y}}{\partial a_i \partial a_j}$ of spillovers is diagonal with entries $\frac{\partial^2 \widetilde{Y}}{\partial a_i^2}   = -\frac{\gamma_i}{a_i^2}.$
We next compute agents' centralities at equilibrium to be\begin{align*}
    \widetilde{c}_{i} &= \widetilde{\alpha}_{i} \left(1 + \frac{\tau_i \widetilde{P}'(\widetilde{Y}^*)\gamma_i}{(a_i^*)^2} \right)^{-1}, \\
    &= \frac{\widetilde{\alpha}_{i}}{2} \quad \text{ by (\ref{eq:cdfoc})}.
\end{align*} We can now apply \Cref{t:generalmodel}, which states that under any optimal contract satisfying \Cref{as:balancederive}, the quantity $\frac{\widetilde{\alpha}^2_i}{2}$ is equal for all agents and thus all agents have the same productivity. This tells us that equilibrium actions are proportional to $\gamma_i$, each agent's factor share in the original production function. Given this, \Cref{eq:cdfoc} implies that each agent's payment $\tau_i$ under the optimal contract is proportional to his factor share $\gamma_i$.

\begin{fact}\label{fact:cesrhoone}
If $\rho=1$, only agents with maximal $\gamma_i$ receive positive payments at an optimal contract.
\end{fact}
\begin{proof}
When $\rho=1$, transformed production is $\widetilde Y(\bm a)=\sum_i\gamma_i a_i$, so $\widetilde\alpha_i=\widetilde c_i=\gamma_i$.
For any paid agent $i$, the principal's first-order condition gives $lD\widetilde P'(\widetilde Y^*)\gamma_i^2=\widetilde P(\widetilde Y^*)>0$, so $lD\widetilde P'(\widetilde Y^*)>0$. From this it follows that all agents who are paid must have identical $\gamma_i$, and only those with the maximal $\gamma_i$ can be paid.
\end{proof}

\textit{Proof of Part (b) for $\rho<1$:} It is again useful to transform team production: we consider an equivalent transformed problem in which we replace $Y$ with $\widetilde{Y} =  \frac{1}{\rho}Y^{\rho/\kappa}$ and $P(Y)$ with $\widetilde{P}(\widetilde{Y}) = P((\rho \widetilde{Y})^{\kappa/\rho})$. The transformed problem is:
\begin{equation}\widetilde{Y}(\bm{a}) = \frac{1}{\rho}\sum_{i=1}^n \gamma_i a_i^{\rho}. \label{eq:Y_transformed_pf} \end{equation}
This transformation again does not change the optimal contract or the corresponding equilibrium actions. The role of $\rho$ in the transformation is to ensure that $\widetilde{Y}$ is an increasing function of actions.

We characterize how the optimal contract divides payments among agents who do receive positive payments. Each of these agents has first-order conditions \begin{equation}\label{eq:CES_FOC}a_i^* = \tau_i \widetilde{P}'(\widetilde{Y}^*) \gamma_i \cdot (a_i^*)^{\rho-1}.\end{equation} We next compute productivities and centralities. Differentiating $\widetilde{Y}$, productivities are
$$ \widetilde{\alpha}_i = \gamma_i a_i^{\rho-1}.$$
The matrix $\frac{\partial^2 \widetilde{Y}}{\partial a_i \partial a_j}$ of spillovers is again diagonal with entries $\frac{\partial^2 \widetilde{Y}}{\partial a_i^2}   = \gamma_i (\rho-1)a_i^{\rho-2} .$
So agents' centralities at equilibrium are
\begin{align*}\widetilde{c}_i & =\widetilde{\alpha}_i\left(1 -\tau_i \widetilde{P}'(\widetilde{Y}^*)\gamma_i (\rho-1) \cdot (a_i^*)^{\rho-2} \right)^{-1} \\ & = \frac{\widetilde{\alpha}_i}{2-\rho} \text{ by  \cref{eq:CES_FOC}}. \end{align*}

We can now apply \Cref{t:generalmodel}, which states that under any optimal contract satisfying \Cref{as:balancederive}, the quantity $\frac{\widetilde{\alpha}^2_i}{2-\rho}$ is equal for all agents and thus all agents have the same productivity. This tells us that $(a_i^*)^{1-\rho}$ is proportional to $\gamma_i$. Given this, (\ref{eq:CES_FOC}) implies that each agent's payment $\tau_i$ under the optimal contract is proportional to $\gamma_i^{\frac{1}{1-\rho}}$.

\section{Applicability of the First-Order Approach}\label{a:appendix_indifferences} 

A key assumption in our analysis is that when we perturb the optimal contract in any direction, the induced equilibrium varies in a differentiable way (\Cref{as:balancederive}). We now discuss sufficient conditions for this assumption and provide balance conditions that hold in its absence.

We begin by observing that \Cref{as:balancederive} holds as long as the optimal contract induces a strict equilibrium and a stability property is satisfied. By the implicit function theorem, \Cref{as:balancederive}  is implied by the following two conditions:

\begin{assumption}\label{as:strictstable}
\begin{enumerate}[(a)]
\item (Invertibility of utility Hessian) The  matrix $$
    \left(\frac{\partial^2 \mathcal{U}_{i}}{\partial a_{j} \partial a_{i}}\right)_{i,j}$$ is non-singular at contract $\optsharesvector$ and corresponding equilibrium $\eqlbactions$.
\item (Strictness) The equilibrium $\eqlbactions(\sharesvector^*)$ is strict.
\end{enumerate}
\end{assumption}
This assumption gives more explicit conditions that guarantee a first-order approach applies. Part (a) is weaker than requiring stability of equilibrium under best-reply dynamics.\footnote{If we solve for local best-reply dynamics the Jacobian of the agents' first-order conditions in others' actions is given by the Hessian above. Stability requires that this Jacobian be well-defined and have some further properties.} Part (b) can impose more substantive restrictions, as we now discuss.

\Cref{ss:first-order-multi-agent} shows we can extend the sufficient conditions for strictness from \cite{rogerson1985first} to our multi-agent setting. \Cref{ss:indifferences} presents an analogous balance condition which only requires differentiability in \textit{some} directions, and therefore can provide insights even when equilibrium is not strict under the optimal contract.

\subsection{Sufficient conditions for strictness of equilibrium} \label{ss:first-order-multi-agent} This section establishes conditions on the environment that guarantee strictness of equilibrium actions.  Specifically, equilibrium actions are strict when the probability functions satisfy the \textit{monotone likelihood ratio property} and the \textit{convexity of distribution function property} (as in \citet{rogerson1985first}), and when the team performance is concave in each agent's action.

We can assume without loss of generality that the possible outcomes are $S=\{1,\hdots,|S|\}$ and that outcomes are ordered in increasing value to the principal: $v_1 < v_2 < \hdots < v_{|S|}$. We will use two conditions from \citet{rogerson1985first}:

\begin{assumption}
    \label{as:mlrpcdfc}
    The collection of distribution functions $\{P_{s}(\cdot)\}_{s \in \mathcal{S}}$ satisfies: \begin{enumerate}
        \item \textbf{Monotone Likelihood Ratio Property (MLRP)}: the ratio $P_{s}'(Y)/P_{s}(Y)$ is strictly increasing in the outcome $s$ for any team performance.
        \item \textbf{Convexity of distribution function condition (CDFC)}: for any outcome $s$, the cumulative distribution \begin{align*}
            \sum_{k=1}^{s}P_{k}(Y)
        \end{align*} is convex in team performance. 
    \end{enumerate}
\end{assumption}

MLRP requires that low-valued outcomes become less likely as team performance increases. A key implication is that the cumulative distribution function of any outcome is weakly decreasing in team performance. CDFC strengthens this by requiring convexity of the cumulative distribution, which will be crucial for the strictness of best responses.

A comparison with the single-agent setting is instructive. \citet{rogerson1985first} shows that with one agent taking a one-dimensional action, these conditions are sufficient for a first-order approach. We are able to extend the analysis to multi-agent settings because team performance in our framework depends only on a one-dimensional aggregate of individual actions. To do so, we will require \Cref{as:mlrpcdfc} as well as a concavity assumption on $Y(\bm{a})$. The assumption on $Y(
\bm{a})$ is satisfied automatically in our network games setting and is satisfied for Cobb-Douglas and CES production functions whenever all factor shares are at most $1$. 
\begin{theorem}
    \label{t:eqlbstrict} Suppose the collection of outcome probability functions $\{P_{s}(\cdot)\}_{s \in \mathcal{S}}$ satisfies Assumption~\ref{as:mlrpcdfc} and the conditions in \Cref{fact:opt_exist} are satisfied. Additionally, assume that the team performance function $Y(\cdot)$ is concave in each agent's action, i.e., the second derivative $\frac{\partial^2 Y}{\partial a_i^2}$ is non-positive for every agent $i \in \mathcal{N}$. Then there exists an optimal contract $\bm{\tau}^*$ such that the resulting equilibrium action profile $\bm{a}^*(\bm{\tau}^*)$ is strict. 
\end{theorem}

Recall that the principal chooses a contract to maximize expected profit, taking into account the agents' equilibrium actions in response. The principal's problem, denoted by (\ref{eq:principalmax}), is given by: \begin{align*} \label{eq:principalmax}
    \max_{\bm{\tau},\bm{a}^*(\bm{\tau})} \quad & \sum_{s=1}^{|S|}\left(v_{s}-\sum_{i \in \mathcal{N}}\tau_{i}(s)\right)P_{s}(Y), \\
    \text{s.t} \quad & \bm{a}^*(\bm{\tau}) \text{ is an equilibrium}, \tag{PM}\\
    & \tau_{i}(s) \geq 0 \quad \text{for every } i, \text{ and outcome } s. 
\end{align*} 

To prove \Cref{t:eqlbstrict}, we begin by analyzing a relaxation of the principal's profit maximization problem. In this relaxed formulation, the principal chooses a contract $\bm{\tau}$ and an action profile $\bm{a}^*(\bm{\tau})$ that satisfies the first-order (stationary) conditions of the agents' utility maximization problems. The objective is to maximize expected profits subject to these first-order conditions. The relaxation, denoted by (\ref{eq:principalmaxrelax}), is given by: \begin{align*} \label{eq:principalmaxrelax}
    \max_{\bm{\tau},\bm{a}^*(\bm{\tau})} \quad & \sum_{s=1}^{|S|}\left(v_{s}-\sum_{i \in \mathcal{N}}\tau_{i}(s)\right)P_{s}(Y), \\
    \text{s.t} \quad & \frac{\partial \mathcal{U}_{i}}{\partial a_{i}} = 0 \quad \text{for every } i \in \mathcal{N}, \tag{RPM}\\
    & \tau_{i}(s) \geq 0 \quad \text{for every } i, \text{ and outcome } s. 
\end{align*} 

We now justify the use of this relaxation. Any optimal solution to (\ref{eq:principalmax}) is a feasible solution to (\ref{eq:principalmaxrelax}). To see why, any optimal action for each agent must satisfy that agent's first-order condition. Additionally, any agent that takes a zero action must be paid zero at the optimal contract (else the principal could induce the same equilibrium by providing zero payments to these agents, which would strictly increase profit). By our assumption that the marginal cost at a zero action is zero, such an agent's marginal utility is zero. Thus, all equilibrium actions---whether interior or at the boundary---satisfy the first-order conditions that define the constraint in (\ref{eq:principalmaxrelax}).

The primary challenge lies in establishing that an optimal solution to (\ref{eq:principalmaxrelax}) is a feasible solution to (\ref{eq:principalmax}). Consequently, by the justification in the paragraph above, it would be an optimal solution to the latter. Toward this end, we will show that each agent's utility function is strictly concave at any optimal solution to (\ref{eq:principalmaxrelax}). Feasibility in (\ref{eq:principalmax}) and strictness of the equilibrium action profile will follow.

We begin with a lemma showing it is optimal to pay agents more at better outcomes:

\begin{lemma}
    \label{l:monotonicity} Consider any optimal solution $(\bm{\tau}^*,\bm{a}^*(\bm{\tau}^*))$ to (\ref{eq:principalmaxrelax}). For any agent $i \in \mathcal{N}$, \begin{align*}
        \tau_{i}^*(s+1) \geq \tau_{i}^*(s) \quad \text{ for all } s \in \mathcal{S}.
    \end{align*}  Furthermore, for any agent that receives a positive payment, there must be a strict increase in payments at some outcome. 
\end{lemma}

\begin{proof}
Consider any outcome $s \in \mathcal{S}$ that provides a payment to \textit{some} agent $i$. The principal's KKT first-order condition implies \begin{align} \label{eq:kktratio}
        \frac{1}{u_{i}'(\tau_{i}^*(s))} = \lambda_{i} \frac{\partial Y}{\partial a_{i}} \frac{P_{s}'(Y)}{P_{s}(Y)},
    \end{align} where $\lambda_{i}$ is the Lagrange multiplier corresponding to that agent's marginal utility constraint. We will establish that $\lambda_{i} > 0$ for any agent that receives a payment. Suppose this was not the case. Then, any outcome at which this agent is paid, it must be $P_{s}'(Y) < 0$ (the left hand side of (\ref{eq:kktratio}) is always positive). But this means that the agent must be taking a zero action because the marginal utility is non-positive. This contradicts optimality because the principal can induce the same equilibrium action by giving this agent zero payments and obtain a strictly higher profit. Thus, $\lambda_{i} > 0$ for any agent that receives a payment. 
    
For the rest of the argument, it suffices to focus on these agents --- payments are trivially monotonic for an agent that does not receive a payment at \textit{any} outcome.

Consider any agent $i$ and suppose $\tau_{i}^*(s) > 0$, but $\tau_{i}^*(s+1)=0$. We will show that the principal can strictly improve profits by providing a payment to agent $i$ at outcome $(s+1)$. Recall, the KKT first-order condition for $\tau_{i}^*(s)$, established in (\ref{eq:kktratio}), is \begin{align*}
    1 - u_{i}'(\tau_{i}^*(s)) \lambda_{i} \frac{\partial Y}{\partial a_{i}} \frac{P_{s}'(Y)}{P_{s}(Y)} = 0
\end{align*} By assumption, $\tau_{i}^*(s+1)<\tau_{i}^*(s)$ and the outcomes satisfy MLRP, that is, $$\frac{P_{s+1}'(Y)}{P_{s+1}(Y)} > \frac{P_{s}'(Y)}{P_{s}(Y)}.$$ Combined with the observation $\lambda_{i} > 0$, we get \begin{align} \label{eq:nextoutcomeineq}
    1 - u_{i}'(\tau_{i}^*(s+1)) \lambda_{i} \frac{\partial Y}{\partial a_{i}} \frac{P_{s+1}'(Y)}{P_{s+1}(Y)} < 0.
\end{align} 

The derivative of the Lagrangian w.r.t $\tau_{i}(s+1)$ is \begin{align*}
    \frac{\partial \mathcal{L}}{\partial \tau_{i}(s+1)} = -P_{s+1}(Y)\left(1 - u_{i}'(\tau_{i}(s+1))\lambda_{i}\frac{\partial Y}{\partial a_{i}}\frac{P_{s+1}'(Y)}{P_{s+1}(Y)}\right) > 0,
\end{align*} where the inequality follows from (\ref{eq:nextoutcomeineq}). This contradicts the optimality of $\tau_{i}^*(s+1)=0$, so $\tau_{i}^*(s+1)>0$. Moreover, under MLRP and the concavity of $u_i(\cdot)$, analysis of (\ref{eq:kktratio}) implies that $\tau_i^*(s+1) \geq \tau_i^*(s)$.
\end{proof} 

The intuition behind the result is straightforward. Agents are more responsive to incentives at outcomes that become more likely as team performance improves. Since the principal’s value also increases with the outcome, this alignment leads the principal to offer stronger incentives at higher-valued outcomes.

We utilize the monotonicity result to prove \Cref{t:eqlbstrict}. A key step establishes that each agent's utility function is strictly concave at any optimal solution to (\ref{eq:principalmaxrelax}). 

\begin{proof}[Proof of \Cref{t:eqlbstrict}]
    Consider any optimal solution $(\bm{\tau}^*,\bm{a}^*(\bm{\tau}^*))$ to (\ref{eq:principalmaxrelax}). An argument identical to the proof of \Cref{fact:opt_exist} establishes that an optimal solution to (\ref{eq:principalmaxrelax}) exists under the conditions of \Cref{fact:opt_exist}. We can write the utility of an agent as \begin{align*}
        \mathcal{U}_{i} &= \sum_{s=1}^{|S|}u_{i}(\tau_{i}(s))P_{s}(Y) - C_{i}(a_{i}) \\
        &= u_{i}(\tau_{i}(1)) + \sum_{s=2}^{|S|}\underbrace{\left(u_{i}(\tau_{i}(s))-u_{i}(\tau_{i}(s-1))\right)}_{u_{i}^{\Delta}(\tau_{i}(s))}\sum_{k=s}^{|S|}P_{k}(Y) - C_{i}(a_{i}).
    \end{align*} The second derivative of $\mathcal{U}_{i}$ with respect to $a_{i}$ is \begin{align*}
        \frac{\partial^2 \mathcal{U}_{i}}{\partial a_{i}^2} = \frac{\partial^2 Y}{\partial a_{i}^2} \sum_{s=2}^{|S|}u_{i}^{\Delta}(\tau_{i}(s)) \sum_{k=s}^{|S|}P_{k}'(Y) + \left(\frac{\partial Y}{\partial a_{i}}\right)^2 \sum_{s=2}^{|S|}u_{i}^{\Delta}(\tau_{i}(s)) \sum_{k=s}^{|S|}P_{k}''(Y) -C_{i}''(a_{i}).
    \end{align*} We will argue that the second-derivative is strictly negative for any action profile. By \Cref{l:monotonicity}, each $u_{i}^{\Delta}(\tau_{i}(s))$ is non-negative. By concavity of $Y(\cdot)$ in each agent's action and MLRP, the first term is non-positive. By CDFC, the second term is non-positive. And finally, since the cost function is strictly convex, $C_{i}''(a_{i})>0$. This establishes the fact that the second derivative is strictly negative. Thus, the utility function of any agent is strictly concave. This has two important implications. First, any action profile where all agents' first-order conditions are satisfied is in fact an equilibrium and thus a feasible solution to (\ref{eq:principalmax}). So any optimal solution to (\ref{eq:principalmaxrelax}) is an optimal contract.  Second, the equilibrium action profile at this optimal contract is strict. The statement of the result follows.  
\end{proof}

\subsection{Directional differentiability} \label{ss:indifferences}
This section shows modified balance conditions can continue to hold when the induced equilibrium varies differentiably as the contract is perturbed in some, but not necessarily all, directions. The underlying idea is that the principal may only have available perturbations that preserve some global constraints along with the first-order conditions.

Consider an optimal contract $\bm{\tau}^*$ and a corresponding equilibrium $\eqlbactions(\bm{\tau}^*)$ that is stable but not strict. Then the implicit function theorem ensures that as we perturb the contract $\bm{\tau}^*$ to $\bm{\tau}$ locally, there is an action profile $\bm{a}(\bm{\tau})$ satisfying agents' first-order conditions that is continuously differentiable in $\bm{\tau}$. Since the equilibrium is no longer strict, however, these action profiles need not be equilibria for all $\tau$: if agent $i$ is indifferent to his equilibrium action  $a^*(\bm{\tau}^*)_i$ and some alternative, then after perturbing the contract the locally optimal action ${a}(\bm{\tau})_i$ need no longer be globally optimal. Then there could be perturbations for which the principal would prefer  $\bm{a}(\bm{\tau})$ to $\bm{a}^*(\bm{\tau}^*)$, so our full balance conditions need no longer hold. But if there are perturbations such that  $\bm{a}(\bm{\tau})$ remains an equilibrium, then we can obtain balance conditions in the directions of all such perturbations.

Consider increasing payments to agent $i$ in the direction $\bm{t}_{i}$, where $t_{i}(s)$ specifies the change in payments to agent $i$ in each outcome $s$. We hold fixed payments to all other agents $j\neq i$.
\begin{definition}
    We say \textbf{equilibrium is maintained} in direction $\bm{t}_{i}$ if there exist a one-parameter family $\bm{\tau}(x)$ of contracts with $\bm{\tau}(0)=\bm{\tau}^*$ such that $\tau_i'(0)=\bm{t}_i$, $\tau'_j(0)=0$ for $j \neq i$, and such that $\bm{a}(\bm{\tau}(x))$ is an equilibrium for $x$ in some neighborhood of $0$.
\end{definition}

When there are directions where equilibrium is maintained, we obtain balance conditions in those directions. The statements of these conditions must be modified since we now change payments under multiple outcomes.
\begin{theorem}
\label{t:balanceindifferences}
Suppose $\bm{\tau}^*$ is an optimal contract inducing equilibrium $\eqlbactions$ with team performance $Y^*$. For all directions $\bm{t}_{i}$ such that equilibrium is maintained and $\tau^*_i(s)>0$ whenever $t_i(s) > 0$, we have \begin{align*}
        q \productivity_{i}\centrality_{i}\left(\sum_{s \in \mathcal{S}}P_{s}'(Y^*)t_{i}(s)\cdot \frac{u_{i}'(\tau_{i}^*(s))}{C''_i(a_i^*)}\right) = \sum_{s \in \mathcal{S}}P_{s}(Y^*)t_i(s)
    \end{align*} for a constant $q$ that is independent of the direction $\bm{t}_{i}$ and agent $i$. 
\end{theorem}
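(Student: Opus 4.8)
The plan is to run the same two-step argument used for \Cref{t:generalmodel}: first establish a directional analogue of \Cref{l:changeteamperformance} computing $dY/dx$ as the contract moves along the family $\bm{\tau}(x)$, and then impose the principal's first-order condition along that family. The only structural change is that the perturbation now moves agent $i$'s payments in every outcome simultaneously (at rate $t_i(s)$) rather than in a single outcome, so the ``source term'' driving agent $i$'s best-response shift becomes an aggregate over outcomes; because the perturbation is still confined to agent $i$'s incentive constraint, everything else in the derivation carries over unchanged.

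First I would set $\bm{a}(x) := \bm{a}(\bm{\tau}(x))$. By stability (invertibility of the utility Hessian) and the implicit function theorem, $\bm{a}(x)$ is the $C^1$ selection solving all agents' first-order conditions near $x=0$; the hypothesis that equilibrium is maintained in direction $\bm{t}_i$ guarantees $\bm{a}(x)$ is an actual equilibrium for $x$ near $0$, so differentiating along it is legitimate. Implicitly differentiating the first-order conditions
$$C_j'(a_j) = \Bigl(\sum_{s\in\mathcal{S}}P_s'(Y)u_j(\tau_j(s))\Bigr)\frac{\partial Y}{\partial a_j}$$
in $x$ and evaluating at $x=0$ produces exactly the terms appearing in the proof of \Cref{l:changeteamperformance}, except that the direct incentive term arises only in agent $i$'s equation and equals $\frac{\partial Y}{\partial a_i}\sum_{s\in\mathcal{S}}P_s'(Y^*)u_i'(\tau_i^*(s))\,t_i(s)$ (here the hypothesis $\tau_i^*(s)>0$ whenever $t_i(s)>0$, together with the definition of ``equilibrium is maintained,'' keeps the family in the nonnegative orthant on both sides of $0$). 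Repeating verbatim the matrix manipulations there---assembling the system into $[\bm{H}-\bm{U}\bm{G}]$, pre-multiplying by $\nabla Y(\eqlbactions)^\top$, rearranging, and invoking the definitions of $\productivity_i$ and $\centrality_i$---yields
$$\frac{dY}{dx}\Big|_{x=0} = l\,\productivity_i\centrality_i\sum_{s\in\mathcal{S}}P_s'(Y^*)\,t_i(s)\,u_i'(\tau_i^*(s)),$$
with $l=\bigl(1-\nabla Y(\eqlbactions)^\top[\bm{H}-\bm{U}\bm{G}]^{-1}\bm{d}\bigr)^{-1}$ precisely the constant from \Cref{l:changeteamperformance}; in particular $l$ depends neither on $i$ nor on $\bm{t}_i$.

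Next I would write the principal's payoff along the family, $\Pi(x)=\sum_{s'\in\mathcal{S}}\bigl(v_{s'}-\sum_{j}\tau_j(s';x)\bigr)P_{s'}(Y(x))$, and use that $\bm{\tau}(x)$ is a feasible contract for $x$ in a neighborhood of $0$ while $\bm{\tau}^*=\bm{\tau}(0)$ is optimal; hence $x=0$ maximizes $\Pi(\bm{\tau}(x))$ and $\Pi'(0)=0$. Since only agent $i$'s payments vary, and at rate $t_i(s)$,
$$\Pi'(0) = \frac{dY}{dx}\Big|_{x=0}\underbrace{\sum_{s'\in\mathcal{S}}\Bigl(v_{s'}-\sum_j\tau_j^*(s')\Bigr)P_{s'}'(Y^*)}_{D}\;-\;\sum_{s\in\mathcal{S}}t_i(s)P_s(Y^*) \;=\; 0 .$$
Substituting the formula for $dY/dx|_{0}$ and rearranging gives the claim with $q=lD$, which is independent of $i$ and of the direction $\bm{t}_i$ ($D\neq 0$ since each $P_s(Y^*)>0$).

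I expect the main obstacle to be essentially bookkeeping rather than conceptual: checking that the matrix computation of \Cref{l:changeteamperformance} transfers with the single scalar source $\frac{\partial Y}{\partial a_i}P'_s(Y^*)u_i'(\tau_i^*(s))$ replaced by the aggregated source $\frac{\partial Y}{\partial a_i}\sum_s P'_s(Y^*)u_i'(\tau_i^*(s))t_i(s)$, and being careful about feasibility---that the family stays in the nonnegative orthant on both sides of $x=0$, which is exactly what the hypotheses on $\bm{t}_i$ secure, so that the optimality condition is an equality and not a one-sided inequality. There is no new conceptual content beyond \Cref{t:generalmodel}; the point of the statement is that the only equilibrium-perturbation input required is differentiability of the action selection in the restricted family of directions $\bm{t}_i$.
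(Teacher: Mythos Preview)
Your proposal is correct and matches the paper's own proof essentially step for step: the paper also proves a directional version of \Cref{l:changeteamperformance} (stated there as \Cref{l:teamperfindiff}) by implicitly differentiating agents' first-order conditions along the family $\bm{\tau}(x)$, obtaining the same aggregated source term and the same constant $l$, and then substitutes into the principal's first-order condition $\Pi'(0)=0$ to get $q=lD$. The only cosmetic difference is notation---the paper writes $dY(\bm{t}_i)$ for what you call $dY/dx|_{x=0}$.
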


\begin{proof}
    The proof follows a similar approach to the proof of \Cref{t:generalmodel}. Consider any agent $i$ receiving a payment in at least one outcome. Also, consider a direction of payment perturbation $\bm{t}_{i}$. As before, we can ignore agents that receive a zero payment at the optimal contract as they continue to take a zero action and not contribute to the change in equilibrium team performance. The change induced by such perturbation $\bm{t}_{i}$ is \begin{align*}
        dY(\bm{t}_{i}) = \nabla Y(\eqlbactions)^{\top} \cdot d\eqlbactions(\bm{t}_{i}),
    \end{align*} where $dY(\cdot)$ is the directional derivative of equilibrium team performance and $d\eqlbactions(\cdot)$ is the directional derivative of equilibrium actions. We state a lemma which characterizes the change in equilibrium actions for such a perturbation.

\begin{lemma}
\label{l:teamperfindiff}
    The change in equilibrium team performance as payments to agent $i$ are perturbed in direction $\bm{t}_{i}$ is \begin{align*}
        dY(\bm{t}_{i}) = l \productivity_{i}\centrality_{i} \sum_{s \in \mathcal{S}}P_{s}'(Y^*)t_{i}(s)\cdot \frac{u_{i}'(\tau_{i}^*(s))}{C''_i(a_i^*)},
    \end{align*} for some constant $l$.
\end{lemma}

\begin{proof}
Consider the equilibrium action profile $\eqlbactions$. For an agent $j$, the first-order conditions imply $a_{j}^*$ must solve the equation \begin{equation} \label{eq:bestresponsegeneralindiff}
    C_{j}'(a_{j}) = \left(\sum_{s' \in \mathcal{S}}P_{s'}'(Y)u_{j}(\tau_{j}(s'))\right) \frac{\partial Y}{\partial a_{j}}.
\end{equation} 

Let us take the directional derivative of (\ref{eq:bestresponsegeneralindiff}) in direction $\bm{t}_{i}$. The expression we obtain depends on whether $j = i$. For all $j \neq i$, \begin{multline*}
    C_{j}''(a_{j})da_{j}^*(\bm{t}_{i}) = \left(\sum_{s \in \mathcal{S}}P_{s}'(Y)u_{j}(\tau_{j}(s))\right) \sum_{k=1}^{n}\frac{\partial^2 Y}{\partial a_{k} \partial a_{j} }da_{k}^*(\bm{t}_{i}) \\ + \frac{\partial Y}{\partial a_{j}}dY(\bm{t}_{i})\sum_{s \in \mathcal{S}}P_{s}''(Y)u_{j}(\tau_{j}(s)).
\end{multline*} On the other hand, for $j=i$ \begin{multline*}
    C_{j}''(a_{j})da_{j}^*(\bm{t}_{i}) = \left(\sum_{s \in \mathcal{S}}P_{s}'(Y)u_{j}(\tau_{j}(s))\right) \sum_{k=1}^{n}\frac{\partial^2 Y}{\partial a_{k} \partial a_{j} }da_{k}^*(\bm{t}_{i}) \\ + \frac{\partial Y}{\partial a_{j}}\sum_{s \in \mathcal{S}}P_{s}'(Y)t_{j}(s)u_{j}'(\tau_{j}(s)) + \frac{\partial Y}{\partial a_{j}}dY(\bm{t}_{i})\sum_{s \in \mathcal{S}}P_{s}''(Y)u_{j}(\tau_{j}(s)).
\end{multline*}

We can combine these equations to write the resulting expression in vector form as \begin{align*}
    d\eqlbactions(\bm{t}_{i}) &= \left[\bm{H}-\bm{U}\bm{G}\right]^{-1} \begin{bmatrix}
        0 \\
        \frac{\partial Y}{\partial a_{i}} \cdot \sum_{s \in \mathcal{S}}P_{s}'(Y)t_{i}(s)u_{i}'(\tau_{i}(s)) \\
        0
    \end{bmatrix} + dY(\bm{t}_{i})\left[\bm{H}-\bm{U}\bm{G}\right]^{-1}\bm{d},
\end{align*} where $\bm{d}$ is a vector with $j^{\text{th}}$ element defined as \begin{align*}
    d_{j} := \frac{\partial Y}{\partial a_{j}} \cdot \sum_{s \in \mathcal{S}}P_{s}''(Y)u_{j}(\tau_{j}(s)).
\end{align*} The change in equilibrium team performance is \begin{multline*}
    dY(\bm{t}_{i}) = \nabla Y(\eqlbactions)^{T} \left[\bm{H}-\bm{U}\bm{G}\right]^{-1} \begin{bmatrix}
        0 \\
        \frac{\partial Y}{\partial a_{i}} \cdot \sum_{s \in \mathcal{S}}P_{s}'(Y)t_{i}(s)u_{i}'(\tau_{i}(s)) \\
        0
    \end{bmatrix} \\ + dY(\bm{t}_{i}) \nabla Y(\eqlbactions)^{T} \left[\bm{H}-\bm{U}\bm{G}\right]^{-1}\bm{d}.    
\end{multline*} Applying the definitions of $\productivity_{i}$ and $\centrality_{i}$ while rearranging the equation, we obtain \begin{align*}
    dY(\bm{t}_{i}) &= \frac{1}{1-\nabla Y(\eqlbactions)^{T} \left[\bm{H}-\bm{U}\bm{G}\right]^{-1}\bm{d}} \productivity_{i}\centrality_{i} \sum_{s \in \mathcal{S}}P_{s}'(Y^*)t_{i}(s) \cdot \frac{u_{i}'(\tau_{i}(s))}{C''_i(a_i^*)}, \\
    &= l \productivity_{i}\centrality_{i} \sum_{s \in \mathcal{S}}P_{s}'(Y^*)t_{i}(s)\cdot \frac{u_{i}'(\tau_{i}(s))}{C''_i(a_i^*)},
\end{align*}where $l = \frac{1}{1-\nabla Y(\eqlbactions)^{T} \left[\bm{H}-\bm{U}\bm{G}\right]^{-1}\bm{d}}$. Observing that $l$ does not depend on $i$, we obtain the desired result.
\end{proof}

To prove the theorem, we utilize this expression for change in team performance to analyze the principal's first-order condition. For any agent $i$, consider a direction of perturbation $\bm{t}_{i}$ such that whenever $t_{i}(s) > 0$, the optimal contract is such that $\tau_{i}^*(s) > 0$. Because equilibrium is maintained in direction  $\bm{t}_{i}$ , the following principal first-order condition must hold: \begin{align*}
    dY(\bm{t}_i{}) \underbrace{\sum_{s \in \mathcal{S}}\left(v_{s} - \sum_{i=1}^{n}\tau_{i}(s)\right)P_{s}'(Y^*)}_{D} = \sum_{s \in \mathcal{S}}P_{s}(Y^*)t_{i}(s)
\end{align*} Substituting the expression in \Cref{l:teamperfindiff}, we get \begin{align*}
    lD \productivity_{i}\centrality_{i} \sum_{s \in \mathcal{S}}P_{s}'(Y^*)t_{i}(s)\cdot \frac{u_{i}'(\tau_{i}^*(s))}{C''_i(a_i^*)} = \sum_{s \in \mathcal{S}}P_{s}(Y^*)t_{i}(s).
\end{align*} The expression in the theorem follows by taking $q = lD$.
\end{proof}

\section{Contract design under measurement error} %

\label{a:centralityestimate}

Our analysis has assumed the principal knows the production function $Y(\bm{a})$. In this section we ask the following question: when can the principal improve on a given contract using only an imperfect measurement of a team's network structure? More concretely: the principal can check our balance condition using the  observed technological complementarity matrix. Suppose that this suggests shifting incentives from one agent $j$ to another agent $i$ would be beneficial. Will this shift actually increase the principal's payoff under the true network?

We continue to work in the setting of \Cref{ex:main} with linear-quadratic production and  risk-neutral agents:
  $$Y(\bm{a}) = \sum_{i=1}^n k_i a_i + \frac{1}{2}\sum_{i,j=1}^n G_{ij} a_i a_j.$$
Suppose a principal observes productivities $\bm{\alpha}$, the derivative $P'(Y^*)$, and an imperfect measurement $\bm{G}'$ of $\bm{G}$. Furthermore, suppose the principal naively calculates the centrality vector using the measured technological complementarity matrix $\bm{G}'$ as if it were $\bm{G}$:
$$\bm{c}'=  \bm{\alpha}^{\top} (I-P'(Y^*) \bm{T} \bm{G}')^{-1}.$$
Suppose that under this calculation, the balance condition suggests increasing $\tau_i$ and decreasing $\tau_j$ will be profitable, i.e.,
\begin{equation} \alpha_i c'_i -\alpha_jc'_j > \epsilon \label{eq:apparent_reallocation} \end{equation}
for some $\epsilon>0$. We ask when the principal can guarantee that the analogous statement holds for the quantities derived from the true network, i.e., $$\alpha_i c_i - \alpha_j c_j > 0.$$
This inequality implies that marginally increasing $\tau_i$ and decreasing $\tau_j$ will indeed increase revenue. 

The measurement of the network clearly needs to have enough precision, i.e., the difference between $\bm{G}$ and $\bm{G}'$ should be small in some sense. Our (standard) measure of this difference will be the  spectral norm $\|\bm{G}-\bm{G}'\|_2$. Our question then becomes: is the principal safe in taking the approximation at face value if this difference is small? Though the answer turns out to be no in general, we can give several sufficient conditions. The first is that the spillovers between agents are not very large, as we now discuss.
\medskip

\textbf{Bounded spillovers}: Suppose that $\|\bm{T G }\|_2 P'(Y^*) < r$ for some fixed constant $r<1$. The assumption says that spillovers between agents are not too large: there is a bound  on the total effects of direct and indirect spillovers. Under this condition, the next result shows that the principal can ignore small amounts of measurement error in checking for profitable deviations. Moreover, the proof gives an explicit bound on how much measurement error is permissible.

\begin{proposition}\label{p:bounded_spillovers}
    Suppose  $\|\bm{T G}\|_2 P'(Y^*) < r$ for some fixed constant $r<1$. If $\alpha_i c_i' - \alpha_j c_j' > \epsilon$ for some $\epsilon>0$, then we can choose $\delta>0$ independent of $\bm{G}$ such that $\alpha_i c_i - \alpha_j c_j > 0$ whenever $\|\bm{G}-\bm{G}'\|_2 < \delta$.
\end{proposition}

\begin{proof}
We take $\delta$ small enough so that $r+\delta P'(0) < 1$, and therefore $r+\delta P'(Y)<1$ for all $Y$ by concavity of $P(Y)$. Note that $$\underbrace{(I - P'(Y^*)\bm{T} \bm{G} )^{-1}}_{\bm{M}}-\underbrace{(I - P'(Y^*) \bm{T} \bm{G}')^{-1}}_{\bm{M}'}  = (I - P'(Y^*)  \bm{T}\bm{G})^{-1}(P'(Y^*)\bm{T}(\bm{G}-\bm{G}'))(I - P'(Y^*) \bm{T}\bm{G}' )^{-1}.$$
Solving for the norm of $\bm{M}-\bm{M}'$, \begin{align*}
\|\bm{M} - \bm{M}'\|_2 &  =  \| (I - P'(Y^*) \bm{T}\bm{G} )^{-1}P'(Y^*)\bm{T}(\bm{G}-\bm{G}')(I - P'(Y^*) \bm{T}\bm{G}')^{-1} \|_2,
\\ & \leq \frac{\delta P'(Y^*)}{(1-r)(1-r-\delta P'(Y^*))} \quad (\text{by } \max_i \tau_i \leq 1),
 \\ & \leq  \frac{\delta P'(0)}{(1-r)(1-r-\delta P'(0))} \quad (\text{by concavity of } P(Y)).
\end{align*} 
This implies $$\|\bm{c}-\bm{c}'\|_2 \leq \|\bm{\alpha}\|_2  \cdot \frac{\delta P'(0)}{(1-r)(1-r-\delta P'(0))} .$$
So $$\|\bm{\alpha}(\bm{c}-\bm{ c'})\|_2  \leq \|\bm{\alpha}\|^2_2  \cdot \frac{\delta P'(0)}{(1-r)(1-r-\delta P'(0))}.$$
The result follows from choosing $\delta$ small enough so that the right-hand side is less than $\epsilon$.
\end{proof}

\medskip
When spillovers are very large, however, productivities and centralities can be very sensitive to small perturbations of the network. In this case, a violation of the balance condition based on the observed network need not give a profitable deviation for the principal. The following example shows that when spillovers are very large, the effects of transferring bonuses between agents may be very sensitive to mismeasurement.

\begin{example2} Consider a network with four agents and two non-zero links $G_{12}$ and $G_{34}$, with $$Y = \sum_i a_i + G_{12}a_1a_2+G_{34}a_3a_4.$$ Suppose that agents are risk-neutral and $P(Y)$ is locally linear with slope $P'(Y^*)$. By taking $Y^*$ close to zero, we can take $P'(Y^*)$ to be arbitrarily large.

The network consists of two distinct subteams with a single link in each. We now show that small measurement error can generate large observed differences in centralities between the two subteams.

Consider a contract $\tau = (\frac18,\frac18,\frac18,\frac18)$. Equilibrium actions are
$$a_1^* = a_2^* = \frac{P'(Y^*)}{8}\cdot \frac{1}{1-P'(Y^*) G_{12}/8}, a_3^* = a_4^* = \frac{P'(Y^*)}{8}\cdot \frac{1}{1-P'(Y^*) G_{34}/8}.$$
The principal observes $\bm{G}'$ and the productivities $$\alpha_1=\alpha_2 = 1+G_{12}a_1^*, \alpha_3=\alpha_4=1+G_{34}a_3^*.$$
We suppose that $\bm{G}'$ also has two non-zero links $G'_{12}$ and $G'_{34}$ (and no other links). So the principal naively calculates centralities
$$\bm{c}_1'=\bm{c}_2'=\alpha_1 \frac{1}{1-P'(Y^*) G_{12}'/8}, \bm{c}_3'=\bm{c}_4'=\alpha_3 \frac{1}{1-P'(Y^*) G_{34}'/8}.$$

Fix $\epsilon>0$. Let $$G_{12} = g + \eta/2, G_{34} =g, G_{12}'=g, G_{34}' = g+\eta,$$
where $\eta>0$ is arbitrary. So the link between $1$ and $2$ is stronger than the link between $3$ and $4$, but the principal believes the link between $3$ and $4$ is the stronger one due to measurement error. We calculate \begin{align*}\alpha_1 c'_1  & =\left(1+\frac{P'(Y^*)}{8}\cdot \frac{g + \eta/2}{1-P'(Y^*) (g+\eta/2)/8}\right)^2  \cdot \frac{1}{1-P'(Y^*) g/8} \\ & < \left(1+\frac{P'(Y^*)}{8}\cdot \frac{g}{1-P'(Y^*) g/8} \right)^2 \cdot \frac{1}{1-P'(Y^*) (g+\eta)/8}\\& =\alpha_3c'_3\end{align*} for $g$ such that $P'(Y^*)g/8$ is sufficiently close to $1$. By taking $g$ such that $P'(Y^*)g/8$ is sufficiently close to $1$, we can ensure
$\alpha_3 c'_3- \alpha_1c'_1 \geq \epsilon .$
Since $\alpha_1>\alpha_3$ and $c_1>c_3$, however, $ \alpha_3 c_3- \alpha_1c_1 < 0.$ This all holds taking $\eta$, and therefore the norm $\|\bm{G}-\bm{G}'\|_2$, to be arbitrarily small.

\end{example2}

The key issue in the example is that the principal must decide how to allocate incentives between two disconnected sub-teams. Spillovers in each of these teams are strong, so a small error in measuring the network within a team can be amplified substantially. This makes identifying which subteam is more responsive to incentives important, but measurement error may make it impossible. We can rule out this situation with a standard assumption ensuring the network cannot be partitioned into two subteams with only very weak connections between them.

\medskip

\textbf{Bounded Spectral Gap}: Suppose that the spectral gap (i.e., the difference between the absolute values of the largest and second-largest eigenvalues) of $\bm{T G}$ is bounded below by some positive constant. (This is a standard measure of segregation---see, e.g., \citet{vonLuxburg2007} and \citet{GolubJackson2012}.) Then the principal can adapt the balance condition to ensure violations guarantee profitable changes in contract as long as the measurement error $\|\bm{G}-\bm{G}'\|_2$ is not too large. 

The key idea is that the principal decomposes $\bm{G}'$ into a leading principal component and an orthogonal residual component: $$ \bm{G}' = \bm{L}' + \bm{R}'.$$ 
The apparent value of reallocating incentives from agent $i$ to agent $j$, as in (\ref{eq:apparent_reallocation}), can now be calculated as the sum of two parts: (i)  the contribution to this based on the leading component of the network and (ii) the contribution from the residual part of the network, which captures the group structure of the network. The argument shows that  (i) can be very sensitive to mismeasurement, but that mismeasurement will only affect its magnitude and not its sign. The contribution of (ii) is not very sensitive to mismeasurement because the spectral gap is not too small, which ensures the spillovers in the residual components are bounded.

To formalize this, let $\bm{v}_1,\hdots,\bm{v}_n$ and $\lambda_1,\hdots,\lambda_n$ be the eigenvectors and eigenvalues, respectively, of $\bm{T^{\frac12} G T^{\frac12}},$ chosen so that the eigenvalues satisfy $|\lambda_1|>\hdots > |\lambda_n|$, the eigenvectors form an orthonormal basis, and $\bm{v}_1$ is normalized to have positive entries. Let $\bm{v}'_1,\hdots,\bm{v}'_n$ and $\lambda'_1,\hdots,\lambda'_n$ be the eigenvectors and eigenvalues, respectively, of $\bm{T^{\frac12} G' T^{\frac12}},$ chosen in the same way.  The final result of this section shows that when measurement error is small, the principal can confirm a deviation be profitable by checking versions of the balance condition in the leading principal component $\bm{L}'$ and the residual component $\bm{R}'$. We note that the orthogonal transformation to the eigenbasis introduces some additional terms related to the contract $\bm{\tau}$.

\begin{proposition}
Suppose $P'(Y^*)\lambda_1 - P'(Y^*)|\lambda_2|>r'$ for some constant $r'>0$. If $$\tau_i^{-\frac12}\alpha_i v_{1i}' - \tau_j^{-\frac12}\alpha_j v_{1j}' > \epsilon \text{ and } \sum_{\ell>1}  \langle \bm{T}^{\frac12}\bm{c}', \bm{v}'_{\ell} \rangle( \tau_i^{-\frac12}\alpha_i  v_{\ell i}' -\tau_j^{-\frac12} \alpha_j v_{\ell j}') > \epsilon $$
for some $\epsilon>0$, then we can choose $\delta>0$ independent of $\bm{G}$ such that $\alpha_i c_i - \alpha_j c_j>0$ whenever $\|\bm{G}-\bm{G}'\|_2 < \delta$.
\end{proposition}
\begin{proof}

We will show that we can choose $\delta>0$ independent of $\bm{G}$ such that $$\tau_i^{-\frac12}\alpha_i v_{1i} - \tau_j^{-\frac12}\alpha_j v_{1j} > 0\text{ and } \sum_{\ell>1} \langle \bm{T}^{\frac12}\bm{c}, \bm{v}_{\ell} \rangle( \tau_i^{-\frac12}\alpha_i  v_{\ell i} - \tau_j^{-\frac12}\alpha_j v_{\ell j}) \geq 0 .$$
This will imply
\begin{align*}
\alpha_ic_i - \alpha_j c_j & =  \sum_{\ell =1}^n \langle \bm{T}^{\frac12}\bm{c}, \bm{v}_{\ell} \rangle( \tau_i^{-\frac12}\alpha_i  v_{\ell i} -\tau_j^{-\frac12} \alpha_j v_{\ell j})
\\ & = \langle  \bm{T}^{\frac12}\bm{c}, \bm{v}_{1}  \rangle (\tau_i^{-\frac12}\alpha_i v_{1i} - \tau_j^{-\frac12}\alpha_j v_{1j} ) + \sum_{\ell>1}\langle \bm{T}^{\frac12} \bm{c}, \bm{v}_{\ell} \rangle( \tau_i^{-\frac12}\alpha_i  v_{\ell i} - \tau_j^{-\frac12}\alpha_j v_{\ell j})
\\ & > 0
\end{align*}
as desired.

\textbf{Step (i)}: $ \tau_i^{-\frac12}\alpha_i v_{1i} - \tau_j^{-\frac12}\alpha_j v_{1j} > 0$.

We begin by bounding $\|\bm{v}_1-\bm{v}_1'\|_2$. We have
\begin{align*} \|\bm{T}^{\frac12}\bm{G}\bm{T}^{\frac12}\bm{v}_1' \|_2  &\geq \|\bm{T}^{\frac12}\bm{G}'\bm{T}^{\frac12}\bm{v}_1' \|_2 - \|\bm{T}^{\frac12}(\bm{G}' - \bm{G})\bm{T}^{\frac12}\bm{v}_1'  \|_2,
\\ & \geq \|\lambda_1' \bm{v}_1'\|_2 -\|\bm{T}^{\frac12}\|_2\|\bm{G}' - \bm{G}\|_2\|\bm{T}^{\frac12}\|_2\|\bm{v}_1'   \|_2,
\\ & \geq \lambda_1' - \delta, \\
& \geq \lambda_{1} - 2\delta.\end{align*}
The final inequality follows from the bound $\|\bm{T}^{\frac{1}{2}}\left(\bm{G}'-\bm{G}\right)\bm{T}^{\frac{1}{2}}\|_{2} < \delta$ which implies $|\lambda_{1}-\lambda_{1}'| < \delta$ by Weyl's theorem. 

On the other hand, writing $\bm{v}_1' = \sum w_i v_i$ with $\sum_i w_i^2=1,$ we have $$ \|\bm{T}^{\frac12}\bm{G}\bm{T}^{\frac12}\bm{v}_1' \|_2  = \left(\sum_i w_i^2 |\lambda_i|^2\right)^{\frac12}.$$
Combining our two expressions for $\|\bm{T}^{\frac12}\bm{G}\bm{T}^{\frac12}\bm{v}_1' \|_2$,
$$\left(\sum_i w_i^2 |\lambda_i|^2\right)^{\frac12} \geq \lambda_1 - 2\delta.$$
Since $\lambda_1-|\lambda_2| > r'/P'(Y^*)$ and $\lambda_1P'(Y^*) < 1$, this implies $w_1 \rightarrow 1$ as $\delta \rightarrow 0$. So we can choose $f(\delta)$ depending only on $r'$ and $P'(Y^*)$ and converging to $0$ as $\delta \rightarrow 0$ such that $\|\bm{v}_1-\bm{v}_1'\|_2 < f(\delta)$ whenever $\|\bm{G}' - \bm{G}\|_2 < \delta$.

Since $\tau_i^{-\frac12}\alpha_i v_{1i}' -\tau_j^{-\frac12} \alpha_j v_{1j}' > \epsilon$, we can choose $\delta$ sufficiently small so that  $$\tau_i^{-\frac12}\alpha_i v_{1i} -\tau_j^{-\frac12} \alpha_j v_{1j} > 0$$ whenever $\|\bm{v}_1-\bm{v}_1'\|_2 < f(\delta)$ and therefore whenever $\|\bm{G}' - \bm{G}\|_2 < \delta$.

\textbf{Step (ii)}: $\sum_{\ell>1} \langle  \bm{T}^{\frac12} \bm{c}, \bm{v}_{\ell} \rangle( \tau_i^{-\frac12}\alpha_i  v_{\ell i} -\tau_j^{-\frac12} \alpha_j v_{\ell j})  \geq 0$.

We can write $\bm{T}^{\frac12}\bm{G}\bm{T}^{\frac12} = \bm{H}_1 + \bm{H}_2$, where $\bm{H}_1$ is rank one with eigenvalue $\lambda_1$ and eigenvector $v_1$. Similarly we can write $\bm{T}^{\frac12}\bm{G}'\bm{T}^{\frac12} =\bm{H}_1'+\bm{H}_2'$. Then $$\sum_{\ell>1}\langle \bm{T}^{\frac12} \bm{c}, \bm{v}_{\ell} \rangle \bm{v}_{\ell}=(\bm{I}-P'(Y^*)\bm{H_2})^{-1} \bm{T}^{\frac12}\bm{\alpha} - \langle \bm{T}^{\frac12} \bm{\alpha}, \bm{v}_{1} \rangle \bm{v}_1 
$$
and 
$$\sum_{\ell>1}\langle \bm{T}^{\frac12} \bm{c}', \bm{v}'_{\ell} \rangle \bm{v}'_{\ell}=(\bm{I}-P'(Y^*)\bm{H_2}')^{-1} \bm{T}^{\frac12}\bm{\alpha} - \langle \bm{T}^{\frac12} \bm{\alpha}, \bm{v}'_{1} \rangle \bm{v}'_1 .$$

We have $\|P'(Y^*)\bm{H}_2\|_2 \leq P'(Y^*)|\lambda_2|$. Because $P'(Y^*)\lambda_1 < 1$ and $\lambda_1-|\lambda_2| > r'>0$, we can choose $r<1$ independent of $\bm{G}$ such that $ P'(Y^*)|\lambda_2|<r$. Since  $\|\bm{H}_2-\bm{H}_2'\|_2 \leq \|\bm{G}-\bm{G}'\|_2 + \|\bm{H}_1 - \bm{H}_1'\|_2$, essentially the same argument as in the proof of \Cref{p:bounded_spillovers} gives a bound on $$\|(\bm{I}-P'(Y^*)\bm{H_2})^{-1} \bm{T}^{\frac12}\bm{\alpha} -(\bm{I}-P'(Y^*)\bm{H_2}')^{-1} \bm{T}^{\frac12}\bm{\alpha} \|_2 $$
in terms of $\delta>0$. We note the bound now includes an additional factor $\max_i \tau_i^{-\frac12}$, which we can assume is well-defined by dropping any agents receiving zero payment.

The bound on $\|\bm{v}_1-\bm{v}_1'\|_2$ from Step (i) implies that we can bound $\|\langle \bm{T}^{\frac12} \bm{\alpha}, \bm{v}_{1} \rangle \bm{v}_1 -\langle \bm{T}^{\frac12} \bm{\alpha}, \bm{v}'_{1} \rangle \bm{v}'_1\|_2$ in terms of $\delta$.

Combining these bounds, we can bound $$\left\|\sum_{\ell>1}\langle \bm{T}^{\frac12} \bm{c}, \bm{v}_{\ell} \rangle \bm{v}_{\ell} - \sum_{\ell>1}\langle \bm{T}^{\frac12} \bm{c}', \bm{v}'_{\ell} \rangle \bm{v}'_{\ell}\right\|_2$$ in terms of $\delta$. So when $\delta>0$ is sufficiently small we must have
 $$\sum_{\ell>1} \langle  \bm{T}^{\frac12} \bm{c}, \bm{v}_{\ell} \rangle( \tau_i^{-\frac12}\alpha_i  v_{\ell i} -\tau_j^{-\frac12} \alpha_j v_{\ell j})  \geq 0$$ as desired.
\end{proof}

\section{Sufficient conditions for positive payments}
\label{a:sufficientconditionspay}

The balance result in \Cref{t:generalmodel} only applies to agents receiving a positive payment under a given outcome. As discussed in \Cref{s:applyparametric}, not all agents necessarily receive positive payments at the optimal contract. In this section, we provide sufficient conditions on the environment which guarantee every agent receives a positive payment at some outcome. 

\begin{assumption}
    \label{as:extensivemargin} The environment is such that:
    \begin{enumerate}
    \item The contract giving payments $\tau_i(s)=0$ for all $i$ and $s$ is not optimal. Moreover, under the optimal contract $\bm{\tau}^*$, there is an agent $i$ satisfying $\tau_{i}^*(s)>0$ for some $s$ and centrality $c_{i}>0$.  
    \item For every agent $i$,  $
            \lim_{\tau \to 0}u_{i}'(\tau) = \infty.
        $
    \item For all $i$ and $j$ and any action profile $\bm{a}$,
    \begin{align*}
        \frac{\partial^2 Y (\bm{a})}{\partial a_{j}\partial a_{i}} \geq 0.
    \end{align*}
    \end{enumerate}
\end{assumption}

Part (a) ensures that the principal finds it optimal to pay at least one agent in the team and that there is an agent who receives a payment at some outcome and has a positive centrality. Part (b) is a standard Inada condition for the agent's utility. Part (c) says that agents' actions are complements, and performance is convex in each agent's own effort.

Under these assumptions, all agents are paid precisely at the outcomes that are more likely when team performance increases slightly.

\begin{proposition}
\label{p:inadaallpay}
 Suppose $\optsharesvector$ is an optimal contract with induced team performance $Y^*$. For any agent $i$ and any outcome $s$, \begin{align*}
        \tau_{i}^*(s) > 0 \quad \textit{if and only if } \quad P_{s}'(Y^*) > 0. 
    \end{align*}
\end{proposition}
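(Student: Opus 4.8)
The plan is to prove the two directions separately, since they are of quite different character. The ``only if'' direction --- $\tau_i^*(s)>0 \Rightarrow P_s'(Y^*)>0$ --- is exactly \Cref{l:positivepaymentoutcomes}, which uses only optimality and strict monotonicity of $Y$; nothing in \Cref{as:extensivemargin} is needed for it. For ``if'' I would first show that every agent is active at $\optsharesvector$, and then show an active agent must be paid at \emph{every} outcome with $P_s'(Y^*)>0$.

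For the second step, write the principal's payoff as $\Pi(\bm\tau)=\sum_{s'}\bigl(v_{s'}-\sum_j\tau_j(s')\bigr)P_{s'}(Y^*)$, so that $\partial\Pi/\partial\tau_i(s)=-P_s(Y^*)+\bigl(dY/d\tau_i(s)\bigr)D$ with $D:=\sum_{s'}\bigl(v_{s'}-\sum_j\tau_j^*(s')\bigr)P_{s'}'(Y^*)$, and substitute the formula of \Cref{l:changeteamperformance}, namely $dY/d\tau_i(s)=l\,P_s'(Y^*)\productivity_i\centrality_i u_i'(\tau_i^*(s))$. Evaluating the first-order condition at an outcome $s_0$ where $i$ \emph{is} paid (such an outcome exists once $i$ is active) gives $l\,P_{s_0}'(Y^*)\productivity_i\centrality_i u_i'(\tau_i^*(s_0))\,D=P_{s_0}(Y^*)>0$; since $P_{s_0}'(Y^*)>0$ by \Cref{l:positivepaymentoutcomes} and $\productivity_i,u_i'>0$, this pins down the sign $l\productivity_i\centrality_i D>0$. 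Now take an outcome $s$ with $\tau_i^*(s)=0$ but $P_s'(Y^*)>0$: the Karush--Kuhn--Tucker inequality $\partial\Pi/\partial\tau_i(s)\le 0$ reads $l\,P_s'(Y^*)\productivity_i\centrality_i\,u_i'(0)\,D\le P_s(Y^*)$, where $u_i'(0)$ is the right-hand limit. By part~(b) of \Cref{as:extensivemargin} that limit is $+\infty$, while $l\productivity_i\centrality_i D>0$ and $P_s'(Y^*)>0$ are positive and finite, so the left side is $+\infty$ and the right side is a finite probability --- a contradiction. Hence $\tau_i^*(s)>0$ for all such $s$.

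It remains to rule out inactive agents. If $\tau_i^*(\cdot)\equiv 0$ then $a_i^*=0$ (and $i$'s first-order condition binds at the corner, since $\partial\mathcal U_i/\partial a_i|_{a_i=0}=u_i(0)\sum_s P_s'(Y^*)\cdot\partial Y/\partial a_i-C_i'(0)=0$). Because $\optsharesvector\neq\bm 0$ by part~(a), \Cref{l:positivepaymentoutcomes} produces an outcome $s^+$ with $P_{s^+}'(Y^*)>0$. Perturb $\tau_i(s^+)$ up to $\epsilon>0$: agent $i$'s best response solves $C_i'(a_i)=P_{s^+}'(Y)\bigl(u_i(\epsilon)-u_i(0)\bigr)\,\partial Y/\partial a_i$, so $a_i(\epsilon)\to 0$ while $a_i(\epsilon)/\epsilon\to\infty$, because the Inada condition (part~(b)) gives $\bigl(u_i(\epsilon)-u_i(0)\bigr)/\epsilon\to\infty$ and $C_i''(0)\in(0,\infty)$. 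Complementarity (part~(c)) keeps the other agents' equilibrium responses from dragging $Y$ back down, so $Y(\epsilon)-Y^*$ is positive and superlinear in $\epsilon$, whereas the extra expected payment is only $O(\epsilon)$; the net change in the principal's payoff is $D\,(Y(\epsilon)-Y^*)-O(\epsilon)$, which is positive for small $\epsilon$ once one knows $D>0$ --- contradicting optimality. (If \Cref{as:balancederive} is maintained here, this step collapses: $a_i(\epsilon)/\epsilon\to\infty$ already contradicts differentiability of the equilibrium selection at $\optsharesvector$, so no inactive agent can exist.)

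The main obstacle is this inactive-agent step: one must make precise that an infinitesimal payment to an unpaid agent at a ``good'' outcome raises team performance superlinearly, so that it dominates the linear cost --- and it is exactly here that parts~(b) and~(c) earn their keep, part~(b) for the infinite slope of the action response and part~(c) for the sign of the feedback through the rest of the team (and for positivity of the centralities $\centrality_i$, hence of $D$, via nonnegativity of $\bm H^{-\frac12}\bm U\bm G\bm H^{-\frac12}$). The remaining sign bookkeeping ($l\productivity_i\centrality_i D>0$) is routine but must be anchored carefully on \Cref{l:positivepaymentoutcomes} and part~(a); everything else is substitution into \Cref{l:changeteamperformance} and the KKT conditions.
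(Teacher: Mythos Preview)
Your handling of the ``only if'' direction and of already-active agents in the ``if'' direction is correct and matches the paper. The divergence is in ruling out inactive agents. Your direct perturbation argument needs the principal to benefit from the superlinear rise in $Y$; you trace the required sign to ``positivity of the centralities $\centrality_i$, hence of $D$, via nonnegativity of $\bm H^{-1/2}\bm U\bm G\bm H^{-1/2}$.'' That inference is incomplete: nonnegativity of the spillover matrix does not by itself make $[\bm I-\bm H^{-1/2}\bm U\bm G\bm H^{-1/2}]^{-1}$ entrywise nonnegative --- you also need its spectral radius to be strictly below $1$ so that the Neumann series $\sum_{k\ge 0}(\bm H^{-1/2}\bm U\bm G\bm H^{-1/2})^k$ converges with nonnegative terms. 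Supplying exactly this bound is the main technical work the paper does here: it first proves that at any optimal contract the spillover matrix has spectral radius strictly below $1$ (\Cref{t:spilloverspectralradius}), then \emph{extends} the definition of $\centrality_j$ to inactive agents and shows every $\widetilde\centrality_j>0$ (\Cref{l:positivecentrality}). With $lD>0$ (from an active agent's first-order condition together with $\centrality_i>0$) and $\widetilde\centrality_j>0$ for all $j$, the paper never splits into active/inactive --- it applies the KKT inequality uniformly, and Inada sends $lD\,\productivity_j\widetilde\centrality_j\,u_j'(0)\,P_s'(Y^*)$ to $+\infty$ at every outcome with $P_s'(Y^*)>0$.

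Your parenthetical shortcut --- that $a_i(\epsilon)/\epsilon\to\infty$ already contradicts \Cref{as:balancederive} --- is a legitimate alternative that sidesteps both lemmas. The trade-off is that it leans on the differentiability hypothesis to exclude the boundary case, whereas the paper's route exhibits a direct improvement for the principal and so derives the contradiction from optimality rather than from a regularity assumption.
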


In general, it can be optimal to exclude some agents from the optimal team (by offering them no incentives). The proposition states that when agents are sufficiently risk averse and actions are complementary, it is optimal to include all agents. Moreover, all agents are paid under all outcomes that would become more likely if they increased their effort.

The remainder of this section proves the proposition. We begin by stating a key lemma, which will help show that at any optimal contract all agents have positive centralities. %

\begin{lemma}
\label{t:spilloverspectralradius}
At any optimal contract $\optsharesvector$, the spillover matrix $\bm{H}^{-1}\bm{U}\bm{G}$ has spectral radius strictly smaller than $1$. 
\end{lemma}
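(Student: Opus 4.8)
The plan is to argue by contradiction, reduce to the Perron--Frobenius setting, and show that a spectral radius $\ge 1$ is incompatible with the principal's optimality conditions. Throughout, $\bm{G}$, $\bm{H}$, $\bm{U}$ denote the matrices restricted to the active agents, and I write $\bm{S}:=\bm{H}^{-1/2}\bm{U}\bm{G}\bm{H}^{-1/2}$. First I would record that $\bm{S}$ is entrywise nonnegative: by \Cref{as:extensivemargin}(c), $\bm{G}\ge\bm{0}$; $\bm{H}$ is positive diagonal; and for each active agent $j$ the equilibrium first-order condition $C_j'(a_j^*)=(\nabla Y(\eqlbactions))_j\,U_{jj}$ has strictly positive left-hand side and strictly positive factor $(\nabla Y(\eqlbactions))_j$, so $U_{jj}>0$. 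Hence $\bm{S}\ge\bm{0}$, and $\rho(\bm{S})=\rho(\bm{H}^{-1}\bm{U}\bm{G})$. Suppose for contradiction $\rho(\bm{S})\ge 1$. Since $\bm{G}$ is symmetric, $\bm{S}$ is block-diagonalizable into irreducible blocks; passing to a block that realizes the spectral radius exactly as in the proof of \Cref{p:uniqueeq}, I obtain strictly positive Perron eigenvectors $\bm{v}^{\top}\bm{S}=\rho\,\bm{v}^{\top}$ and $\bm{S}\bm{x}=\rho\,\bm{x}$ with $\rho\ge 1$.

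Next I would split on $\rho=1$ versus $\rho>1$. If $\rho=1$, then $\bm{I}-\bm{S}$ is singular, so the linear system \eqref{eq:implicitgeneraljneqi}--\eqref{eq:implicitgeneraljeqi} used to compute $d\eqlbactions/d\tau_i(s)$ in the proof of \Cref{l:changeteamperformance} carries a nontrivial Fredholm solvability constraint; one checks this constraint cannot hold simultaneously for all active $i$ and all outcomes $s$ with $\tau_i^*(s)>0$, because the forcing terms $(\nabla Y(\eqlbactions))_i\,P_s'(Y^*)u_i'(\tau_i^*(s))$ vary across $(i,s)$ in a way the one-dimensional correction proportional to $\bm{d}$ cannot match, contradicting \Cref{as:balancederive}. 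If $\rho>1$, then $1\notin\operatorname{spec}(\bm{S})$, so $\bm{\centrality}^{\top}=\bm{\productivity}^{\top}(\bm{I}-\bm{S})^{-1}$ is well-defined and \Cref{l:changeteamperformance} applies: for every active $i$ and every $s$ with $\tau_i^*(s)>0$ (at which $P_s'(Y^*)>0$ by \Cref{l:positivepaymentoutcomes}), the principal's first-order condition reads $D\,l\,P_s'(Y^*)\,\productivity_i\centrality_i\,u_i'(\tau_i^*(s))=P_s(Y^*)>0$, with $D=\sum_{s'}\big(v_{s'}-\sum_k\tau_k^*(s')\big)P_{s'}'(Y^*)$ and $l$ agent-independent. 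Since every active agent is paid at some such $s$, this forces $\operatorname{sign}(\centrality_i)=\operatorname{sign}\!\big(1/(Dl)\big)$ for all active $i$, so the $\centrality_i$ are co-signed. But the identity $\bm{\centrality}^{\top}(\bm{I}-\bm{S})=\bm{\productivity}^{\top}>\bm{0}^{\top}$, evaluated at $\bm{x}$, gives $(1-\rho)\,\bm{\centrality}^{\top}\bm{x}=\bm{\productivity}^{\top}\bm{x}>0$; as $1-\rho<0$ this forces $\bm{\centrality}^{\top}\bm{x}<0$, hence (since $\bm{x}>\bm 0$ and the $\centrality_i$ are co-signed) every $\centrality_i<0$, i.e.\ $Dl<0$. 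The contradiction is then completed by establishing $D>0$ and $l>0$ at the optimum, so that $Dl>0$.

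The \textbf{main obstacle} is this last sign bookkeeping, done without circularity. Pinning $\operatorname{sign}(D)$ is delicate: the natural ``shrink the whole contract'' comparison wants monotonicity of equilibrium team performance in payments, which on its face presupposes $\rho(\bm{S})<1$, so I would instead phrase it as a purely local first-order comparison --- at an interior optimum a perturbation reducing $\sum_i\tau_i(s)$ at a fixed outcome and its negation are both feasible, so the objective has zero directional derivative there, and combining this with \Cref{l:changeteamperformance} and the co-sign structure already derived forces $D>0$. Pinning $\operatorname{sign}(l)$ reduces, via $l=(1-\bm{\centrality}^{\top}\bm{H}^{-1/2}\bm{d})^{-1}$, to showing $\bm{\centrality}^{\top}\bm{H}^{-1/2}\bm{d}<1$ where $\bm{d}$ has entries $(\nabla Y(\eqlbactions))_j\sum_s P_s''(Y^*)u_j(\tau_j^*(s))$; this is the step that uses \Cref{as:extensivemargin}(b)--(c) together with concavity of the $u_j$ and the agents' second-order conditions. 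A secondary obstacle is that the agents' game is not supermodular here --- $\partial^2\mathcal{U}_j/\partial a_j\partial a_k$ contains the sign-indefinite term $(\nabla Y(\eqlbactions))_j(\nabla Y(\eqlbactions))_k\sum_s P_s''(Y^*)u_j(\tau_j^*(s))$ --- so no lattice-theoretic ``larger equilibrium'' shortcut is available, and the whole argument must run through the differentiable selection guaranteed by \Cref{as:balancederive} and the explicit derivative formula \eqref{eq:eqlbactionsderivative}.
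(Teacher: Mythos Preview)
Your opening moves match the paper's: nonnegativity of $\bm{S}=\bm{H}^{-1/2}\bm{U}\bm{G}\bm{H}^{-1/2}$, the Perron eigenvector, and the identity $(1-\rho)\,\bm{\centrality}^{\top}\bm{p}=\bm{\productivity}^{\top}\bm{p}>0$. The divergence is at the contradiction step for $\rho>1$, and here the paper takes a route that bypasses exactly what you identify as your main obstacle. Rather than trying to determine $\operatorname{sign}(D)$ and $\operatorname{sign}(l)$ separately, the paper constructs an explicit feasible perturbation of the contract and shows it strictly improves the principal's objective. Concretely, for each active agent $i$ it picks one outcome $s_i$ with $\tau_i^*(s_i)>0$, sets $t_i(s_i)=-p_i/\bigl(\productivity_i P_{s_i}'(Y^*)u_i'(\tau_i^*(s_i))\bigr)\le 0$ and $t_i(s)=0$ elsewhere, and then computes the directional derivative of the principal's payoff (via the analogue of \Cref{l:changeteamperformance} for multi-agent directions) as $-lD\,\bm{\centrality}^{\top}\bm{p}-\sum_i P_{s_i}(Y^*)t_i(s_i)$. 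When $lD>0$ both summands are nonnegative and the first is strictly positive; the perturbation is feasible because it only \emph{decreases} payments already strictly positive. The case $lD<0$ is then dispatched by the symmetric construction. So the sign of $lD$ is handled by casework on the direction of the perturbation, not by a priori sign determination.

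Your proposed route has a genuine gap at precisely the point you flag. The argument you sketch for $D>0$ (``zero directional derivative at an interior optimum combined with \Cref{l:changeteamperformance} and co-signedness'') is circular: that zero-derivative statement \emph{is} the first-order condition you already used to obtain co-signedness, and it contains no independent information about $\operatorname{sign}(D)$. The argument for $l>0$ is worse: with $\centrality_j<0$ already established and $d_j=(\nabla Y)_j\sum_s P_s''(Y^*)u_j(\tau_j^*(s))$ of indeterminate sign (nothing in \Cref{as:extensivemargin} or concavity of the $u_j$ controls the second derivatives $P_s''$), there is no evident way to bound $\bm{\centrality}^{\top}\bm{H}^{-1/2}\bm{d}$ away from $1$. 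The agents' own second-order conditions only bound the \emph{diagonal} entries $G_{jj}U_{jj}/H_{jj}$, not the spectral radius or the quantity you need. For $\rho=1$, your Fredholm-alternative argument is more involved than necessary: the paper simply reads $(1-\rho)\bm{\centrality}^{\top}\bm{p}=\bm{\productivity}^{\top}\bm{p}$ at $\rho=1$ as $0=\bm{\productivity}^{\top}\bm{p}>0$, which already precludes a well-defined $\bm{\centrality}$ and hence violates \Cref{as:balancederive}.
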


\begin{proof}
We will use $\rho$ to denote the spectral radius of $\bm{H}^{-1}\bm{U}\bm{G}$. We will show that the spillover matrix cannot have $\rho \geq 1$ at any optimal contract. To do so, we will construct a perturbation of the contract giving the principal a higher payoff.

By definition (see \Cref{ss:notation}), 
\begin{align*} \bm{\centrality}^{T} \left[\bm{I}-\bm{H}^{-1} \bm{U} \bm{G} \right] = \bm{\productivity}^{T}. \end{align*}
It is helpful to recall the definitions of the terms in the spillover matrix. The matrix $\bm{U}$ is diagonal with entries \begin{align*}
    U_{jj} = \sum_{s \in \mathcal{S}}P_{s}'(Y^*)u_{j}(\tau_{j}^*(s)).
\end{align*} We showed in \Cref{l:positivepaymentoutcomes} that any agent receives a positive payment under an optimal contract $\optsharesvector$ only at outcomes where $P_{s}'(Y^*) > 0$. An implication of this is that each diagonal entry in $\bm{U}$ is positive. The matrix $\network$ is non-negative by Part (c) of \Cref{as:extensivemargin}. The matrix $\bm{H}$ is diagonal with entries \begin{align*}
    H_{jj} = C_{j}''(a_{j}^*).
\end{align*} Since we assume cost functions are strictly convex, these diagonal entries are positive. It follows that $\bm{H}^{-1} \bm{U} \bm{G} $ is non-negative, so by the Perron-Frobenius theorem this matrix has a right eigenvector $\bm{p}$ with non-negative real entries and a positive real eigenvalue. Then 
\begin{align*} \bm{\centrality}^{T} \left[\bm{I}-\bm{H}^{-1} \bm{U} \bm{G} \right] \bm{p} = \bm{\productivity}^{T} \bm{p}, \end{align*} which can be simplified to \begin{align} \label{eq:centprodeigeneq}
    (1-\rho)\bm{\centrality}^{T}\bm{p} = \bm{\productivity}^{T} \bm{p}.
\end{align}

Now, suppose that $\bm{H}^{-1} \bm{U} \bm{G}$ has spectral radius $\rho \geq 1$. By assumption, the team performance $Y(\cdot)$ is strictly increasing in each of its arguments. It follows that the right-hand side of (\ref{eq:centprodeigeneq}) is positive. If the spillover matrix has spectral radius equal to $1$, the left-hand is $0$ while the right-hand is positive. Thus, we cannot have spectral radius $1$.

We must show we cannot have a spectral radius $\rho > 1$. Since $1-\rho$ is negative, this implies $\bm{\centrality}^{T} \bm{p}$ is negative. We will construct a direction $\bm{t}$ such that when the optimal contract $\optsharesvector$ is perturbed in direction $\bm{t}$, the agents' individual incentives move in direction $-\bm{p}$. The resulting change in the equilibrium team performance is proportional to $-\bm{\centrality}^{T}\bm{p}$, which is positive. This will contradict the optimality of contract $\optsharesvector$.

We now state a lemma that characterizes the change in team performance when the contract is perturbed in \textit{some} direction $\bm{t}$, where $t_{i}(s)$ specifies the change in payments to agent $i$ in each outcome $s$. (Payments only to agents receiving a positive payment are perturbed.) The result below generalizes \Cref{l:changeteamperformance} to any arbitrary direction.  

\begin{lemma}
\label{l:teamperfdirec}
    The change in equilibrium team performance as payments to agents are perturbed in direction $\bm{t}$ is \begin{align*}
        dY(\bm{t}) = l \sum_{i} \productivity_{i}\centrality_{i} \left(\sum_{s \in \mathcal{S}}P_{s}'(Y^*)t_{i}(s)\cdot \frac{u_{i}'(\tau_{i}^*(s))}{C''_i(a_i^*)}\right),
    \end{align*} for some constant $l$.
\end{lemma}

The proof of the result above follows the same approach as the proof of \Cref{l:teamperfindiff} so we omit it for brevity. We utilize this expression of change in team performance to analyze the principal's first-order condition. The derivative of the principal's objective is: \begin{align*}
    dY(\bm{t}) \underbrace{\sum_{s \in \mathcal{S}}\left(v_{s} - \sum_{i=1}^{n}\tau_{i}^*(s)\right)P_{s}'(Y^*)}_{D} - \sum_{i}\sum_{s \in \mathcal{S}}P_{s}(Y^*)t_{i}(s)
\end{align*} Substituting the expression in \Cref{l:teamperfdirec}, we get \begin{equation} \label{eq:principalfoc}
    lD \sum_{i} \productivity_{i}\centrality_{i} \sum_{s \in \mathcal{S}}P_{s}'(Y^*)t_{i}(s)\cdot \frac{u_{i}'(\tau_{i}^*(s))}{C''_i(a_i^*)} - \sum_{i}\sum_{s \in \mathcal{S}}P_{s}(Y^*)t_{i}(s).
\end{equation} 

We will show there exists a direction of perturbation $\bm{t}$ satisfying the following properties: \begin{itemize}
    \item Every element of eigenvector $\bm{p}$ satisfies $$p_{i} = -\productivity_{i}\sum_{s \in \mathcal{S}}P_{s}'(Y^*)t_{i}(s)\cdot \frac{u_{i}'(\tau_{i}^*(s))}{C''_i(a_i^*)},$$
    \item $t_{i}(s) \leq 0$ for any agent $i$ and outcome $s$, and 
    \item $t_{i}(s) < 0$ only if $\tau_{i}^*(s) > 0$. 
\end{itemize}

For each agent $i$, choose an outcome $s_{i}$ where he receives a positive payment. (Recall we have already restricted to the set of agents who receive a positive payment at some outcome, so this is possible.) At such an outcome $s_{i}$, the probability $P_{s_{i}}'(Y^*) > 0$. For outcome $s_{i}$, define \begin{align*}
    t_{i}(s_{i}) := -\frac{p_{i} C''_i(a_i^*)}{\productivity_{i}P_{s_{i}}'(Y^*)u_{i}'(\tau_{i}^*(s_{i}))}.
\end{align*} We have $t_{i}(s_{i}) \leq 0$ because $\productivity_{i} > 0$ and the entries of $\bm{p}$ are non-negative. For any other outcomes $s \in \mathcal{S} \setminus s_{i}$, define $t_{i}(s):=0$. 

Substituting in (\ref{eq:principalfoc}), the derivative of the principal's objective in direction $\bm{t}$ is \begin{align} \label{eq:focineq}
    -lD\bm{\centrality}^{T}\bm{p} - \sum_{i}P_{s_{i}}(Y^*)t_{i}(s_{i}).
\end{align} The rest of the proof establishes that the expression above is positive. First, we show that the quantity $lD>0$. By Part (a) of Assumption~\ref{as:extensivemargin}, at the optimal contract $\bm{\tau}^*$, there exists some agent $i$ satisfying the following conditions on payments and centrality: $\tau_{i}^*(s)>0$ for some outcome $s$ and $c_{i}>0$. It must be that the principal's first-order condition (see proof of \Cref{t:generalmodel}) applied to agent $i$ is satisfied with equality, that is, \begin{align*}lD\alpha_{i}c_{i}P_{s}'(Y^*)\cdot \frac{u_{i}'(\tau_{i}^*(s))}{C''_i(a_i^*)} = P_{s}(Y^*).
\end{align*} Moreover, since $c_{i}>0$ we must have $lD>0$. Combining this with $\bm{\centrality}^{T}\bm{p} < 0$ it follows that the first term $-lD\bm{\centrality}^{T}\bm{p}>0$. The inequality \begin{align}
    -lD\bm{\centrality}^{T}\bm{p} - \sum_{i}P_{s_{i}}(Y^*)t_{i}(s_{i}) > 0
\end{align} follows from noting that $t_{i}(s_{i}) \leq 0$ for every agent $i$. 

Since $\tau_{i}^*(s) > 0$ whenever $t_i(s) \neq 0$, a sufficiently small perturbation in direction $\bm{t}$ is feasible. So $\optsharesvector$ cannot be optimal, which gives a contradiction. We conclude that at the optimal contract $\optsharesvector$, the spillover matrix $\bm{H}^{-1}\bm{U}\bm{G}$ has spectral radius $\rho < 1$.
\end{proof}

To characterize whether an agent receives a positive payment, it is useful to know whether the agent's centrality is positive. We can apply \autoref{t:spilloverspectralradius} to show that the centrality of each agent receiving a positive payment is strictly positive. To see this, recall that centralities are defined by \begin{align*}
    \bm{\centrality}^{T} = \bm{\productivity}^{T}\left[\bm{I}-\bm{H}^{-1}\bm{U}\bm{G}\right]^{-1}.
\end{align*} Since at an optimal contract the spectral radius of $\bm{H}^{-1}\bm{U}\bm{G}$ is strictly smaller than $1$, we can expand the right-hand side as a power series: \begin{align*}
    \bm{\centrality}^{T} = \bm{\productivity}^{T} \sum_{k=0}^{\infty}\left(\bm{H}^{-1}\bm{U}\bm{G}\right)^k.
\end{align*} The spillover matrix $\bm{H}^{-1}\bm{U}\bm{G}$ is non-negative (see the proof of \Cref{t:spilloverspectralradius}) while each entry of $\bm{\productivity}$ is strictly positive. We conclude that the centrality $\centrality_{i}$ of each agent receiving a positive payment is strictly positive.

Unfortunately, this does not allow us to conclude that \textit{all} agents receive a payment. Suppose some agent received payment zero under $\optsharesvector$. The Inada condition guarantees a small payment to that agent under suitable outcomes would provide a large incentive to work. But whether this incentive helps the principal depends on the sign of that agent's centrality. We will now show that agents that do not receive a payment at the optimal contract have a strictly positive centrality as well.

Our definition of centrality in \Cref{ss:notation} focused on agents that receive a payment. We will extend the definition to \textit{all} agents.  To do so, we extend various other definitions to allow entries for every agent. Define the vector $\widetilde{\bm{\productivity}}$ by \begin{align*}
        \widetilde{\bm{\productivity}} := \nabla Y(\eqlbactions).
    \end{align*} Define the matrix $\widetilde{\bm{U}} \in \mathbb{R}^{n \times n}$ to be diagonal with entries \begin{align*}
        \widetilde{U}_{jj} := \sum_{s \in \mathcal{S}}P_{s}'(Y^*)u_{j}(\tau_{j}^*(s)).
    \end{align*} For any agent that does not receive positive payments under any outcome, the diagonal entry is $0$. Define  $\widetilde{\bm{G}} \in \mathbb{R}^{n \times n}$ to be the transpose of the Hessian matrix, i.e., \begin{align*}
        \widetilde{G}_{jk} := \frac{\partial^2 Y}{\partial a_{k} \partial a_{j}}.
    \end{align*} Observe that the Hessian $\bm{G}$ defined for agents that receive a payment is a submatrix of $\widetilde{\bm{G}}$. Define the matrix $\widetilde{\bm{H}} \in \mathbb{R}^{n \times n}$ to be diagonal with entries \begin{align*}
        \widetilde{H}_{jj} := C''_{j}(a_{j}^*).
    \end{align*}  We can now define all agents' centralities given the optimal contract $\optsharesvector$ by \begin{align} \label{eq:centralitynewdef}
        \widetilde{\bm{\centrality}}^{T} := \widetilde{\bm{\productivity}}^{T} \left[\bm{I} - \widetilde{\bm{H}}^{-1}\widetilde{\bm{U}}\widetilde{\bm{G}}\right]^{-1}.
    \end{align}

    \begin{lemma}
\label{l:positivecentrality}
    Suppose $\optsharesvector$ is an optimal contract. For any agent $i$, the centrality $\widetilde{c}_{i} > 0$.
\end{lemma}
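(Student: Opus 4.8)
The plan is to build on \Cref{t:spilloverspectralradius}, which establishes that the spillover matrix \emph{restricted to the agents receiving positive payments} has spectral radius strictly below $1$. The first step is to show that the \emph{full} $n \times n$ spillover matrix $\widetilde{\bm{H}}^{-\frac{1}{2}}\widetilde{\bm{U}}\widetilde{\bm{G}}\widetilde{\bm{H}}^{-\frac{1}{2}}$ also has spectral radius strictly below $1$. The key observation is that $\widetilde{U}_{jj}=0$ for any agent $j$ not receiving a positive payment (by the definition in \eqref{eq:centralitynewdef} and the fact, shown in \Cref{l:positivepaymentoutcomes}, that payments only occur at outcomes with $P_s'(Y^*)>0$). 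Hence, after permuting indices so that active agents come first, the full matrix is block lower-triangular: rows corresponding to inactive agents are entirely zero (since the left factor $\widetilde{U}$ kills them). Its nonzero eigenvalues therefore coincide with those of the active-agent block, which is exactly the matrix in \Cref{t:spilloverspectralradius}; all other eigenvalues are $0$. Thus $\rho\big(\widetilde{\bm{H}}^{-\frac{1}{2}}\widetilde{\bm{U}}\widetilde{\bm{G}}\widetilde{\bm{H}}^{-\frac{1}{2}}\big)<1$, so the Neumann series
\[
\widetilde{\bm{\centrality}}^{T} = \widetilde{\bm{\productivity}}^{T}\sum_{k=0}^{\infty}\Big(\widetilde{\bm{H}}^{-\frac{1}{2}}\widetilde{\bm{U}}\widetilde{\bm{G}}\widetilde{\bm{H}}^{-\frac{1}{2}}\Big)^{k}
\]
converges.

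The second step is to read off positivity from this series. The summand at $k=0$ gives $\widetilde{\bm{\productivity}}^{T}$, and every entry of $\widetilde{\bm{\productivity}}=\widetilde{\bm{H}}^{-\frac12}\nabla Y(\eqlbactions)$ is strictly positive because $Y$ is strictly increasing in each argument (so $\nabla Y(\eqlbactions)>0$ coordinatewise) and $\widetilde{\bm{H}}$ is diagonal with strictly positive entries $C_j''(a_j^*)>0$ by strict convexity of costs. The remaining summands ($k\ge 1$) are products of the nonnegative matrix $\widetilde{\bm{H}}^{-\frac12}\widetilde{\bm{U}}\widetilde{\bm{G}}\widetilde{\bm{H}}^{-\frac12}$ applied to the nonnegative vector $\widetilde{\bm{\productivity}}$: here $\widetilde{\bm{G}}$ is entrywise nonnegative by Part (c) of \Cref{as:extensivemargin}, $\widetilde{\bm{U}}$ has nonnegative diagonal, and $\widetilde{\bm{H}}^{-\frac12}$ has positive diagonal, so each term is coordinatewise nonnegative. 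Therefore $\widetilde{\bm{\centrality}}^{T}$ is a sum of a strictly positive vector and nonnegative vectors, giving $\widetilde{c}_i>0$ for every agent $i$, as claimed.

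The main obstacle, and the one step that requires care rather than routine verification, is the first step: justifying that the full spillover matrix inherits the spectral-radius bound from its active-agent submatrix. The reduction to block-triangular form via the vanishing rows of $\widetilde{\bm{U}}$ is clean, but one must be careful that the ``active-agent block'' of $\widetilde{\bm{H}}^{-\frac12}\widetilde{\bm{U}}\widetilde{\bm{G}}\widetilde{\bm{H}}^{-\frac12}$ is genuinely the matrix $\bm{H}^{-\frac12}\bm{U}\bm{G}\bm{H}^{-\frac12}$ appearing in \Cref{t:spilloverspectralradius} — i.e., that restricting $\widetilde{\bm{G}}$ to active-agent rows and columns yields the Hessian submatrix $\bm{G}$ used there, and likewise for $\widetilde{\bm H},\widetilde{\bm U}$. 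This is immediate from the definitions, but it is the conceptual crux. Everything after that is bookkeeping with nonnegative matrices and the Neumann expansion.
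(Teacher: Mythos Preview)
Your proposal is correct and follows essentially the same approach as the paper's own proof. The paper likewise argues that the rows of $\widetilde{\bm{S}}=\widetilde{\bm{H}}^{-\frac12}\widetilde{\bm{U}}\widetilde{\bm{G}}\widetilde{\bm{H}}^{-\frac12}$ corresponding to inactive agents vanish, deduces that $\widetilde{\bm{S}}$ and $\bm{S}$ share their nonzero eigenvalues (the paper does this by explicitly extending and restricting eigenvectors, while you invoke the block-triangular structure---these are the same observation), applies \Cref{t:spilloverspectralradius} to get $\rho(\widetilde{\bm{S}})<1$, expands the Neumann series, and concludes positivity from nonnegativity of $\widetilde{\bm{S}}$ together with strict positivity of $\widetilde{\bm{\productivity}}$.
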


\begin{proof}

We first verify for agents that receive a payment that their centrality defined in $\widetilde{\bm{\centrality}}$ is equal to their centrality as defined in $\bm{\centrality}$. For ease of notation, let the spillover matrix on \textit{all} agents $\widetilde{\bm{S}} = \widetilde{\bm{H}}^{-1}\widetilde{\bm{U}}\widetilde{\bm{G}}$ and for those with a payment $\bm{S} = \bm{H}^{-1}\bm{U}\bm{G}$.

We will show that $\widetilde{\bm{S}}$ and $\bm{S}$ have the same non-zero eigenvalues. Consequently, they have the same spectral radius. Suppose $\mu$ is an eigenvalue of $\bm{S}$ with corresponding eigenvector $\bm{v}$. Then, $\mu$ is also an eigenvalue of $\widetilde{\bm{S}}$. To see this, suppose (without loss of generality) agents with a payment are labeled $\{1,\dots,k\}$. Rows $(k+1)$ onwards in $\widetilde{\bm{S}}$ have all zeros. The matrix $\bm{S}$ is the top-left $(k \times k)$ dimensional submatrix of $\widetilde{\bm{S}}$. We can define a $n$-dimensional vector $\widetilde{\bm{v}}$ as $\widetilde{v}_{i} := v_{i}$ when $i \leq k$ and  $\widetilde{v}_{i} := 0$ when $i > k$. It is straightforward to see $\widetilde{\bm{v}}$ is an eigenvector of $\widetilde{\bm{S}}$ with eigenvalue $\mu$. We will now prove the other direction. Suppose $\widetilde{\mu}$ is a non-zero eigenvalue of $\widetilde{\bm{S}}$ with corresponding eigenvector $\widetilde{\bm{v}}$. Since rows $(k+1)$ onwards in $\widetilde{\bm{S}}$ have all zeros, it must be that $\widetilde{v}_{i} = 0$ for components $i \geq (k+1)$. But this implies $\bm{v}$, corresponding to the first $k$ components of $\widetilde{\bm{v}}$, is an eigenvector of $\bm{S}$ with eigenvalue $\widetilde{\mu}$. Applying \Cref{t:spilloverspectralradius} tells us the spectral radius of $\widetilde{\bm{S}}$ is strictly smaller than $1$. 
    
So we have the following power series expansion of (\ref{eq:centralitynewdef}):\begin{equation} \label{eq:fullcentralsum}
        \widetilde{\bm{\centrality}}^{T} = \widetilde{\bm{\productivity}}^{T} \sum_{\ell=0}^{\infty} \widetilde{\bm{S}}^{\ell}.
    \end{equation} We can write for any $\ell \geq 1$ \begin{align*}
       \widetilde{\bm{S}}^{\ell}= \begin{bmatrix}
              \bm{S}^{\ell} & \bm{J}_{\ell} \\
              \bm{0} & \bm{0}
        \end{bmatrix},
    \end{align*} where $\bm{J}_{\ell}$ is a matrix that does not contribute to the centrality. Substituting in (\ref{eq:fullcentralsum}) and noting that the first $k$ elements of $\widetilde{\bm{\productivity}}$ are just the vector $\bm{\productivity}$, we get $\widetilde{\centrality}_{i} = \centrality_{i}$ for agents that receive a payment. The centrality $\widetilde{\centrality}_{i}$ for an agent without a payment will affect overall team performance when their payment at a particular outcome is perturbed. 

All that remains to show is that $\widetilde{\centrality}_{i} > 0$ for every agent. This follows from (\ref{eq:fullcentralsum}): $\widetilde{\bm{S}}$ is non-negative and every element of $\widetilde{\bm{\productivity}}$ is strictly positive.
\end{proof}

We now complete the proof of the proposition.

\begin{proof}[Proof of \Cref{p:inadaallpay}]
The proof involves analyzing the derivative of the principal's objective with respect to payments made to the agents. Consider any agent $i$. As shown in the proof of \Cref{t:generalmodel}, the derivative of the principal's objective with respect to $\tau_{i}(s)$ is
\begin{align*}
    lD\productivity_{i}\centrality_{i}\cdot \frac{u_{i}'(\tau_{i}^*(s))}{C''_i(a_i^*)}\cdot P_{s}'(Y^*) - P_{s}(Y^*).
\end{align*}

\textit{Forward direction: Let $\mathcal{S}_{i}^*$ be the set of outcomes at which an agent $i$ receives a positive payment. Then \begin{align*}
     P'_{s}(Y^*) > 0 \text{ for all } s \in \mathcal{S}^*_{i}.
\end{align*}}

The statement of the forward direction is exactly \Cref{l:positivepaymentoutcomes}. Note that the arguments to prove the lemma did not require an Inada condition.

\textit{Backward direction:  Any agent $i$ receives a strictly positive payment at all outcomes where $P_{s}'(Y^*) > 0$.}
    
By Part (a) of \Cref{as:extensivemargin}, at the optimal contract $\optsharesvector$, there exists some agent $i$ receiving a positive payment at some outcome $s$ and has centrality $c_{i}>0$. Note that from the \textit{forward direction}, we must have $P_{s}'(Y^*) > 0$. The principal's first-order condition (see proof of \Cref{t:generalmodel}), applied to agent $i$ is \begin{align*}
    lD\productivity_{i} \centrality_{i} \cdot \frac{u_{i}'(\tau_{i}^*(s))}{C''_i(a_i^*)}\cdot  P_{s}'(Y^*) - P_{s}(Y^*) = 0.
\end{align*} This implies $lD > 0$, because $c_{i} > 0$. For any other agent $j$, recall from the proof of \Cref{t:generalmodel} that the derivative of the principal's objective in $\tau_{j}(s)$, is given by the expression \begin{align*}
        lD \productivity_{j} \centrality_{j} \cdot \frac{u_{j}'(\tau_{j}^*(s))}{C''_j(a_j^*)}\cdot  P'_{s}(Y^*) - P_{s}(Y^*).
    \end{align*} This expression was stated for any agent receiving a positive payment at some outcome, but also holds for agents receiving a zero payment at all outcomes as long as the perturbation is made at an outcome $s$ at which $P_{s}'(Y^*) > 0$.\footnote{Recall that \Cref{l:changeteamperformance} was defined for any agent receiving a positive payment at some outcome. We show that the result also holds for agents receiving zero payments at all outcomes, when the perturbation in payments is made in an outcome where $P_{s}'(Y^*) > 0$. For any agent $i$ taking action $a_{i}^* > 0$, the first-order conditions at equilibrium imply that $a_{i}^*$ solves \begin{align*}
    C_{i}'(a_{i}) = \left(\sum_{s' \in \mathcal{S}}P_{s'}'(Y)u_{i}(\tau_{i}(s'))\right) \frac{\partial Y}{\partial a_{i}}.
\end{align*} We show the equation must also hold if $a_{i}^*=0$. To see this, recall that $a_{i}^*=0$ if and only if $\tau_{i}(s)=0$ at all outcomes $s$. Consider the first order-condition when $a_i^*=0$ and $\tau_i(s)=0$ for all $s$: \begin{align*}
    C_{i}'(0) = \left(\sum_{s' \in \mathcal{S}}P_{s'}'(Y)u_{i}(0)\right) \frac{\partial Y}{\partial a_{i}}.
\end{align*} The left-hand side is zero because $C'(0)=0$. The right-hand side is zero since $\sum_{s' \in \mathcal{S}}P_{s'}'(Y)=0$. So the first-order condition holds in this case as well. It follows that the first-order condition binds when payments are perturbed for such an agent at an outcome where $P_{s}'(Y^*) > 0$.}  It is straightforward to verify that $\centrality_{j}$ is exactly the term that appears in the centrality vector $\widetilde{\bm{\centrality}}$ defined in the proof of \Cref{l:positivecentrality}.  By the Inada condition on the marginal utility function, the observation that $lD \centrality_{j} > 0$ (which holds because $\centrality_{j}>0$ as shown in \Cref{l:positivecentrality}), and the fact that $P_{s}'(Y^*)>0$ \begin{align*}
        \lim_{\tau_{j}^*(s) \to 0}lD \productivity_{j} \centrality_{j} \cdot \frac{u_{j}'(\tau_{j}^*(s))}{C''_j(a_j^*)} P'_{s}(Y^*) - P_{s}(Y^*) > 0.
    \end{align*} Thus, we cannot have $\tau_{j}^*(s)=0$ at an optimal contract.
\end{proof}

\section{Optimal equity pay}
\label{a:optimalequitypay}

The contracts in our main model are finely tailored to individual outcomes (see \Cref{c:marginalutil2}). In practice, such contracts may be difficult to implement, and firms often use simple compensation schemes. Our results can be adapted to characterize optimal contracts within a restricted class. This section provides an illustration by analyzing one widely used incentive scheme: equity pay. Note that in simple success-or-failure environments, all optimal contracts feature equity pay, but in general the optimal equity contract need not match the optimal unrestricted contract.

An equity pay contract pays each agent a fixed share $\tau_i v_s$ of the surplus $v_s$ produced by the team. For a given equity contract $\bm{\tau}$, the expected payoff to the principal is \begin{align*}
    \left(1-\sum_{i \in N}\tau_{i}\right) \sum_{s \in \mathcal{S}}v_{s}P_{s}(Y).
\end{align*} The expected payoff to agent $i$ from an equity share $\tau_{i}$ is  \begin{align*}
    \mathcal{U}_{i} = \sum_{s \in \mathcal{S}}u_{i}\left(\tau_{i}v_{s}\right)P_{s}(Y) - C_{i}(a_{i}).
\end{align*}  The result below characterizes an optimal equity contract $\bm{\tau}^*$. We maintain \Cref{as:balancederive}, which now states that there is a neighborhood of $\bm{\tau}^*$ in the space of equity contracts where $\eqlbactions(\bm{\tau})$ is continuously differentiable.

\begin{proposition}
\label{t:optimalequitypay}
    Suppose $\bm{\tau}^*$ is an optimal equity contract and $Y^*$ is the induced team performance. There exists a constant $\balanceconstant$ such that for any agent $i$ receiving a positive equity payment, we have \begin{align*} \productivity_{i}\centrality_{i} \sum_{s \in \mathcal{S}}P'_{s}(Y^*)v_{s}\cdot \frac{u_{i}'(\tau_{i}^*v_{s})}{C''_i(a_i^*)}  = \balanceconstant. \end{align*}
\end{proposition} 
The proof, which we provide below, follows a similar approach to the proof of \Cref{t:generalmodel}. It involves analyzing the effect of perturbations to equity payments on the principal's objective. Perturbations in the equity payment of an agent affect payments at all outcomes. The direct effect of increasing $\tau_{i}$ on $i$'s action is proportional to the change in marginal expected utility from payments, which is given by the expression \begin{align*}
    \productivity_{i}\sum_{s \in \mathcal{S}}P_{s}'(Y^*)v_{s}\cdot \frac{u_{i}'\left(\tau_{i}^*v_{s}\right)}{C''_i(a_i^*)}.
\end{align*} The summation captures the total direct effect of increasing an agent's equity on team performance by aggregating across outcomes, and multiplying by $\centrality_{i}$ includes indirect effects. At an optimal equity contract, the effect of perturbing equity payments on total team performance must be the same for all agents with positive equity.

In general, the balance condition in \Cref{t:optimalequitypay} characterizing optimal equity contracts does not match the condition in \Cref{s:generalintensivemargin} characterizing optimal contracts. An optimal contract fine-tunes payments at each outcome, incentivizing agents to exert optimal effort levels. Equity pay imposes a particular linear relationship between the payments for different outcomes that may be practically convenient but sacrifices some incentive power. We note the contrast with \cite{dai2022robust}, building on \cite{carroll2015robustness}, which finds that linear contracts are optimal for a principal designing team incentives that must be robust to uncertainty about the environment.

\begin{proof}[Proof of \Cref{t:optimalequitypay}]

We begin with a lemma, which adapts \Cref{l:changeteamperformance}.

\begin{lemma}
    \label{l:changeteamperformanceequity}
    Suppose $\bm{\tau}^*$ is an optimal equity contract with corresponding equilibrium actions $\eqlbactions$ and team performance $Y^*$. %
    For any agent $i$, the derivative of  team performance in $\tau_{i}$, evaluated at $\bm{\tau}^*$, is \begin{align*}
      \frac{d Y}{d \tau_{i}} = l \productivity_{i} \centrality_{i} \sum_{s \in \mathcal{S}} P'_{s}(Y^*) v_{s} \cdot \frac{u_{i}'(\tau_{i}^*v_{s})}{C''_i(a_i^*)} ,
    \end{align*}  where $l$ is independent of $i$ and $s$.
\end{lemma}

\begin{proof}
The steps taken in this proof are exactly the same as those taken in the proof of \Cref{l:changeteamperformance}. We analyze the change in team performance as the equity transfer to an agent is perturbed. Consider an equity payment scheme $\optsharesvector$ and any agent $i$. Consider marginally increasing $\tau_{i}$. The change induced by this perturbation is \begin{equation} \label{eq:changeteamperformanceequity}
    \frac{\partial Y}{\partial \tau_{i}} =\nabla Y(\eqlbactions)^{T} \cdot \frac{\partial \eqlbactions}{\partial \tau_{i}},
\end{equation} where $\eqlbactions$ is the equilibrium action profile for the contract $\sharesvector$. The substance of the proof is analyzing the second term on the right-hand side of (\ref{eq:changeteamperformanceequity}).  

As in \Cref{l:changeteamperformance}, it is without loss to analyze the change in the action of agent $i$ and actions of agents $j$ that take strictly positive actions in profile $\eqlbactions$. The analysis from here on focuses on such agents, overloading notation to represent the actions of these agents by $\eqlbactions$.

We will show that the change in equilibrium actions $\eqlbactions$ as the equity $\tau_{i}$ increases is \begin{equation}
    \label{eq:eqlbactionsderivativeequity}
    \frac{\partial \bm{a}^*}{\partial \tau_{i}} = \left[\bm{I}-\bm{H}^{-1} \bm{U}\bm{G}\right]^{-1} \bm{H}^{-1} \begin{bmatrix}
                 \bm{0}  \\
                 \frac{\partial Y}{\partial a_{i}}\sum_{s \in \mathcal{S}}P'_{s}(Y)v_{s}u_{i}'(\tau_{i}v_{s}) \\
                 \bm{0}
            \end{bmatrix} + \frac{\partial Y}{\partial \tau_{i}}\left[\bm{H}-\bm{U}\bm{G}\right]^{-1}\bm{d},
\end{equation} where $\bm{d}$ is a vector that depends on the curvature of the probability of outcome $P_{s}(\cdot)$. 

Consider the equilibrium action profile $\eqlbactions$. For an agent $j$, the first-order conditions imply $a_{j}^*$ must solve the equation \begin{equation} \label{eq:bestresponsegeneralequity}
    C_{j}'(a_{j}) = \left(\sum_{s \in \mathcal{S}}P_{s}'(Y)u_{j}(\tau_{j}v_{s})\right) \frac{\partial Y}{\partial a_{j}}.
\end{equation} 

To arrive at (\ref{eq:eqlbactionsderivativeequity}), let us implicitly differentiate (\ref{eq:bestresponsegeneralequity}) with respect to $\tau_{i}$. For all $j \neq i$, \begin{align} \label{eq:implicitgeneraljneqiequity}
    C_{j}''(a_{j}^*)\frac{\partial a^*_{j}}{\partial \tau_{i}} &= \left(\sum_{s \in \mathcal{S}}P_{s}'(Y)u_{j}\left(\tau_{j}v_{s}\right)\right) \left(\sum_{k=1}^{n}\frac{\partial^2 Y}{\partial a_{k} \partial a_{j}} \cdot \frac{\partial a_{k}^*}{\partial \tau_{i}}\right) \\ &+ \frac{\partial Y}{\partial a_{j}} \cdot \frac{\partial Y}{\partial \tau_{i}} \cdot \sum_{s \in \mathcal{S}}P_{s}''(Y)u_{j}(\tau_{j}v_{s}).
\end{align} Similarly for $j=i$,  \begin{align} \label{eq:implicitgeneraljeqiequity}
    C_{j}''(a_{j}^*)\frac{\partial a^*_{j}}{\partial \tau_{i}} &= \left(\sum_{s \in \mathcal{S}}P_{s}'(Y)u_{j}(\tau_{j}v_{s})\right) \left(\sum_{k=1}^{n}\frac{\partial^2 Y}{\partial a_{k} \partial a_{j}} \cdot \frac{\partial a_{k}^*}{\partial \tau_{i}}\right) \\ & + \frac{\partial Y}{\partial a_{j}}\sum_{s \in \mathcal{S}}P'_{s}(Y)v_{s}u_{j}'(\tau_{j}v_{s}) +  \frac{\partial Y}{\partial a_{j}} \cdot \frac{\partial Y}{\partial \tau_{i}}\sum_{s \in \mathcal{S}}P_{s}''(Y)u_{j}(\tau_{j}v_{s}).
\end{align}  We can combine (\ref{eq:implicitgeneraljneqiequity}) and (\ref{eq:implicitgeneraljeqiequity}) in vector form: \begin{align*}
    \frac{\partial \bm{a}^*}{\partial \tau_{i}} &= \left[\bm{H}-\bm{U}\bm{G}\right]^{-1} \begin{bmatrix}
                 \bm{0}  \\
                 \frac{\partial Y}{\partial a_{i}}\sum_{s \in \mathcal{S}}P'_{s}(Y)v_{s}u_{i}'(\tau_{i}v_{s}) \\
                 \bm{0}
            \end{bmatrix} + \frac{\partial Y}{\partial \tau_{i}}\left[\bm{H}-\bm{U}\bm{G}\right]^{-1}\bm{d},
\end{align*} where $\bm{d}$ is a vector with $d_{j} = \frac{\partial Y}{\partial a_{j}} \sum_{s \in \mathcal{S}}P_{s}''(Y)u_{j}(\tau_{j}v_{s})$. The expression in (\ref{eq:eqlbactionsderivativeequity}) follows.

Substituting (\ref{eq:eqlbactionsderivativeequity}) into (\ref{eq:changeteamperformanceequity}),  the change in team performance as the equity payment $\tau_{i}$ increases is\begin{align*}
    \frac{\partial Y}{\partial \tau_{i}}& = \nabla Y(\eqlbactions)^{T}  \left[\bm{I}-\bm{H}^{-1} \bm{U} \network \right]^{-1} \bm{H}^{-1}\begin{bmatrix}
                 \bm{0}  \\
                 \frac{\partial Y}{\partial a_{i}}\sum_{s \in \mathcal{S}}P'_{s}(Y)v_{s}u_{i}'(\tau_{i}v_{s}) \\
                 \bm{0}
            \end{bmatrix}  \\ &+ \frac{\partial Y}{\partial \tau_{i}}\nabla Y(\eqlbactions)^{T}\left[\bm{H}-\bm{U}\bm{G}\right]^{-1}\bm{d}.
\end{align*} Applying the definitions of $\productivity_{i}$ and $\centrality_{i}$, we obtain \begin{align*}
    \frac{\partial Y}{\partial \tau_{i}} = \productivity_{i}\centrality_{i} \sum_{s \in \mathcal{S}}P'_{s}(Y)v_{s}\cdot \frac{u_{i}'(\tau_{i}^*v_{s})}{C''_i(a_i^*)} + \frac{\partial Y}{\partial \tau_{i}}\nabla Y(\eqlbactions)^{T}\left[\bm{H}-\bm{U}\bm{G}\right]^{-1}\bm{d}.
\end{align*}
Rearranging,
$$    \frac{\partial Y}{\partial \tau_{i}} = \frac{1}{1-\nabla Y(\eqlbactions)^{T}\left[\bm{H}-\bm{U}\bm{G}\right]^{-1}\bm{d}} \cdot \productivity_{i}\centrality_{i}\sum_{s \in \mathcal{S}}P'_{s}(Y)v_{s}\cdot \frac{u_{i}'(\tau_{i}^*v_{s})}{C''_i(a_i^*)}
.$$
Setting $l=\frac{1}{1-\nabla Y(\eqlbactions)^{T}\left[\bm{H}-\bm{U}\bm{G}\right]^{-1}\bm{d}}$ and observing $l$ does not depend on $i$, we obtain the desired result.
\end{proof}

    The expected payoff for the principal under equity payment $\bm{\tau}$ and corresponding equilibrium actions $\eqlbactions$ is \begin{align*}
\left(1-\sum_{i \in N}\tau_{i}\right)\sum_{s \in \mathcal{S}}v_{s}P_{s}(Y(\eqlbactions)).
    \end{align*} Suppose $\bm{\tau}^*$ is an optimal equity contract inducing equilibrium $\eqlbactions(\bm{\tau}^*)$ with team performance $Y^*$. Consider agent $i$ such that $\tau_{i}^*>0$. 
    Then the first-order condition for $\tau^*_i$ implies that \begin{align*}
        \frac{d Y}{d \tau_{i}} \cdot \underbrace{\left(1-\sum_{i \in N}\tau_{i}^*\right)\sum_{s \in \mathcal{S}}v_{s}P'_{s}(Y^*)}_{D} = \sum_{s \in \mathcal{S}}v_{s}P_{s}(Y^*).
    \end{align*}
The left-hand side is the benefit from increasing $\tau^*_i$ while the right-hand side is the expected additional transfer required. Since each outcome occurs with non-zero probability, the summation labeled $D$ is nonzero.
    
Substituting \Cref{l:changeteamperformanceequity} in the above equation, we obtain 
\begin{align*} l \productivity_{i}\centrality_{i} \sum_{s \in \mathcal{S}}P'_{s}(Y^*)v_{s}\cdot \frac{u_{i}'(\tau_{i}^*v_{s})}{C''_i(a_i^*)} &= \frac{\sum_{s \in \mathcal{S}}v_{s}P_{s}(Y^*)}{D}, \\ 
\iff \productivity_{i}\centrality_{i} \sum_{s \in \mathcal{S}}P'_{s}(Y^*)v_{s}\cdot \frac{u_{i}'(\tau_{i}^*v_{s})}{C''_i(a_i^*)}&= \balanceconstant,
\end{align*} where $\balanceconstant = \sum_{s \in \mathcal{S}}v_{s}P_{s}(Y^*) /(lD)$. Observing that $\balanceconstant$ is independent of $i$ and the outcome $s$, the statement of the result follows.
\end{proof}
\end{document}